\tikzset{
    >=stealth',
    punkt/.style={
           rectangle,
           rounded corners,
           draw=black, very thick,
           text width=6.5em,
           minimum height=2em,
           text centered},
    pil/.style={
           ->,
           thick,
           shorten <=2pt,
        shorten >=2pt,
   }
}
\newcommand{\G}{\mathcal{G}}
\newcommand{\gt}{\mathcal{G}_t}
\newcommand{\R}{\mathbb{R}}
\newcommand{\Px}{\mathbb{P}}
\newcommand{\Exx}{\mathbb{E}}
\newcommand{\E}{\Exx}
\newcommand{\Qxx}{\mathbb{Q}}
\newcommand{\Gx}{\mathbb{G}}
\newcommand{\vv}{{\bar{{v}}}}
\def\ind{{\mathchoice{1\mskip-4mu\mathrm l}{1\mskip-4mu\mathrm l}
{1\mskip-4.5mu\mathrm l}{1\mskip-5mu\mathrm l}}}
\newcommand{\rrp}{r_r^+}
\newcommand{\rrm}{r_r^-}
\newcommand{\rcp}{r_c^+}
\newcommand{\rcm}{r_c^-}
\newcommand{\rfp}{r_f^+}
\newcommand{\rfm}{r_f^-}
\newcommand{\XVA}{\mbox{XVA}}
\newcommand{\CVA}{\mbox{CVA}}
\newcommand{\DVA}{\mbox{DVA}}
\newcommand{\hIQ}{h_I^{\Qxx}}
\newcommand{\hCQ}{h_C^{\Qxx}}
\newtheorem{theorem}{Theorem}[section]
\newtheorem{definition}[theorem]{Definition}
\newtheorem{proposition}[theorem]{Proposition}
\newtheorem{remark}[theorem]{Remark}
\newtheorem{lemma}[theorem]{Lemma}
\newtheorem{assumption}[theorem]{Assumption}
\newcommand{\psir}{\psi^r}
\title{Arbitrage-Free Pricing of XVA -- Part I:\\ Framework and Explicit Examples \footnote{The paper \cite{BCM} subsumes the present working paper as well as \cite{BicCapSturm}. } }
\author{
Maxim Bichuch \thanks{Email: mbichuch@wpi.edu, Department of Mathematical Sciences, Worcester Polytechnic Institute}
\and
Agostino Capponi \thanks{Email: ac3827@columbia.edu, Industrial Engineering and Operations Research Department, Columbia University}
\and
Stephan Sturm \thanks{Email: ssturm@wpi.edu, Department of Mathematical Sciences, Worcester Polytechnic Institute}
}
\begin{document}

\maketitle

\begin{abstract}
We develop a framework for computing the total valuation adjustment (XVA) of a European claim accounting for funding costs, counterparty credit risk, and collateralization. Based on no-arbitrage arguments, we derive
nonlinear backward stochastic differential equations (BSDEs) associated with the replicating portfolios of long and short positions in the claim. This leads to the definition of buyer{'}s and seller{'}s XVA, which in turn
identify a no-arbitrage interval. In the case that borrowing and lending rates coincide, we provide a fully explicit expression for the uniquely determined XVA, expressed as a percentage of the price of the traded claim,
and for the corresponding replication strategies. This extends the result of \cite{Piterbarg} by incorporating the effect of premature contract termination due to default risk of the trader and of his counterparty.
\end{abstract}

\vspace{5mm}

\begin{flushleft}
\textbf{Keywords:} XVA, counterparty credit risk, funding spreads, backward stochastic differential equations, arbitrage-free pricing. \\
\textbf{Mathematics Subject Classification (2010): } {91G40 , 91G20 , 60H10}\\
\textbf{JEL classification: }{G13, C32}
\end{flushleft}

\section{Introduction}
When managing a portfolio, a trader needs to raise cash in order to finance a number of operations. Those include maintaining the hedge of the position, posting collateral resources, and paying interest on collateral received. Moreover, the trader needs to account for the possibility that the position may be liquidated prematurely due to his own or counterparty{'}s default, hence entailing additional costs due to the closeout procedure. Cash resources are provided to the trader by his treasury
desk, and must be remunerated. If he is borrowing, then he will be charged an interest rate depending on current market conditions as well as on his own credit quality. Such a rate is usually higher than the lending rate at which the trader would lend excess cash proceeds from his investment strategy to the treasury. The difference between borrowing and lending rate is also referred to as \textit{funding spread}.

Even though pricing by replication can still be put to work under this rate asymmetry, the classical Black-Scholes formula no longer yields the price of the claim. In the absence
of default risk, few studies have been devoted to pricing and hedging claims in markets with differential rates. \cite{Korn95} considers option pricing in a market with a higher borrowing than lending rate, and derives an interval of acceptable prices for both the buyer and the seller. \cite{Cvi93} consider the problem of hedging contingent claims under portfolio constraints allowing for a higher borrowing than lending rate. \cite{ElKaroui} study the super-hedging price of a contingent claim under rate asymmetry via nonlinear backward stochastic differential equations (BSDEs).

The above studies do not consider the impact of counterparty credit risk on valuation and hedging of the derivative security. The new set of rules mandated by the Basel Committee (\cite{Basel3}) to govern bilateral trading in OTC markets requires to take into account default and funding costs when marking to market derivatives positions. This has originated a growing stream of literature, some of which is surveyed next. \cite{Crepeya} and \cite{Crepeyb} introduce a BSDE approach for the valuation of counterparty credit risk taking funding constraints into account. He decomposes the value of the transaction into three separate components, the collateral, the hedging assets used to hedge market risk of the portfolio as well as
counterparty credit risk, and the funding assets needed to finance the hedging strategy. An accompanying numerical study was conducted in \cite{CrepeyNum}. Along similar lines, \cite{BrigoPalCCP} first derive a risk-neutral pricing formula for a contract taking into account counterparty credit risk, funding, and collateral servicing costs, and then provide the corresponding BSDE representation. \cite{Piterbarg} derives a closed form solution for the price of a derivative contract in presence of funding costs and collateralization, but ignores the possibility of counterparty{'}s default. Moreover, he assumes that borrowing and lending rates are equal, an assumption that is later relaxed by \cite{Mercurio}. \cite{Burgard}
 and \cite{BurgardCR} generalize \cite{Piterbarg}{'}s model to include default risk of the trader and of his counterparty. They derive PDE representations for the price of the derivative via a replication approach, assuming the absence of arbitrage and sufficient smoothness of the derivative price.
\cite{br} develop a general semimartingale market framework and derive the BSDE representation of the wealth process associated with a self-financing trading strategy that replicates a default-free claim. As in \cite{Piterbarg} they do not take counterparty credit risk into account. \cite{br} consider the viewpoint of the hedger, but do not derive arbitrage bounds for unilateral prices. These are instead analyzed in \cite{NiRut}, who also examine the existence of fair bilateral prices. A good overview of the current literature is given in \cite{CrepeyBieleckiBrigo}.

In the present article we introduce a valuation framework which allows us to quantify the total valuation adjustment, abbreviated as XVA, of a European type claim. We consider an underlying portfolio consisting of a default-free stock and two risky bonds underwritten by the trader{'s} firm and his counterparty. Stock purchases and sales are financed through the security lending market. We allow for asymmetry between treasury borrowing and lending rates, repo lending and borrowing rates, as well as between interest rates paid by the collateral taker and received by the collateral provider.

We derive the nonlinear BSDEs associated with the portfolios replicating long and short positions in the traded claim, taking into account counterparty credit risk and closeout payoffs exchanged at default.  This extends, in our specific setting, the general semi-martingale framework introduced by \cite{br} by including counterparty default risk. Due to rate asymmetries, the BSDE which represents the valuation process of the portfolio replicating a long position in the claim cannot be directly obtained (via a sign change) from the one replicating a short position. More specifically, there is a no-arbitrage interval which can be defined in terms of the buyer{'}s and the seller{'}s XVA.

When borrowing and lending rates are the same, buyer{'}s and seller{'}s XVA coincide, and we can develop a fully explicit expression which precisely identifies the contributions from funding costs, credit valuation adjustment (CVA), and debit valuation adjustment (DVA). Moreover, we can express the total valuation adjustment as a percentage of the publicly available price of the claim. This gives an interpretation of the XVA in terms of the cost of a trade, and has risk management implications because it pushes banks to attempt to reduce their borrowing costs in addition to choosing trades that minimize funding costs.

We consider two different specializations of our framework. The first recovers the framework put forward by \cite{Piterbarg}. Here, we also give the exact expression for the number of shares associated with the replicating strategy of the investor. The latter is shown to consist of two components. The first is a funding adjusted delta hedging strategy, and the second is a correction based on the gap between funding and collateral rates. The second specialization is an extension of \cite{Piterbarg}{'}s model, which also incorporates the possibility of counterparty credit risk and hence includes trader and counterparty bonds into the portfolio of replicating securities. By means of cumbersome computations, we provide an explicit decomposition of XVA into three main components: funding adjusted price of the transaction exclusive of collateral and default costs, funding adjusted closeout payments, and funding costs of the collateralization procedure.

The paper is organized as follows. We develop the model in Section \ref{sec:model} and introduce replicated claim and collateral process in Section \ref{sec:claim}. We analyze arbitrage-free pricing of XVA in Section \ref{sec:BSDEform}, and show that credit valuation adjustments can be recovered as special cases of our general formula. Section \ref{sec:expexp} develops an explicit expression for the XVA under equal borrowing and lending rates. Section \ref{sec:conclusions} concludes the paper. Some proofs of technical results are delegated to the Appendix \ref{App_BSDE}.

\section{The model}\label{sec:model}

We consider a probability space $(\Omega,\G,{\Px})$ rich enough to support all subsequent constructions. Here, $\Px$ denotes the physical probability measure. Throughout the paper, we refer to ``$I$'' as the investor, trader or hedger interested in computing the total valuation adjustment, and to ``$C$'' as the counterparty to the investor in the transaction. The background or reference filtration that includes all market information except for default events and augmented by all $(\mathcal{G},\Px)$-nullsets, is denoted by $\mathbb{F} := (\mathcal{F}_t)_{t \geq 0}$. The filtration containing default event information is denoted by $\mathbb{H} := (\mathcal{H}_t)_{t \geq 0}$. Both filtrations will be specified in the sequel of the paper. We denote by ${\Gx}:=(\gt)_{t \geq 0}$ the enlarged filtration given by $\gt := \mathcal{F}_t \vee \mathcal{H}_t$, augmented by its nullsets. Note that because of the augmentation of $\mathbb{F}$  by nullsets, the filtration $(\G_t)$ satisfies the usual conditions of completeness and right continuity; see Section 2.4 of \cite{Belanger}.

We distinguish between \textit{universal} instruments, and \textit{investor specific} instruments, depending on whether their valuation is \textit{public} or \textit{private}. Private valuations are based on discount rates, which depend on investor specific characteristics, while public valuations depend on publicly available discount factors. Throughout the paper, we will use the superscript $\wedge$ when referring specifically to public valuations. Section \ref{sec:univ} introduces the universal securities. Investor specific securities are introduced in Section \ref{sec:hedgerspe}.

\subsection{Universal instruments} \label{sec:univ}
This class includes the default-free stock security on which the financial claim is written, and the security account used to support purchases or sales of the stock security. Moreover, it includes the risky bond issued by the trader as well as the one issued by his counterparty.

\paragraph{The stock security.}
We let $\mathbb{F} := (\mathcal{F}_t)_{t \geq 0}$ be the $(\mathcal{G},\Px)$-augmentation of the filtration generated by a standard Brownian motion $W^{\Px}$ under the measure $\Px$. Under the physical measure, the dynamics of the stock price is given by
\[
    dS_t = \mu S_t \,dt + \sigma S_t \,dW_t^{\Px},
\]
where $\mu$ and $\sigma$ are constants denoting, respectively, the appreciation rate and the volatility of the stock.

\paragraph{The security account.}
Borrowing and lending activities related to the stock security happen through the security lending or repo market. We do not distinguish between security lending and repo, but refer to all of them as repo transactions. We consider two types of repo transactions: security driven and cash driven, see also \cite{Adrian}. The security driven transaction is used to overcome the prohibition on ``naked'' short sales of stocks, that is the prohibition to the trader of selling a stock which he does not hold and hence cannot deliver. The repo market helps to overcome this by allowing the trader to lend cash to the participants in the repo market who would post the stock as a collateral to the trader. The trader would later return the stock collateral in exchange of a pre-specified amount, usually slightly higher than the original loan amount. Hence, effectively this collateralized loan has a rate, referred to as the repo rate. The cash lender can sell the stock on an exchange, and later, at the maturity of the repo contract, buy it back and return it to the cash borrower. We illustrate the mechanics of the security driven transaction in Figure \ref{fig:secdriven}.

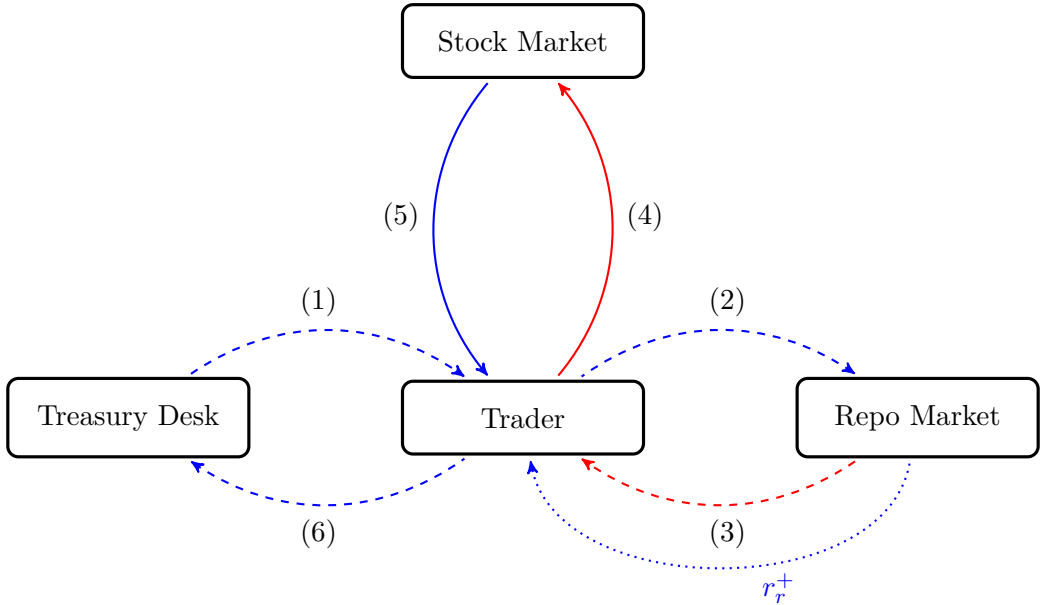
\begin{figure}[ht]
    \centering
    \begin{tikzpicture}[thick,scale=1, every node/.style={transform shape}]
        \node[punkt, inner sep=10pt] (trader) {Trader};
        \node[punkt, inner sep=10pt,  left=2cm of trader] (funder) {Treasury Desk}
            edge[pil, bend left=35, blue, dashed] (trader)
            edge[pil, <-, bend right=35, blue, dashed] (trader);
        \node[above left =1cm of trader] (one) {(1)};
        \node[below left =1cm of trader] (six) {(6)};
        \node[punkt, inner sep=10pt,  above=4cm of trader] (stock) {Stock Market}
            edge[pil, bend right=40, blue] (trader)
            edge[pil, <-, bend left=40, red] (trader);
        \node[below = 2cm of stock.west] (five) {(5)};
        \node[below =2cm of stock.east] (four) {(4)};
        \node[punkt, inner sep=10pt,  right=2cm of trader] (repo) {Repo Market}
            edge[pil, bend left=35, red, dashed] (trader)
            edge[pil, <-, bend right=35, blue, dashed] (trader)
            edge[pil, bend left=80, blue, dotted] (trader);
        \node[above right =1cm of trader] (two) {(2)};
        \node[below right=1cm of trader] (three) {(3)};
        \node[below right=2cm of trader, blue] (rrp) {$\rrp$};
    \end{tikzpicture}
     \caption{Security driven repo activity: Solid lines are purchases/sales, dashed lines borrowing/lending, dotted lines interest due; blue lines are cash, red lines are stock. The treasury desk lends money to the trader (1) who uses it to lend to the repo market (2) receiving in turn collateral (3). He sells the stock on the market to get effectively into a short position (4) earning cash from the deal (5) which he uses to repay his debt to the funding desk (6). As a cash lender, he receives interest at the rate $\rrp$ from the repo market. There are no interest payments between trader and treasury desk as the payments (1) and (6) cancel each other out.}
\label{fig:secdriven}
\end{figure}

The other type of transaction is cash driven. This is essentially the other side of the trade, and is implemented when the trader wants a long position in the stock security. In this case, he borrows cash from the repo market, uses it to purchase the stock security posted as collateral to the loan, and agrees to repurchase the collateral later at a slightly higher price. The difference between the original price of the collateral and the repurchase price defines the repo rate. As the loan is collateralized, the repo rate will be lower than the rate of an uncollateralized loan. At maturity of the repo contract, when the trader has repurchased the stock collateral from the repo market, he can sell it on the exchange. The details of the cash driven transaction are summarized in Figure \ref{fig:cashdriven}.

\begin{figure}[ht]
    \centering
    \begin{tikzpicture}[thick,scale=1, every node/.style={transform shape}]
        \node[punkt, inner sep=10pt] (trader) {Trader};
        \node[punkt, inner sep=10pt,  left=2cm of trader] (funder) {Treasury Desk}
            edge[pil, bend left=35, blue, dashed] (trader)
            edge[pil, <-, bend right=35, blue, dashed] (trader);
        \node[above left =1cm of trader] (one) {(1)};
        \node[below left =1cm of trader] (six) {(6)};
        \node[punkt, inner sep=10pt,  above=4cm of trader] (stock) {Stock Market}
            edge[pil, <-, bend right=40, blue] (trader)
            edge[pil, bend left=40, red] (trader);
        \node[below = 2cm of stock.west] (two) {(2)};
        \node[below =2cm of stock.east] (three) {(3)};
        \node[punkt, inner sep=10pt,  right=2cm of trader] (repo) {Repo Market}
            edge[pil, bend left=35, blue, dashed] (trader)
            edge[pil, <-, bend right=35, red, dashed] (trader)
            edge[pil, <-, bend left=80, blue, dotted] (trader);
        \node[above right =1cm of trader] (four) {(4)};
        \node[below right=1cm of trader] (five) {(5)};
        \node[below right=2cm of trader, blue] (rrm) {$\rrm$};
    \end{tikzpicture}
       \caption{Cash driven repo activity: Solid lines are purchases/sales, dashed lines borrowing/lending, dotted lines interest due; blue lines are cash, red lines are stock. The treasury desk lends money to the trader (1) who uses it to purchase stock (2) from the stock market (3). He uses the stock as collateral (4) to borrow money from the repo market (5) and uses it to repay his debt to the funding desk (6). The trader has thus to pay interest at the rate $\rrm$ to the repo market. There are no interest payments between trader and treasury desk as the payments (1) and (6) cancel each other out.}
    \label{fig:cashdriven}
\end{figure}
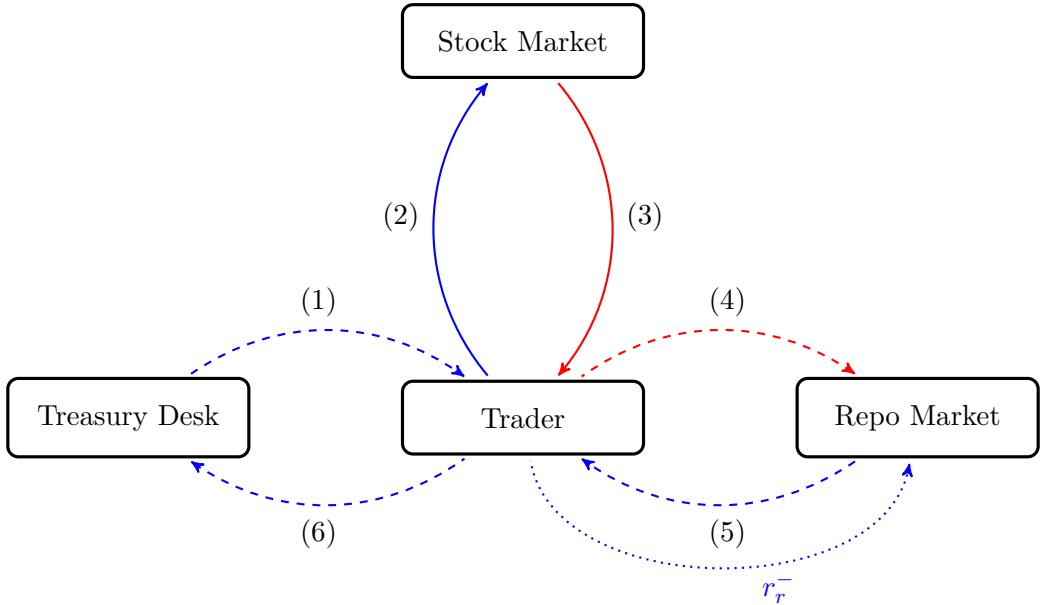

We use $r_r^{+}$ to denote the rate charged by the hedger when he lends money to the repo market and implements his short-selling position. We use $\rrm$ to denote the rate that he is charged when he borrows money from the repo market and implements a long position. We denote by $B^{\rrp}$ and $B^{\rrm}$ the repo accounts whose drifts are given, respectively, by $\rrp$ and $\rrm$. Their dynamics are given by
  \[
    dB_t^{r_r^{\pm}} = r_r^{\pm} B_t^{r_r^{\pm}} dt.
  \]
For future purposes, define
  \begin{equation}\label{eq:Brt}
    B_t^{r_r} := B_t^{r_r}\bigl(\psir \bigr) = e^{\int_0^t r_r (\psir_s) ds},
  \end{equation}
where
  \begin{equation}\label{eq:rr}
    r_r(x) = \rrm \ind_{\{x<0\}}+ \rrp \ind_{\{x>0\}}.
  \end{equation}
Here, $\psir_t$ denotes the number of shares of the repo account held at time $t$. Equations \eqref{eq:Brt}-\eqref{eq:rr} indicate that the trader earns the rate $\rrp$ when lending $\psir>0$ shares of the repo account to implement the short-selling of $-\vartheta$ shares of the stock security, i.e., $\vartheta < 0$. Similarly, he has to pay interest rate $\rrm$ on the $-\psir$ ($\psir < 0$) shares of the repo account that he has borrowed by posting $\vartheta>0$ shares of the stock security as collateral. Because borrowing and lending transactions are fully collateralized, it always holds that
  \begin{equation}\label{eq:selff}
    \psir_t B_t^{r_r} = - \xi_t S_t.
  \end{equation}

\paragraph{The risky bond securities.}
Let $\tau_i$, $i \in \{I, C\}$, be the default times of trader and counterparty. These default times are assumed to be independent exponentially distributed random variables with constant intensities $h_i^{\Px}$, $i \in \{I,C\}$. We use $H_i(t)= \ind_{\{\tau_i\leq t\}}$, $t\geq0$, to denote the default indicator process of $i$. The default event filtration is given by $\mathcal{H}_t = \sigma(H^I_u, H^C_u \; ; u \leq t)$. Such a default model is a special case of the bivariate Cox process framework, for which the $(H)$-hypothesis (see \cite{Elliott}) is well known to hold. In particular, this implies that the $\mathbb{F}$-Brownian motion $W^{\Px}$ is also a $\mathbb{G}$-Brownian motion.

We introduce two risky bond securities underwritten by the trader $I$ and by his counterparty $C$, and maturing at the same time $T$. We denote their price processes by $P^I$ and $P^C$, respectively. For $0 \leq t \leq T$, $i \in \{I, C\}$, the dynamics of their price processes are given by
  \begin{equation}\label{eq:priceproc}
    dP^i_t = (r^i+h_i^{\Px}) P^i_t \, dt - P^i_{t-} \,dH_t^i, \qquad P^i_0 = e^{-(r^i+h_i^{\Px}) T},
  \end{equation}
with return rates $r^i+h_i^{\Px}$, $i \in \{I,C\}$. We do not allow bonds to be traded in the repo market. Our assumption is driven by the consideration that the repurchase agreement market for risky bonds is often illiquid. We also refer to the introductory discussion in \cite{Brennan} stating that even if a bond can be shorted on the repo market, the tenor of the agreement is usually very short.

Throughout the paper, we use $\tau := \tau_I \wedge \tau_C \wedge T$ to denote the earliest of the transaction maturity $T$, trader and counterparty default time.

\subsection{Hedger specific instruments} \label{sec:hedgerspe}

This class includes the funding account and the collateral account of the hedger.

\paragraph{Funding account.}
We assume that the trader lends and borrows moneys from his treasury at possibly different rates. Denote by $\rfp$ the rate at which the hedger lends to the treasury, and by $\rfm$ the rate at which he borrows from it. We denote by $B^{r_f^{\pm}}$ the cash accounts corresponding to these funding rates, whose dynamics are given by
  \[
    dB_t^{r_f^{\pm}} = r_f^{\pm} B_t^{r_f^{\pm}} dt.
  \]
Let $\xi^f_t$ the number of shares of the funding account at time $t$. Define
  \begin{equation}\label{eq:Brf}
    B_t^{r_f}  := B_t^{r_f}\bigl(\xi^f) = e^{\int_0^t r_f(\xi^f_s) ds},
  \end{equation}
where
  \begin{equation}\label{eq:rrf}
    r_f  := r_f(y)= \rfm \ind_{\{{y < 0}\}}+\rfp \ind_{\{{y > 0}\}}.
  \end{equation}
Equations~\eqref{eq:Brf}~-~\eqref{eq:rrf} indicate that if the hedger{'}s position at time $t$, $\xi^f_t$, is negative, then he needs to finance his position. He will do so by borrowing from the treasury at the rate $\rfm$. Similarly, if the hedger{'}s position is positive, he will lend the cash amount to the treasury at the rate $\rfp$.

\paragraph{Collateral process and collateral account.}
The role of the collateral is to mitigate counterparty exposure of the two parties, i.e the potential loss on the transacted claim incurred by one party if the other defaults. The collateral process $C:={(C_t; \; t\geq 0)}$ is an $\mathbb{F}$ adapted process. We use the following sign conventions. If $C_t > 0$, the hedger is said to be the \textit{collateral provider}. In this case the counterparty measures a positive exposure to the hedger, hence asking him to post collateral so as to absorb potential losses arising if the hedger defaults. Vice versa, if $C_t < 0$, the hedger is said to be the \textit{collateral taker}, i.e., he measures a positive exposure to the counterparty and hence asks her to post collateral.

Collateral is posted and received in the form of cash in line with data reported by \cite{ISDA14}, according to which
cash collateral is the most popular form of collateral.\footnote{According to \cite{ISDA14} (see Table 3 therein), cash represents slightly more than $78\%$ of the total collateral delivered and these figures are broadly consistent across years. Government securities instead only constitute $18\%$ of total collateral delivered and other forms of collateral consisting of riskier assets, such as municipal bonds, corporate bonds, equity or commodities only represent a fraction slightly higher than $3\%$.}

We denote by $\rcp$ the rate on the collateral amount received by the hedger if he has posted the collateral, i.e., if he is the collateral provider, while $\rcm$ is the rate paid by the hedger if he has received the collateral, i.e., if he is the collateral taker. The rates $r_c^\pm$ typically correspond to Fed Funds or EONIA rates, i.e., to the contractual rates earned by cash collateral in the US and EURO markets, respectively. We denote by $B^{r_c^{\pm}}$ the cash accounts corresponding to these collateral rates, whose dynamics are given by
  \[
    dB_t^{r_c^{\pm}} = r_c^{\pm} B_t^{r_c^{\pm}} dt.
  \]
Moreover, let us define
  \[
    B_t^{r_c} := B_t^{r_c}(C) = e^{\int_0^t r_c(C_s) ds},
  \]
where
  \[
    r_c(x) = \rcp \ind_{\{x>0\}} + \rcm \ind_{\{x<0\}}.
  \]

Let $\psi_t^c$ be the number of shares of the collateral account $B^{r_c}_t$ held by the trader at time $t$. Then it must hold that
  \begin{equation}\label{eq:collrel}
    \psi_t^{c}B_t^{r_c}  = - C_t.
  \end{equation}
The latter relation means that if the trader is the collateral taker at $t$, i.e., $C_t < 0$, then he has purchased shares of the collateral account i.e., $\psi_t^{c} > 0$. Vice versa, if the trader is the collateral provider at time $t$, i.e., $C_t > 0$, then he has sold
shares of the collateral account to her counterparty.

Before proceeding further, we visualize in Figure \ref{fig:transflow} the mechanics governing the entire flow of transactions taking place.

\begin{figure}[ht]
    \centering
    \begin{tikzpicture}[thick,scale=0.9, every node/.style={transform shape}]
        \node[punkt, inner sep=10pt] (trader) {Trader};
        \node[punkt, inner sep=10pt,  left=2cm of trader] (funder) {Treasury Desk}
            edge[pil, bend left=35, blue, dotted] (trader)
            edge[pil, <-, bend right=35, blue, dotted] (trader)
            edge[pil, <-, bend left=10, blue, dashed] (trader)
            edge[pil, bend right=10, blue, dashed] (trader);
        \node[above left =1cm of trader] (rfp) {$\rfp$};
        \node[below left =1cm of trader] (rfm) {$\rfm$};
        \node[left =0.5cm of trader] (fundingcash) {Cash};
        \node[punkt, inner sep=10pt,  above=4cm of trader] (stock) {Stock \& Repo Market}
            edge[pil, <-, bend right=40, blue, dotted] (trader)
            edge[pil, bend left=40, blue, dotted] (trader)
            edge[pil, <-, bend left=20, red, solid] (trader)
            edge[pil, bend right=20, red, solid] (trader);
        \node[above =2cm of trader] (stocklending) {Stock};
        \node[below = 2cm of stock.west] (rrm) {$\rrm$};
        \node[below =2cm of stock.east] (rrp) {$\rrp$};
        \node[punkt, inner sep=10pt,  right=3.2cm of trader] (bond) {Bond Market}
            edge[pil, bend left=30, black, solid] (trader)
            edge[pil, <-, bend right=30, black, solid] (trader);
        \node[right =0.3cm of trader] (fundingcash) {Bonds $P^I$, $P^C$};
        \node[punkt, inner sep=10pt,  below=4cm of trader] (counter) {Counterparty}
            edge[pil, <-, bend right=50, blue, dotted] (trader)
            edge[pil, bend left=50, blue, dotted] (trader)
            edge[pil, <-, bend left=35, blue, dashed] (trader)
            edge[pil, bend right=35, blue, dashed] (trader);
        \node[below =1.5cm of trader] (collateralization) {Collateral};
        \node[below = 3.5cm of trader.west] (rcp) {$\rcp$};
        \node[below =3.5cm of trader.east] (rcm) {$\rcm$};
    \end{tikzpicture}
    \caption{Trading: Solid lines are purchases/sales, dashed lines borrowing/lending, dotted lines interest due; blue lines are cash, red lines are stock purchases for cash and black lines are bond purchases for cash.}
    \label{fig:transflow}
\end{figure}
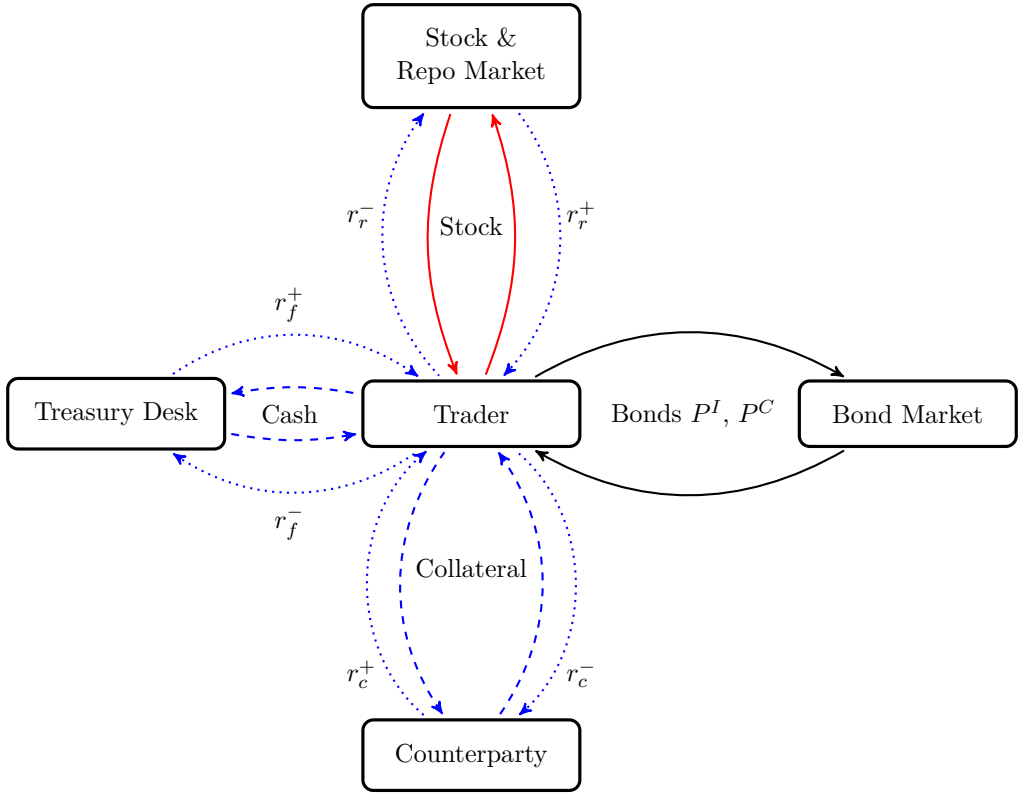

\section{Replicated claim, close-out value and wealth process} \label{sec:claim}
 We take the viewpoint of a trader who wants to replicate a European type claim on the stock security. Such a claim is purchased or sold by the trader from/to his counterparty over-the-counter and hence subject to counterparty credit risk. The closeout value of the claim is decided by a valuation agent who might either be one of the parties or a third party, in accordance with market practices as reviewed by the International Swaps and Derivatives Association {(ISDA)}. {The valuation agent determines the closeout value} of the transaction by calculating the Black Scholes price of the derivative using the discount rate $r_D$. Such a (publicly available) discount rate enables the hedger to introduce a valuation measure $\Qxx$ defined by the property that all securities have instantaneous growth rate $r_D$ under this measure. The rest of the section is organized as follows. We give the details of the valuation measure in Section \ref{sec:valuation}, give the price process of the claim to be replicated and the collateral process in Section \ref{sec:repclaim}, and define the closeout procedure in Section \ref{sec:closeout}. We define the class of admissible strategies in Section \ref{sec:repl}.

\subsection{The valuation measure} \label{sec:valuation}

We first introduce the default intensity model. Under the physical measure $\Px$, default times of trader and counterparty are assumed to be independent exponentially distributed random variables with constant intensities $h_i^{\Px}$, $i \in \{I,C\}$. It then holds that for each $i \in \{I, C\}$
  \[
    \varpi_t^{i,\Px} := H_t^i - \int_0^t\bigl(1-H_u^i\bigr)h_i^{\Px} \, du
  \]
is a $(\mathbb{G},\Px)$-martingale. The valuation measure $\Qxx$ chosen by the third party is equivalent to $\Px$ and is given by the Radon-Nikod\'{y}m density
  \begin{equation}\label{eq:q-girsanov}
    \frac{d\Qxx}{d\Px} \bigg|_{\mathcal{G}_{\tau}} = e^{\frac{r_D-\mu}{\sigma}W_{\tau}^{\Px} - \frac{(r_D-\mu)^2}{2\sigma^2}\tau} \Bigl(1+\frac{r^I - r_D}{h_I^{\Px}}\Bigr)^{H^I_\tau} e^{(r_D-r^I)\tau}\Bigl(1+\frac{r^C - r_D}{h_C^{\Px}}\Bigr)^{H^C_\tau} e^{(r_D-r^C)\tau},
  \end{equation}
where $r_D$ is the discount rate that the third party uses for valuation. We also recall that $r^I$ and $r^C$ denote the rate of returns of the bonds underwritten by the trader and counterparty respectively. Under $\Qxx$, the dynamics of the risky assets are given by
  \begin{align*}
    dS_t &= r_D S_t \,dt + \sigma S_t \,dW_t^{\Qxx}, \\
    dP_t^I & = r_D P_t^I \, dt - P_{t-}^I d\varpi_t^{I,\Qxx},\\
    dP_t^C & = r_D P_t^C \, dt - P_{t-}^C d\varpi_t^{C,\Qxx}
  \end{align*}
where $W^{\Qxx}:=(W^{\Qxx}_t; \; 0 \leq t \leq \tau)$ is a ${\Qxx}$-Brownian motion, while $\varpi^{I,\Qxx} := (\varpi_t^{I,\Qxx}; \; 0 \leq t \leq \tau)$ and $\varpi^{C,\Qxx}:=(\varpi_t^{C,\Qxx}; \;0 \leq t \leq \tau)$ are $\Qxx$-martingales. The above dynamics of $P_t^I$ and $P_t^C$ under the valuation measure $\Qxx$ can be deduced from their respective price processes given in \eqref{eq:priceproc} via a straightforward application of It\^{o}{'}s formula.

By application of Girsanov{'}s theorem, we have the following relations: $W^{\Qxx}_t = W^{\Px}_t + \frac{\mu-r_D}{\sigma} t$,
$\varpi_t^{i,\Qxx} = \varpi_t^{i,\Px} + \int_0^t \bigl(1-H_u^i\bigr) (h_i^{\Px} - h_i^{\Qxx}) du$. Here, for $i \in \{I,C\}$,
$h_i^{\Qxx} = r^i-r_D + h_i^{\Px}$ is the default intensity of the name $i$ under the valuation measure, which is assumed to be positive.

\subsection{Replicated claim and collateral specification} \label{sec:repclaim}
The price process of the claim ${\vartheta}$ to be replicated is, according to the third party{'}s valuation, given by
  \[
    \hat{V}(t,S_t) = e^{-r_D(T-t)} \mathbb{E}^{\Qxx}\bigl[ \Phi(S_T) \, \bigr\vert \, \mathcal{F}_t \bigr],
  \]
where $\Phi:\mathds{R}_{>0} \rightarrow \mathds{R}$ is a real valued function representing the terminal payoff of the claim. We will require that $\Phi$ is piecewise continuously differentiable and of at most polynomial growth. Additionally, the hedger has to post collateral for the claim. As opposed to the collateral used in the repo agreement, which is always the stock, the collateral mitigating counterparty credit risk of the claim is always cash. The collateral is chosen to be a fraction of the current exposure process of one party to the other. In case when the hedger sells a European call or put option on the security to his counterparty (he would then need to replicate the payoff $\Phi(S_T)$ which he needs to give to the counterparty at $T$), then the counterparty always measures a positive exposure to the hedger, while the hedger has zero exposure to the counterparty. As a result, the trader will always be the collateral provider, while the counterparty the collateral taker. By a symmetric reasoning, if the hedger buys a European call or put option from his counterparty (he would then replicate the payoff $-\Phi(S_T)$ received at maturity from his counterparty), then he will always be the collateral taker. On the event that neither the trader nor the counterparty have defaulted by time $t$, the collateral process is defined by
  \begin{equation}\label{eq:rulecoll}
    C_t = \alpha \hat{V}(t,S_t),
  \end{equation}
where $0 \leq \alpha \leq 1$ is the collateralization level. The case when $\alpha = 0$ corresponds to zero collateralization, $\alpha = 1$ gives full collateralization. Collateralization levels are industry specific and are reported on a quarterly basis by ISDA, see for instance \cite{ISDA11} , Table 3.3. therein.\footnote{The average collateralization level in 2010 across all OTC derivatives was $73.1\%$. Positions with banks and broker dealers are the most highly collateralized among the different counterparty types with levels around $88.6\%$. Exposures to non-financial corporations and sovereign governments and supra-national institutions tend to have the lowest collateralization levels, amounting to $13.9\%$.}

\subsection{Close-out value of transaction}\label{sec:closeout}

The ISDA market review of OTC derivative collateralization practices, see \cite{ISDA2010}, section 2.1.5, states that the surviving party should evaluate the transactions just terminated due to the default event, and claim for a reimbursement only after mitigating losses with the available collateral. In our study, we follow the risk-free closeout convention meaning that the trader liquidates his position at the counterparty default time at the market value. Next, we describe how we model it in our framework. Denote by $\theta_{\tau}$ the price of the hedging portfolio at $\tau$, where we recall that $\tau$ has been defined in Section \ref{sec:model}. The value of the portfolio depends on the value of the collateral account and on the residual value of the claim being traded at default. This term originates the well known credit and debit valuation adjustments terms. For a real number $x$, we use the notations $x^+ := \max(x,0)$, and $x^- := \max(0,-x)$. Then, we have
the following expression
  \begin{align}\label{eq:theta}
    \nonumber \theta_{\tau}(\hat{V}) & \phantom{:}=  \theta_{\tau}(C,\hat{V}) \\
    \nonumber & := \hat{V}(\tau,S_{\tau}) + \ind_{\{\tau_C<\tau_I \}} L_C \bigl(\hat{V}(\tau,S_{\tau})-C_{\tau-}\bigr)^- - \ind_{\{\tau_I<\tau_C \}} L_I \bigl(\hat{V}(\tau,S_{\tau}) - C_{\tau-}\bigr)^+\\
    \nonumber & \phantom{:}=  \min\Bigl((1-L_I)\bigl(\hat{V}(\tau,S_{\tau}) - C_{\tau-} \bigr) + C_{\tau-} \, , \, \hat{V}(\tau,S_{\tau})\Bigr) \ind_{\{\tau_I<\tau_C \}}\\
     &   \phantom{:=} + \max\Bigl((1-L_C)\bigl(\hat{V}(\tau,S_{\tau}) - C_{\tau-} \bigr) + C_{\tau-}\, , \, \hat{V}(\tau,S_{\tau})\Bigr) \ind_{\{\tau_C<\tau_I\}}\\
     \nonumber & \phantom{:}= \bigl(1-(1- \alpha) L_I\bigr)\hat{V}(\tau,S_{\tau})\ind_{\{\tau_I<\tau_C; \hat{V}(\tau_I, S_{\tau_I}) \geq 0\}} + \hat{V}(\tau,S_{\tau}) \ind_{\{\tau_I<\tau_C; \hat{V}(\tau_I, S_{\tau_I}) < 0\}}\\
    \nonumber & \phantom{:=} + \bigl(1-(1- \alpha) L_C\bigr)\hat{V}(\tau,S_{\tau})\ind_{\{\tau_C<\tau_I; \hat{V}(\tau_C, S_{\tau_C}) < 0\}\}} + \hat{V}(\tau,S_{\tau}) \ind_{\{\tau_C<\tau_I; \hat{V}(\tau_C, S_{\tau_C}) \geq 0\}}
  \end{align}
where $\ind_{\{\tau_C<\tau_I \}} L_C \bigl(\hat{V}(\tau,S_{\tau}) - C_{\tau-}\bigr)^-$ originates the residual $\CVA$ type term after collateral mitigation, while $\ind_{\{\tau_I<\tau_C\}}  L_I \bigl(\hat{V}(\tau,S_{\tau}) - C_{\tau-}\bigr)^+$ originates the $\DVA$ type term. Here $0 \leq L_I \leq 1$ and $0 \leq L_C \leq 1$ are the loss rates against the trader and counterparty claims, respectively.

\begin{remark}
We present two specific cases to better explain the mechanics of the close-out. Recall that we have assumed that the collateralization level $0 \leq \alpha \leq 1$.
  \begin{itemize}
    \item The trader sold a call option to the counterparty (hence $\hat{V}(t,S_{t})>0$ for all $t$). This means that in each time $t$ the trader is the collateral provider, $C_t = \alpha \hat{V}(t,S_{t}) > 0$, given that the counterparty always measures a positive exposure to the trader.
      \begin{itemize}
	\item The counterparty defaults first and before the maturity of the claim. Then the trader will net the amount $\hat{V}(\tau,S_{\tau})$ owed to the counterparty with his collateral provided to the counterparty, and only return to her the residual amount $\hat{V}(\tau,S_{\tau}) - C_{\tau-}$. As a result, we expect his trading strategy (exclusive of collateral) to replicate this amount. We next verify that this is the case. Using the above expression of closeout wealth (which is inclusive of collateral) we obtain ${\theta_{\tau}(\hat{V})} = \hat{V}(\tau,S_{\tau})$. Hence, his closeout amount exclusive of collateral is $\hat{V}(\tau,S_{\tau}) - C_{\tau-}$ coinciding with the amount which he must give to the counterparty.
	\item The trader defaults first and before the maturity of the claim. Then the counterparty will only get a recovery fraction $(1-L_I) \bigl(\hat{V}(\tau,S_{\tau})-C_{\tau-}\bigr)$ from the trader. As a result, we expect the strategy of the trader (exclusive of collateral) to replicate this amount. Next, we verify that this is the case. Using the above expression of closeout wealth (inclusive of collateral) we obtain ${\theta_{\tau}(\hat{V})} = \hat{V}(\tau,S_{\tau})-L_I  (\hat{V}(\tau,S_{\tau}) - C_{\tau-})$. Hence, his closeout amount exclusive of collateral is $(1-L_I)\bigl(\hat{V}(\tau,S_{\tau}) - C_{\tau-}\bigr)$ coinciding with the amount which must be returned to the counterparty.
      \end{itemize}
    \item The trader purchased a call option from the counterparty (hence $\hat{V}(t,S_{t}) <0$ for all $t$). This means that in each time $t$ the trader is the collateral taker, $C_t = \alpha \hat{V}(t,S_{t}) < 0$, given that the trader always measures a positive exposure to the counterparty.
      \begin{itemize}
	\item The trader defaults first and before the maturity of the claim. Then the counterparty will net the amount $-\hat{V}(\tau,S_{\tau})$ which she should return to the trader with the collateral held by the trader, and only return to him the residual amount $-(\hat{V}(\tau,S_{\tau}) - C_{\tau-})$. This means that the wealth process of the trader exclusive of collateral at $\tau_I$ should be equal to $\hat{V}(\tau,S_{\tau}) - C_{\tau-})$. We next verify that this is the case. Using the above expression of closeout wealth (which is inclusive of collateral) we obtain ${\theta_{\tau}(\hat{V})} = \hat{V}(\tau,S_{\tau})$. Hence, the closeout amount exclusive of collateral is $\hat{V}(\tau,S_{\tau}) - C_{\tau-}$ coinciding with the above wealth process.
	\item The counterparty defaults first and before the maturity of the claim. Then the trader will only get a recovery fraction $-(1-L_C) \bigl(\hat{V}(\tau,S_{\tau})-C_{\tau-}\bigr)$ from the counterparty.
        As a result, we expect the wealth process of the trader (exclusive of collateral) at $\tau_I$ to be equal to $(1-L_C) \bigl(\hat{V}(\tau,S_{\tau}) - C_{\tau-} \bigr)$. Next, we verify that this is the case. Using the above expression of closeout wealth (inclusive of collateral) we obtain ${\theta_{\tau}(\hat{V})} = \hat{V}(\tau,S_{\tau})-L_C ((\hat{V}(\tau,S_{\tau}) - C_{\tau-}))$. Hence, the closeout payout exclusive of collateral is $(1-L_C)\bigl(\hat{V}(\tau,S_{\tau}) - C_{\tau-}\bigr)$ coinciding with the above mentioned wealth process.
      \end{itemize}
  \end{itemize}
\end{remark}

\subsection{The wealth process} \label{sec:repl}

We allow for collateral to be fully rehypothecated. This means that the collateral taker is granted an unrestricted right to use the collateral amount, i.e. he can use it to purchase investment securities. This is in agreement with most ISDA annexes, including the New York Annex, English Annex, and Japanese Annex. We notice that for the case of cash collateral, the percentage of re-hypotheticated collateral amounts to about $90\%$ (see Table 8 in \cite{ISDA14}) hence largely supporting our assumption of full collateral re-hypothecation. As in \cite{br}, the collateral received can be seen as an ordinary component of a hedger{'}s trading strategy, although this applies only prior to the counterparty{'}s default. We denote by $V_t({\bm\varphi})$ the \textit{legal} wealth process of the hedger, and use $V_t^C({\bm\varphi})$ to denote the \textit{actual} wealth process. In other words, we have $V_t^C({\bm\varphi}) = V_t({\bm\varphi}) - C_t$, where we recall that $C_t < 0$ indicates that the hedger is the collateral taker. This reflects the fact that the collateral remains legally in the hands of the provider whereas it can actually be used by the taker (via rehypothecation) and invested in the risky assets.

Let ${\bm\varphi} := \bigl(\xi_t,\xi_t^f,\xi_t^I, \xi_t^C \; t \geq 0\bigr)$. Here, we recall that $\xi$ denotes the number of shares of the security, and $\xi^{f}$ the number of shares in the funding account. Moreover, $\xi^I$ and $\xi^C$ are the number of shares of trader and counterparty risky bonds, respectively. Recalling Eq.~\eqref{eq:collrel}, and expressing all positions in terms of number of shares multiplied by the price of the corresponding security, the wealth process $V({\bm \varphi})$ is given by the following expression
  \begin{equation}\label{eq:wealth}
    V_t({\bm\varphi}) := \xi_t S_t + \xi_t^I P_t^I + \xi_t^C P_t^C + \xi_t^f B_t^{r_f} + \psi_t B_t^{r_r} - \psi_t^{c} B_t^{r_c},
  \end{equation}
where we notice that the number of shares $\psi$ of the repo account and the number of shares $\psi^{c}$ held in the collateral account are uniquely determined by equations~\eqref{eq:selff} and~\eqref{eq:collrel}, respectively.

\begin{definition}
A collateralized trading strategy ${\bm\varphi}$ is \textit{self-financing} if, for $t \in [0,T]$, it holds that
  \[
    V_t({\bm\varphi}) := V_0({\bm\varphi}) + \int_0^t \xi_u \, dS_u + \int_0^t \xi_u^I \, dP_u^I + \int_0^t \xi_u^C \, dP_u^C + \int_0^t \xi_u^f \, dB_u^{r_f} +  \int_0^t \psi_u \, dB_u^{r_r} -\int_0^t \psi_u^{c} \, dB_u^{r_c},
  \]
where $V_0({\bm\varphi})$ is an arbitrary real number.
\end{definition}

Moreover, we define the class of admissible strategies as follows:

\begin{definition}\label{def:control-U}
The admissible control set is a class of $\mathbb{F}$-predictable locally bounded trading strategies, such that the portfolio process is bounded from below, see also \cite{Delbaen}.
\end{definition}

\section{Arbitrage-free pricing and XVA} \label{sec:BSDEform}

The goal of this section is to find prices for the derivative security with payoff $\Phi(S_T)$ that are free from arbitrage in a certain sense. Before discussing arbitrage-free prices, we have to make sure that the underlying market does not admit arbitrage from the hedger's perspective (as discussed in \cite[Section 3]{br}). In the underlying market, the trader is only allowed to borrow/lend stock, buy/sell risky bonds and borrow/lend from the funding desk. In particular, neither the derivative security, nor the collateral process is involved.

\begin{definition}
 We say that the market $(S_t,P_t^I,P_t^C)$ admits a \textit{hedger's arbitrage} if we can find controls $(\xi^f,\xi,\xi^I,\xi^C)$ such that for some initial capital $x \geq 0$, denoting the wealth process with this initial capital as $V_t(x)$, $t \geq 0$, we have that $\Px \bigl[V_\tau(x) \geq e^{\rfp \tau}V_0(x) \bigr] =1$ and $\Px\bigl[V_\tau(x) > e^{\rfp \tau}V_0(x)\bigr] >0$. If the market does not admit hedger's arbitrage for all $x \geq 0$, we say that the market is arbitrage free from the hedger's perspective.
\end{definition}

In the sequel, we make the following standing assumption:

\begin{assumption}\label{ass:necessary}
The following relations hold between the different rates: $r_r^{+} \le r_f^{-}$, $r_f^+ \leq r_f^-$, $r_f^+ \vee r_D < r^I + h_I^\Px$ and $r_f^+ \vee r_D < r^C + h_C^\Px$.
\end{assumption}

\begin{remark}\label{rem:measure}
The above assumption is necessary to preclude arbitrage. The condition $r_D < \left(r^I + h_I^\Px\right) \wedge \left(r^C + h_C^\Px\right)$ is needed for the existence of the valuation measure as discussed at the end of section \ref{sec:valuation} ($h_i^{\Qxx} = r^i + h_i^\Px-r_D$, $i=I,C$, and risk-neutral default intensities must be positive). If, by contradiction, $r_r^{+} > r_f^{-}$, the trader can borrow cash from the funding desk at the rate $r_f^{-}$ and lend it to the repo market at the rate $r_r^{+}$, while holding the stock as a collateral. This results in a sure win for the trader. Similarly, if the trader could fund his strategy from the treasury at a rate $r_f^- < r_f^+$, it would clearly result in an arbitrage. The condition $r_f^+ < r^I + h_I^\Px$ (and mutatis mutandis $r_f^+ < r^C + h_C^\Px$) has a more practical interpretation: as
  \[
    dP^I_t = r^I  P^I_t \, dt - P^I_{t-} \, d\varpi_t^{I,\Px}  = (r^I + h_I^{\Px})  P^I_t \, dt - P^I_{t-} \, dH_t^{I,\Px},
  \]
it precludes the arbitrage opportunity of short selling the bond underwritten by the trader{'s} firm with value $P_t^I$ and investing the proceeds in the funding account.

\end{remark}

We next provide a sufficient condition guaranteeing that the underlying market is free of arbitrage.

\begin{proposition}\label{thm:arb-market}
Assume, in addition to Assumption \ref{ass:necessary}, that $r_r^+ \leq r_f^+ \leq r_r^-$. Then the model does not admit arbitrage opportunities for the hedger for any $x \geq 0$.
\end{proposition}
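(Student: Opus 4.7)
The plan is to build an equivalent measure $\tilde{\Q}$ under which the $r_f^+$-discounted wealth process $\tilde V_t := V_t e^{-r_f^+ t}$ of any admissible self-financing strategy in the underlying market is a supermartingale. With this in hand, the hedger's arbitrage condition $V_\tau(x) \geq e^{r_f^+ \tau} x$ combined with the supermartingale inequality $\E^{\tilde{\Q}}[\tilde V_\tau] \leq x$ and the equivalence $\tilde{\Q} \sim \Px$ forces $V_\tau(x) = e^{r_f^+ \tau} x$ almost surely under $\Px$, ruling out the strict-inequality event and thus the existence of a hedger's arbitrage.

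To construct $\tilde{\Q}$, I would mimic the Girsanov recipe \eqref{eq:q-girsanov} but with the discount rate $r_D$ replaced by $r_f^+$; the resulting default intensities under $\tilde{\Q}$ are $\tilde h_i = r^i + h_i^\Px - r_f^+$, which are strictly positive precisely because of the standing Assumption \ref{ass:necessary}. Under $\tilde{\Q}$ all three risky assets grow at instantaneous rate $r_f^+$, so the differentials $dS_t - r_f^+ S_t\,dt$, $dP_t^I - r_f^+ P_t^I\,dt$, and $dP_t^C - r_f^+ P_t^C\,dt$ are local $\tilde{\Q}$-martingale increments.

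Using the self-financing dynamics of $V$, the repo constraint $\psi_t B_t^{r_r} = -\xi_t S_t$ in \eqref{eq:selff}, and the identity $V_t = \xi_t^I P_t^I + \xi_t^C P_t^C + \xi_t^f B_t^{r_f}$ that follows (no collateral or claim in the underlying market), a direct rearrangement would yield
\[
dV_t - r_f^+ V_t\,dt = \bigl[\text{local } \tilde{\Q}\text{-martingale increment}\bigr] + \tilde A_t\,dt,
\]
with
\[
\tilde A_t = \bigl(r_f(\xi_t^f) - r_f^+\bigr)\xi_t^f B_t^{r_f} + \bigl(r_f^+ - r_r(\psi_t)\bigr)\xi_t S_t.
\]
The first summand is nonpositive because $r_f(\xi_t^f) = r_f^+$ whenever $\xi_t^f > 0$, while for $\xi_t^f < 0$ the factor $r_f(\xi_t^f) - r_f^+ = r_f^- - r_f^+ \geq 0$ multiplies a negative quantity. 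The second summand is nonpositive because $\mathrm{sign}(\psi_t) = -\mathrm{sign}(\xi_t)$ forces $r_r(\psi_t) = r_r^-$ when $\xi_t > 0$ and $r_r(\psi_t) = r_r^+$ when $\xi_t < 0$; the added hypothesis $r_r^+ \leq r_f^+ \leq r_r^-$ then guarantees $(r_f^+ - r_r(\psi_t))\xi_t S_t \leq 0$ in either case.

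The final step is to observe that admissibility keeps $V$ (and hence $\tilde V$) bounded from below, so the local $\tilde{\Q}$-supermartingale $\tilde V$ is a true supermartingale by Fatou's lemma, completing the argument sketched in the first paragraph. I expect the main obstacle to be the careful bookkeeping of signs in $\tilde A_t$ under the self-financing constraint linking $\psi$ and $\xi$; the rest is a standard numeraire-based absence-of-arbitrage argument, and Assumption \ref{ass:necessary} enters exactly to ensure both the existence of $\tilde{\Q}$ and (together with the new hypothesis) the nonpositivity of the drift.
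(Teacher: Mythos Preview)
Your proposal is correct and follows essentially the same route as the paper: you introduce the equivalent measure obtained by replacing $r_D$ with $r_f^+$ in the Girsanov density, show that the $r_f^+$-discounted wealth has nonpositive drift under this measure using the rate inequalities, and then invoke the lower bound from admissibility to pass from local to true supermartingale. The only cosmetic difference is that you substitute the repo constraint $\psi_t B_t^{r_r}=-\xi_t S_t$ before comparing drifts (writing $\tilde A_t$ explicitly), whereas the paper bounds $r_r\psi_t$ and $r_f\xi_t^f$ separately by $r_f^+\psi_t$ and $r_f^+\xi_t^f$ and then collapses the sum to $r_f^+ V_t^0$; the two computations are equivalent.
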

We remark that in a market model without defaultable securities, similar inequalities between borrowing and lending rates have been derived by \cite{br} (Proposition 3.3), and by \cite{NiRut} (Proposition 3.1). We impose additional relations between lending rates and return rates of the risky bonds given that our model also allows for counterparty risk.

\begin{proof}
First, observe that under the conditions given above we have
  \begin{align*}
    \nonumber   r_r \psi_t& = r_r^+ \psi_t \ind_{\{\psi_t>0\}} + r_r^- \psi_t \ind_{\{\psi_t< 0\}} \leq r_f^+ \psi_t \ind_{\{\psi_t>0\}} + r_f^+ \psi_t \ind_{\{\psi_t < 0\}} = r_f^+ \psi_t \\
    r_f \xi_t^f & = r_f^+ \xi_t^f \ind_{\{\xi_t^f >0\}} + r_f^- \xi_t^f \ind_{\{{\xi_t^f} < 0\}} \leq r_f^+ \xi_t^f \ind_{\{\xi_t^f >0\}} + r_f^+ \xi_t^f\ind_{\{\xi_t^f < 0\}} = r_f^+ \xi_t^f
  \end{align*}
Next, it is convenient to write the wealth process under a suitable measure $\tilde{\Px}$ specified via the stochastic exponential
  \[
    \frac{d\tilde{\Px}}{d\Px} \bigg|_{\mathcal{G}_{\tau}} = e^{\frac{r_f^+-\mu}{\sigma}W_{\tau}^{\Px} - \frac{(r_f^+-\mu)^2}{2\sigma^2}\tau} \Bigl(1+\frac{r^I - r_f^+}{h_I^{\Px}}\Bigr)^{H^I_\tau} e^{(r_f^+-r^I)\tau}\Bigl(1+\frac{r^C - r_f^+}{h_C^{\Px}}\Bigr)^{H^C_\tau} e^{(r_f^+-r^C)\tau}
  \]
By Girsanov's theorem, $\tilde{\Px}$ is an equivalent measure to $\Px$ such that the dynamics of the risky assets are given by
  \begin{align*}
    dS_t & = r_f^+ S_t \, dt + \sigma S_t dW_t^{\tilde{\Px}},\\
    dP_t^I & = r_f^+ P_t^I \, dt - P_{t-}^I d\varpi_t^{I,\tilde{\Px}},\\
    dP_t^C & = r_f^+ P_t^C \, dt - P_{t-}^C d\varpi_t^{C,\tilde{\Px}}
  \end{align*}
where $W^{\tilde{\Px}}:=(W^{\tilde{\Px}}_t; \; 0 \leq t \leq \tau)$ is a $\tilde{\Px}$ Brownian motion, while $\varpi^{I,\tilde{\Px}} := (\varpi_t^{I,\tilde{\Px}}; \; 0 \leq t \leq \tau)$ and $\varpi^{C,\tilde{\Px}}:=(\varpi_t^{C,\tilde{\Px}}; \;0 \leq t \leq \tau)$ are $\tilde{\Px}$-martingales. The $r_f^+$ discounted assets $\tilde{S}_t := e^{-r_f^+ t} S_t$, $\tilde{P}_t^I := e^{-r_f^+ t} P_t^I$ and $\tilde{P}_t^C := e^{-r_f^+ t} P_t^C$ are thus $\tilde{\Px}$-martingales. In particular, $W^{\tilde{\Px}} = W^{\Px} + \frac{\mu-r_f^+}{\sigma}$ and for $i \in \{I,C\}$, trader and counterparty default intensity under $\tilde{\Px}$ are given by $h_i^{\tilde{\Px}} = r^i-r_f^+ + h_i^{\Px} > 0$ in light of the assumptions of the proposition.

Denote the wealth process associated with $(S_t,P_t^I,P_t^C)_{t \geq 0}$ in the underlying market by $V^0_t$. Using the self-financing condition, its dynamics are given by
  \begin{align*}
    \nonumber dV^0_t &= \bigl(r_f \xi_t^f B_t^{r_f} + r_f^+ \xi_t S_t + r_r \psi_tB_t^{r_r} + r_f^+ \xi_t^I P_t^I + r_f^+ \xi_t^C P_t^C \bigr) \, dt  \\
    \nonumber & \phantom{=}+ \xi_t \sigma S_t \, dW_t^{\tilde{\Px}}  - \xi_{t-}^I P_{t-}^I  \, d\varpi_t^{I,\tilde{\Px}}  - \xi_{t-}^C P_{t-}^C \, d\varpi_t^{C,\tilde{\Px}}\\
    & = \bigl(r_f \xi_t^f B_t^{r_f} + r_r \psi_t B_t^{r_r}  \bigr) \, dt + \xi_t \,d S_t +\xi_{t-}^I \, dP_t^I + \xi_{t-}^C \, dP_t^C.
  \end{align*}
Assume that the initial capital at disposal of the trader is $x \geq 0$. Then we have that
  \begin{align*}
    V^0_\tau(x) -V^0_0(x) & = \int_0^\tau \bigl( r_f^+ \xi_t S_t + r_f \xi_t^f B_t^{r_f} + r_r \psi_tB_t^{r_r} + r_f^+\xi_t^I P^I_t + r_f^+\xi_t^C P^C_t  \bigr) \, dt  \\
    \nonumber & \phantom{=}+ \int_0^\tau \xi_t \sigma S_t \, dW_t^{\tilde{\Px}} - \int_0^\tau\xi_{t-}^I P_{t-}^I \, d\varpi_t^{I, \tilde{\Px}} - \int_0^\tau\xi_{t-}^C P_{t-}^C\, d \varpi_t^{C,\tilde{\Px}} \\
    & \leq \int_0^\tau \bigl( r_f^+ \xi_t S_t + r_f^+ \xi_t^f B_t^{r_f} + r_f^+ \psi_tB_t^{r_r} + r_f^+\xi_t^I P^I_t + r_f^+\xi_t^C P^C_t \bigr) \, dt  \\
    \nonumber & \phantom{=}+ \int_0^\tau \xi_t \sigma S_t \, dW_t^{\tilde{\Px}} - \int_0^\tau\xi_{t-}^I P_{t-}^I \, d\varpi_t^{I, \tilde{\Px}} - \int_0^\tau\xi_{t-}^C P_{t-}^C\, d \varpi_t^{C,\tilde{\Px}} \\
    & = \int_0^\tau r_f^+ V^0_t(x) \, dt + \int_0^\tau \xi_t \sigma S_t \, dW_t^{\tilde{\Px}} - \int_0^\tau\xi_{t-}^I P_{t-}^I \, d\varpi_t^{I, \tilde{\Px}} - \int_0^\tau\xi_{t-}^C P_{t-}^C\, d \varpi_t^{C,\tilde{\Px}}.
  \end{align*}
Therefore, it follows that
  \[
    e^{-r_f^+\tau} V^0_{\tau}(x) -V^0_0(x) \leq \int_0^\tau \xi_t \, d\tilde{S}_t - \int_0^{\tau}\xi_{t-}^I \, d\tilde{P}_{t-}^I - \int_0^{\tau}\xi_{t-}^C \, d\tilde{P}_{t-}^C.
  \]
Note now that the right hand side is a local martingale bounded from below (as the value process is bounded from below by the admissibility condition), and therefore is a supermartingale. Taking expectations, we conclude that
  \[
    \Exx^{\tilde{\Px}} \bigl[  e^{-r_f^+\tau} V^0_\tau(x) -V^0_0(x) \bigr] \leq 0
  \]
and therefore either $\tilde{\Px} \bigl[  V^0_\tau(x) = e^{r_f^+\tau} x \bigr] = 1$ or $\tilde{\Px} \bigl[  V^0_\tau(x) < e^{r_f^+\tau} x \bigr] >0 $. As $\tilde{\Px}$ is equivalent to $\Px$, this shows that arbitrage opportunities for the hedger are precluded in this model (as $r_f^+$ is the riskless rate he would receive by lending the positive cash amount $x$ to the funding desk).
\end{proof}

Next we want to define the notion of an arbitrage free price of a derivative security from the hedger{'}s perspective.

\begin{definition}\label{def:arb-price}
A valuation $P \in \mathbb{R}$ of a derivative security with terminal payoff $\vartheta \in \mathcal{F}_T$ is called \textit{hedger's arbitrage-free} if for all $\gamma \in \mathbb{R}$, buying $\gamma$ securities for the price $\gamma P$ and hedging in the market with an admissible strategy and zero initial capital, does not create hedger's arbitrage.
\end{definition}

Before giving a characterization of hedger's arbitrage prices, we first need to return to the wealth dynamics.  Since our goal is to define the total valuation adjustment under the valuation measure $\Qxx$, we first rewrite the dynamics of the wealth process under this measure. Using the condition~\eqref{eq:selff}, we obtain from \eqref{eq:wealth} that
  \begin{align}\label{eq:vtlast}
    \nonumber dV_t &= \bigl(r_f \xi_t^f B_t^{r_f} + (r_D - r_r) \xi_t S_t + r_D \xi_t^I P_t^I + r_D \xi_t^C P_t^C - r_c \psi_t^c B_t^{r_c} \bigr) \, dt  \\
    \nonumber & \phantom{=}+ \xi_t \sigma S_t \, dW_t^{\Qxx}  - \xi_{t-}^I P_{t-}^I \, d\varpi_t^{I,\Qxx}  - \xi_{t-}^C P_{t-}^C  \, d\varpi_t^{C,\Qxx}\\
    \nonumber &= \Bigl( \rfp \bigl(\xi_t^f B_t^{r_f}\bigr)^+ -\rfm \bigl(\xi_t^f B_t^{r_f}\bigr)^- +  (r_D - \rrm) \bigl(\xi_t S_t\bigr)^+ - (r_D - \rrp) \bigl(\xi_t S_t\bigr)^- + r_D \xi_t^I P_t^I + r_D \xi_t^C P_t^C\Bigr) \, dt \\
    \nonumber & \phantom{=}\Bigl(- r_c^- \bigl( \psi_t^c B_t^{r_c}\bigr)^+ +  r_c^+ \bigl(\psi_t^c B_t^{r_c}\bigr)^-  \Bigr) \, dt  \\
    \nonumber & \phantom{=}+ \xi_t \sigma S_t \, dW_t^{\Qxx} - \xi_{t-}^I P_{t-}^I \, d\varpi_t^{I,\Qxx}  - \xi_{t-}^C P_{t-}^C  \, d\varpi_t^{C,\Qxx} \\
    \nonumber &= \Bigl( \rfp \bigl(\xi_t^f B_t^{r_f}\bigr)^+ -\rfm \bigl(\xi_t^f B_t^{r_f}\bigr)^- +  (r_D - \rrm) \bigl(\xi_t S_t\bigr)^+ - (r_D - \rrp) \bigl(\xi_t S_t\bigr)^- + r_D \xi_t^I P_t^I + r_D \xi_t^C P_t^C\Bigr) \, dt \\
    \nonumber & \phantom{=}- \Bigl(r_c^+ \bigl(C_t\bigr)^+ - r_c^- \bigl(C_t\bigr)^-  \Bigr) \, dt  \\
    & \phantom{=}+ \xi_t \sigma S_t \, dW_t^{\Qxx}  - \xi_{t-}^I P_{t-}^I   \, d\varpi_t^{I,\Qxx}  - \xi_{t-}^C P_{t-}^C d\varpi_t^{C,\Qxx}
  \end{align}
Set now
  \begin{align}\label{eq:Zetas}
    \nonumber Z_t &= \xi_t \sigma S_t\\
    \nonumber Z^I_t &= -\xi_{t-}^I P_{t-}^I\\
    Z_t^C &= -\xi_{t-}^C P_{t-}^C.
  \end{align}
Moreover, using again the condition~\eqref{eq:selff} with \eqref{eq:wealth}, we obtain
  \begin{align*}
    \xi_t^f B_t^{r_f} &= V_t(\bm \varphi) - \xi_t^I P^I_t - \xi_t^C P^C_t + \psi_t^{c} B_t^{r_c} \\
    &= V_t(\bm \varphi) - \xi_t^I P^I_t - \xi_t^C P^C_t - C_t
  \end{align*}
Then the dynamics~\eqref{eq:vtlast} reads as
  \begin{align}\label{eq:vtlast2}
    \nonumber dV_t &= \Bigl(\rfp \bigl(V_t + Z_t^I + Z_t^C - C_t\bigr)^+ -\rfm \bigl(V_t + Z_t^I + Z_t^C - C_t\bigr)^-\\
    \nonumber & \phantom{=} +  (r_D - \rrm) \frac{1}{\sigma}\bigl(Z_t\bigr)^+ -  (r_D - \rrp) \frac{1}{\sigma}\bigl(Z_t\bigr)^- - r_D Z_t^I - r_D Z_t^C  - \Bigl(r_c^+ \bigl(C_t\bigr)^+ - r_c^- \bigl(C_t\bigr)^-  \Bigr) \,  \Bigr) dt \\
    & \phantom{=} + Z_t\, dW_t^{\Qxx} + Z_t^I \, d\varpi_t^{I,{ \Qxx}}  + Z_t^C \, d\varpi_t^{C,\Qxx}
  \end{align}

We next define
  \begin{align}
    f^+\bigl(t,v,z,z^I,z^C; \hat{V}\bigr) &= -\Bigl(\rfp \bigl(v+ z^I + z^C- \alpha \hat{V}_t\bigr)^+ -\rfm \bigl(v + z^I + z^C - \alpha \hat{V}_t\bigr)^- \nonumber\\ & \phantom{=:}+  (r_D - \rrm) \frac{1}{\sigma}z^+  -  (r_D - \rrp) \frac{1}{\sigma}z^- - r_D z^I - r_D z^C \nonumber\\
    & \phantom{=:} + r_c^+ \bigl(\alpha \hat{V}_t\bigr)^+ - r_c^-\bigl(\alpha \hat{V}_t\bigr)^-  \Bigr)\\
    f^-\bigl(t,v,z,z^I,z^C; \hat{V}\bigr) &= - f^+\bigl(t,-v,-z,-z^I,-z^C; -\hat{V}_t\bigr)
  \end{align}
where the driver depends on the market valuation process $(\hat{V}_t)$ (via the collateral $(C_t)$). In particular $f^\pm \, : \, \Omega \times [0,T] \times \R^4$, $(\omega, t,v,z,z^I,z^C) \mapsto f^\pm\bigl(t,v,z,z^I,z^C; \hat{V}_t(\omega)\bigr)$ are drivers of BSDEs as discussed in Appendix \ref{App_BSDE}. Moreover, define $V^+(\gamma)$, $V^-(\gamma)$ as solutions of the BSDEs
  \begin{equation}\label{eq:BSDE-sell}
    \left\{ \begin{array}{rl} -dV_t^+(\gamma) &= f^+\bigl(t,V_t^+,Z_t^+,Z_t^{I,+},Z_t^{C,+}; \hat{V}\bigr) \, dt - Z^+_t\, dW_t^{\Qxx} - Z_t^{I,+} \, d\varpi_t^{I,\Qxx}  - Z_t^{C,+} \, d\varpi_t^{C,\Qxx}\\
    V_\tau^+(\gamma) & = \gamma \Bigl(\theta_\tau(\hat{V}) \ind_{\{\tau<T\}} + \Phi(S_T)\ind_{\{\tau = T\}}\Bigr) \end{array}\right.
  \end{equation}
and
  \begin{equation}\label{eq:BSDE-buy}
    \left\{ \begin{array}{rl} -dV_t^-(\gamma) &= f^-\bigl(t,V_t^-,Z_t^-,Z_t^{I,-},Z_t^{C,-}; \hat{V}\bigr) \, dt - Z^-_t\, dW_t^{\Qxx} - Z_t^{I,-} \, d\varpi_t^{I,\Qxx}  - Z_t^{C,-} \, d\varpi_t^{C,\Qxx}\\
    V_\tau^-(\gamma) & = \gamma \Bigl(\theta_\tau(\hat{V}) \ind_{\{\tau<T\}} + \Phi(S_T)\ind_{\{\tau = T\}}\Bigr) \end{array}\right.
  \end{equation}
We note that $\bigl(V_t^+(\gamma)\bigr)$ describes the wealth process when replicating the claim $\gamma \Phi(S_T)$ for $\gamma >0$ (hence hedging the position after selling $\gamma$ securities with terminal payoff $\Phi(S_T)$). On the other hand, $\bigl(-V_t^-(\gamma)\bigr)$ describes the wealth process when replicating the claim $-\gamma \Phi(S_T)$, $\gamma>0$ (hence hedging the position after buying $\gamma$ securities with terminal payoff $\Phi(S_T)$). To ease the notation we set $V_t^+(1) = V_t^+$ and $V_t^-(1) = V_t^-$. We note also that the two BSDEs are intrinsically related: $(V_t^-,Z_t^-,Z_t^{I,-},Z_t^{C,-})$ is a solution to the data $\bigl(f^-, \theta_\tau(\hat{V}), \Phi(S_T)\bigr)$ exactly if $(-V_t^-,-Z_t^-,-Z_t^{I,-},-Z_t^{C,-})$ is a solution to the data $\bigl(f^+, \theta_\tau(-\hat{V}), -\Phi(S_T)\bigr)$.

\begin{theorem}\label{thm:arb-price}
Let $\Phi$ be a function of polynomial growth. Assume that
  \begin{equation}\label{eq:A14}
    r_r^+ \leq r_f^+ \leq r_r^-, \qquad r_f^+ \leq r_f^-, \qquad r_f^+ \vee r_D < r^I + h_I^{\Px}, \qquad r_f^+ \vee r_D < r^C + h_C^{\Px},
  \end{equation}
and
  \begin{equation}\label{eq:comp}
 \rcp \vee \rcm \le \rfm \leq \bigl(r^I + h_I^{\Px}\bigr) \wedge \bigl( r^C + h_C^{\Px}\bigr).
  \end{equation}
If $V_0^- \leq V_0^+$, (where $(V_t^+)$ and $(V_t^-)$ are the (first components of) solutions of the BSDEs \eqref{eq:BSDE-sell} and \eqref{eq:BSDE-buy}), then there exist prices $\pi^{sup}$ and $\pi^{inf}$, $\pi^{inf} \leq \pi^{sup}$, (called hedger's upper and lower arbitrage price) for the option $\Phi(S_T)$ such that all prices in the closed interval $[\pi^{inf}, \pi^{sup}]$ are free of hedger's arbitrage. All prices strictly bigger than $\pi^{sup}$ and strictly smaller than $\pi^{inf}$ provide then arbitrage opportunities for the hedger. In particular, we have that $\pi^{sup} = V_0^+$ and $\pi^{inf} = V_0^-$.
\end{theorem}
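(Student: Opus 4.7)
I will establish the characterizations $\pi^{sup} = V_0^+$ and $\pi^{inf} = V_0^-$ by a two-sided argument: explicit arbitrage strategies built from the replicating solutions of \eqref{eq:BSDE-sell} and \eqref{eq:BSDE-buy} handle every $P \notin [V_0^-, V_0^+]$, while a strict comparison theorem for the associated jump BSDEs precludes arbitrage for every $P \in [V_0^-, V_0^+]$.

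\textbf{Explicit arbitrages when $P \notin [V_0^-, V_0^+]$.} When $P > V_0^+$, I would have the hedger short one unit of the claim for $P$, run the replicating strategy $(V^+, Z^+, Z^{I,+}, Z^{C,+})$ from \eqref{eq:BSDE-sell} (which requires initial capital $V_0^+$), and deposit the surplus $P - V_0^+ > 0$ in the funding account at rate $r_f^+$. The terminal value of $V^+$ equals exactly the closeout $\theta_\tau(\hat V)\ind_{\{\tau<T\}} + \Phi(S_T)\ind_{\{\tau=T\}}$ owed on the short position, so the combined terminal wealth is the deterministic strictly positive quantity $(P - V_0^+) e^{r_f^+ \tau}$. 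Admissibility of the combined strategy is preserved by Assumption~\ref{ass:necessary}. Symmetrically, for $P < V_0^-$, the hedger buys one claim at price $P$ and runs $(-V^-, -Z^-, -Z^{I,-}, -Z^{C,-})$ --- which, by the intrinsic relation recorded just below \eqref{eq:BSDE-buy}, is an admissible replicating strategy for the long position with initial capital $-V_0^-$ --- depositing the cash surplus $V_0^- - P > 0$ at rate $r_f^+$.

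\textbf{No arbitrage when $P \in [V_0^-, V_0^+]$.} Given any admissible combined position of $\gamma \in \R$ claims plus a market hedge with zero total initial capital, let $V^\pi$ denote the wealth of the market part, so $V^\pi_0 = -\gamma P$. Substituting \eqref{eq:selff} and \eqref{eq:collrel} into \eqref{eq:vtlast2} shows that when $\gamma \leq 0$ (hedger short the claim and thus collateral provider), $V^\pi$ solves the BSDE with driver $f^+$ from \eqref{eq:BSDE-sell}, while when $\gamma \geq 0$ (long claim, collateral taker) $-V^\pi$ solves the BSDE with driver $f^-$ from \eqref{eq:BSDE-buy}, in each case with its own terminal value as the terminal condition. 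A hedger's arbitrage requires $V^\pi_\tau + \gamma[\theta_\tau(\hat V)\ind_{\{\tau<T\}} + \Phi(S_T)\ind_{\{\tau=T\}}] \geq 0$ almost surely, strictly on a set of positive probability. Applied to $|\gamma| V^+$ (when $\gamma < 0$) and to $\gamma V^-$ (when $\gamma > 0$), the strict comparison theorem for these BSDEs forces $V^\pi_0 > |\gamma| V_0^+$ or $-V^\pi_0 < \gamma V_0^-$ respectively, i.e.\ $P > V_0^+$ or $P < V_0^-$; both contradict $P \in [V_0^-, V_0^+]$. The case $\gamma = 0$ reduces to arbitrage in the underlying market, ruled out by Proposition~\ref{thm:arb-market}.

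\textbf{Main technical obstacle.} The heart of the argument is the strict comparison theorem for the jump BSDEs \eqref{eq:BSDE-sell} and \eqref{eq:BSDE-buy}. The drivers $f^\pm$ are only Lipschitz in $(v, z, z^I, z^C)$ (not $C^1$), and the equations carry the compensated default martingales $\varpi^{I,\Q}, \varpi^{C,\Q}$, so the classical Brownian comparison theorem does not apply verbatim. I would invoke a Royer-type comparison principle for BSDEs with jumps, combined with a linearization argument that turns the nonlinear max/min driver into a linear BSDE with bounded coefficients whose solution admits a Feynman--Kac representation under an equivalent measure. The technical conditions in \eqref{eq:comp} enter here: the inequality $\rcp \vee \rcm \leq \rfm$ ensures the collateral-rate terms in the driver are dominated by funding costs, while $\rfm \leq (r^I + h_I^\Px) \wedge (r^C + h_C^\Px)$ yields the sign condition on the jump coefficients required for the monotonicity hypothesis of the jump comparison theorem. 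Existence, uniqueness, and strict comparison for \eqref{eq:BSDE-sell} and \eqref{eq:BSDE-buy} are developed in Appendix~\ref{App_BSDE} and form the analytical backbone of the theorem.
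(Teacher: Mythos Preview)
Your proposal follows essentially the same two-part strategy as the paper: construct explicit arbitrage strategies for $P \notin [V_0^-, V_0^+]$ by combining the replicating BSDE solution with a funded cash surplus, and invoke a strict comparison theorem for the jump BSDEs to rule out arbitrage for $P \in [V_0^-, V_0^+]$. The paper reduces to $\gamma = 1$ at the outset via positive homogeneity of $f^\pm$, whereas you carry a general $\gamma$ through; both are fine.

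There is one framing difference worth flagging. In the no-arbitrage direction, the paper argues that an arbitrage at a price $P \leq V_0^+$ amounts to perfectly hedging a terminal payoff $\vartheta' \geq \vartheta = \Phi(S_T)$ with $\Px[\vartheta' > \vartheta] > 0$ from initial capital $P$, and then compares the drivers $f^+(\,\cdot\,; \hat V^{\vartheta'})$ and $f^+(\,\cdot\,; \hat V^{\vartheta})$ with \emph{different} collateral parameters $\hat V$. This is precisely where the inequality $r_c^+ \vee r_c^- \leq r_f^-$ from \eqref{eq:comp} is used: it makes $f^+$ monotone in the parameter $\hat V$, so that the driver inequality needed for Theorem~\ref{thm:compare} holds. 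Your version instead compares two solutions of the \emph{same} BSDE (same collateral process, different terminal data $V_\tau^\pi$ versus $|\gamma|\bigl(\theta_\tau(\hat V)\ind_{\{\tau<T\}} + \Phi(S_T)\ind_{\{\tau=T\}}\bigr)$), for which the collateral-rate inequality plays no role; only the jump-monotonicity half of \eqref{eq:comp}, namely $r_f^- \leq (r^I + h_I^\Px) \wedge (r^C + h_C^\Px)$, is needed to verify hypothesis (A4) of Theorem~\ref{thm:compare}. So your explanation of where $r_c^\pm \leq r_f^-$ enters is slightly off for your own route, but the overall argument is sound and arguably more transparent than the paper's parameter-comparison approach. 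One point you should make explicit is why the arbitrary admissible wealth process $V^\pi$ lies in the solution class $\mathbb{S}_\tau^2 \times (\mathbb{H}_\tau^2)^3$ required by Theorem~\ref{thm:compare}; the paper is equally silent on this, but it is a genuine technical step.
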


\begin{proof}
First, notice that by virtue of the conditions in \eqref{eq:A14}, the underlying market model is free of hedger's arbitrage. Notice that by positive homogeneity of the drivers $f^+$ and $f^-$ of the BSDEs \eqref{eq:BSDE-sell} and \eqref{eq:BSDE-buy}, it is enough to consider the cases with $\gamma =1$. We note that the polynomial growth of $\Phi$ entails that $\vartheta =\Phi(S_T) \in  L^2(\Omega, \mathcal{F}_T, \Qxx)$. As \eqref{eq:BSDE-sell} and \eqref{eq:BSDE-buy} satisfy the conditions (A1) -- (A3) of Assumption \ref{BSDE_ass}, the BSDEs have unique solutions.

Next, we note that we can perfectly hedge the option with terminal payoff $\Phi(S_T)$ using the initial capital $V_0^+$. Thus it is clear that any price $P > V_0^+$ is not arbitrage free, as we could just sell the option for that price, use $V_0^+$ to hedge the claim and put $P-V_0^+$ in the funding account. Using the same argument it is true that when buying an option, any price $P<V_0^-$ will lead to arbitrage.

Thus if $V_0^- \leq V_0^+$ we can conclude that all arbitrage free prices are within the interval $[\pi^{inf} = V_0^- , V_0^+ = \pi^{sup}]$, whereas if $V_0^- > V_0^+$ there are no arbitrage free prices.

Second, assume by contradiction that a valuation $P  \leq V_0^{+}$ would lead to an arbitrage when selling the option. This means that starting with initial capital $P$, the trader can perfectly hedge a claim with terminal payoff $\vartheta' \in \mathcal{F}_T$, where $\vartheta' \geq \vartheta =\Phi(S_T)$ a.s. and $\Px[\vartheta'>\vartheta]>0$. We have

\begin{align*}
&f^+(t,\bar{u}, \bar{z},\bar{z}^I ,\bar{z}^C; \hat{V}_t^{\vartheta'}) - f^+(t,\bar{u}, \bar{z},\bar{z}^I ,\bar{z}^C; \hat{V}_t^{\vartheta}) \\
&=  \bigl(\rfm-\rfp\bigr)\left(\bigl(\bar{u}+ \bar z^I + \bar z^C +(1- \alpha) \hat{V}_t^{\vartheta'}\bigr)^+   - \bigl(\bar{u}+ \bar z^I + \bar z^C+(1- \alpha) \hat{V}_t^{\vartheta}\bigr)^+  \right) + \alpha \rfm \left(  \hat{V}_t^{\vartheta'} -  \hat{V}_t^{\vartheta} \right) \\
&+ \alpha \rcm \Bigl((\hat{V}_t^{\vartheta'})^- - (\hat{V}_t^{\vartheta})^-\Bigr) - \alpha \rcp \Bigl(\bigl(\hat{V}_t^{\vartheta'}\bigr)^+ - \bigl(\hat{V}_t^{\vartheta}\bigr)^+\Bigr)\\
&\ge  \alpha \bigl({\rfm}- (\rcp  \vee \rcm)  \bigr)\bigl(\hat{V}_t^{\vartheta'} - \hat{V}_t^{\vartheta}\bigr) \geq 0,
 \end{align*}
where for the first inequality we have used that fact that $\alpha <1$, and that $\rfm\ge \rfp,$ and    \eqref{eq:comp} for the second inequality.

Notice that $\vartheta' \geq \vartheta$ implies that $\hat{V}_t^{\vartheta'} \geq \hat{V}_t^{\vartheta}$, which in turn leads to the following inequality between the closeout terms: $\theta_\tau(\hat{V}_t^{\vartheta'})  \ind_{\{\tau<T\}} \geq \theta_\tau(\hat{V}_t^{\vartheta})  \ind_{\{\tau<T\}}$ (see also \eqref{eq:theta} for the definition of the closeout).  Thus, we can apply the comparison principle Theorem \ref{thm:compare}. It thus follows that $V_t^{\vartheta'} \geq V_t^{\vartheta}$ and in particular $P > V_0^+$ (using strict comparison, i.e., $P = V_0$ implies $\vartheta' =\vartheta$ a.s.), contradicting our assumption. Using a symmetric argument, it follows that $P  \geq V_0^{-}$.	Thus, if $V_0^{-} \leq V_0^{+}$, we can conclude that all valuations in the interval $[\pi^{inf} = V_0^{-} , V_0^{+} = \pi^{sup}]$ are arbitrage-free, whereas no arbitrage free valuation exists if $V_0^{-} > V_0^{+}$.
\end{proof}

Our goal is to compute the \textit{total valuation adjustment} $\XVA$ that has to be added to the Black-Scholes price to get the actual price. As we have seen, the situation is asymmetric for sell- and buy-prices, so we will have to define different sell- and buy $\XVA$s.
\begin{definition}
The seller's $\XVA$ is the $\mathbb{G}$-adapted stochastic process $(\XVA_t^{sell})$ defined as
  \[
    \XVA_t^{sell} := V^+_t - \hat{V}(t,S_t)
  \]
while the buyer's $\XVA$ is defined as
  \[
    \XVA_t^{buy} := V^-_t - \hat{V}(t,S_t).
  \]
\end{definition}

$XVA^{sell}$ corresponds to the total costs (including collateral servicing costs, funding costs, and counterparty adjustment costs) that the hedger incurs when replicating the payoff of an option he sold, whereas $\XVA^{buy}$ corresponds to the total costs (including collateral servicing costs, funding costs, and counterparty adjustment costs) that the hedger incurs when replicating the payoff of an option he purchased. As we will see in Section \ref{sec:pitdef}, these two $\XVA$s agree only in the case where the drivers of the BSDEs are linear. We note that the difference of the $\XVA$s also describes the width of the no-arbitrage interval, as
  \[
    \XVA_0^{sell} - \XVA_0^{buy} = V^+_0 - V^-_0 = \pi^{sup} - \pi^{inf}.
  \]

\section{Explicit examples}\label{sec:expexp}

We specialize our framework to deal with a concrete example for which we can provide fully explicit expressions of the total valuation adjustment. More specifically, we consider an extension of \cite{Piterbarg}'s model accounting for counterparty credit risk and closeout costs. This means that defaultable bonds of trader and counterparty become an integral part of the hedging strategy. Throughout the section, we make the following assumptions on the interest rates as in Piterbarg{'}s setup:
\begin{align}\label{eq:Piterrates}
r_f^+ &= r_f^- = r_f \nonumber \\
r_c^+ &= r_c^- = r_c \nonumber \\
r_D &= r_r^+ = r_r^- = r_r
\end{align}
The difference between the discount rate $r_D$ chosen by the third party and the repo rate may also be interpreted as a proxy for illiquidity of the repo market. Under this interpretation, $r_D=r_r$ would indicate a regime of full liquidity. \cite{Piterbarg} mentions that the case to be expected in practice is $r_f>r_r>r_c$ (and $r_f \geq r_r$ is indeed necessary for no-arbitrage reasons, see Remark \ref{rem:measure}).

\subsection{Piterbarg's model} \label{sec:pitdef}

As in \cite{Piterbarg} we do not take into account default risk of the trader and counterparty, hence implying that the transaction is always terminated at $\tau = T$. Under this setting, using the collateral specification from Section \ref{sec:repclaim}, we obtain an explicit, closed-form expression of the total value adjustment as well as of the optimal (super)-hedging strategy.

We first note that equations~\eqref{eq:BSDE-sell} and~\eqref{eq:BSDE-buy} coincide by linearity and a transformation $\tilde{V}_t = \bigl(B_t^{r_f} \bigr)^{-1} V_t^+ = \bigl(B_t^{r_f}\bigr) ^{-1} V_t^-$ simplifies the situation. Writing the BSDE in integral form we have
  \begin{align}\label{PitebargBSDE}
    \nonumber \tilde{V}_t &= \bigl(B_T^{r_f}\bigr)^{-1} \Phi(S_T)  - \int_t^T \tilde{Z}_s dW_s^{\Qxx} + \int_t^T \bigl(B_s^{r_f}\bigr)^{-1} \bigl({r_f - r_c}\bigr) \alpha \hat{V}(s,S_s) ds - \int_t^T \bigl({r_D - r_r}\bigr) \frac{1}{\sigma} \tilde{Z}_s ds \\
    &= \bigl(B_T^{r_f}\bigr)^{-1} \Phi(S_T)  - \int_t^T \tilde{Z}_s dW_s^{\Qxx} + \int_t^T \bigl(B_s^{r_f}\bigr)^{-1} \bigl({r_f - r_c}\bigr) \alpha \hat{V}(s,S_s) ds,
  \end{align}
where we have used the collateral specification given in \eqref{eq:rulecoll}, and the assumption that $r_D = r_r$. This immediately implies that in case when all rates $r_f,\, r_r$, $r_c$ are the same, then the total valuation adjustment becomes zero. Otherwise, it is nonzero and given by two main contributions:
\begin{enumerate}
\item The first term captures the adjustment due to discounting the terminal value of the claim at the funding rate $r_f$. Since $r_f$ is usually higher than the risk-free rate in the Black-Scholes model, we would obtain that $\mathbb{E}^{\Qxx}\bigl[\bigl(B_T^{r_f}\bigr)^{-1} \Phi(S_T) \bigr]$ is smaller than the corresponding quantity evaluated using the risk-free rate.
\item The last term captures the adjustment, proportional to the size of posted collateral $\alpha \hat{V}(t,S_t)$, originating from the difference between funding and collateral rates.
\end{enumerate}
Using the Clark-Ocone formula, we can find $\tilde{Z}_t$ by means of Malliavin Calculus. We have that
  \[
    \tilde{Z}_t = \mathbb{E}^{\Qxx} \biggl[\Bigl. D_t \Bigl( \bigl(B_T^{r_f}\bigr)^{-1} \Phi(S_T) +  \int_t^T e^{-r_f s }\bigl(r_f - r_c\bigr) \alpha \hat{V}(s,S_s) \,ds \Bigr) \Bigr\vert \mathcal{F}_t\biggr],
  \]
where $D_t$ denotes the Malliavin derivative which may be computed as
  \begin{align}\label{eq:malliavin}
    & \phantom{==} D_t \Bigl( \bigl(B_T^{r_f}\bigr)^{-1} \Phi(S_T) + \int_t^T \bigl(B_s^{r_f}\bigr)^{-1} \bigl(r_f - r_c\bigr) \alpha \hat{V}(s,S_s) ds \Bigr)\nonumber\\
    &= \bigl(B_T^{r_f}\bigr)^{-1} \Phi'(S_T) D_t S_T + \alpha \int_t^T \bigl(B_s^{r_f}\bigr)^{-1} \bigl(r_f - r_c\bigr) D_t \hat{V}(s,S_s)\,ds \nonumber\\
    & = \bigl(B_T^{r_f}\bigr)^{-1} \Phi'(S_T) \sigma S_T + \alpha \int_t^T \bigl(B_s^{r_f}\bigr)^{-1} \bigl(r_f - r_c\bigr) \frac{\partial}{\partial x}\hat{V}(s,S_s)\sigma S_T \, ds.
  \end{align}
Above, we have used the chain rule of Malliavin calculus and the well known fact that $D_t S_T = \sigma S_T$.\footnote{We refer to \cite{Nualartbook} for an introduction to Malliavin derivatives.} We expect that the $\tilde{Z}$ term of the BSDE would correspond to an ``adjusted delta hedging'' strategy, with the delta hedging strategy recovered if all rates are identical.
Indeed, let us denote the market Delta by
  \[
    \hat{\Delta}(t,S_t) := \frac{\partial}{\partial x} \hat{V}(t, S_t) = \frac{\partial}{\partial x} \mathbb{E}^{\Qxx}\Bigl[\frac{B_t^{r_r}}{B_T^{r_r}}  \Phi(S_T) \vert \mathcal{F}_t\Bigr]
  \]
and note that by a martingale argument
  \[
    \hat{\Delta}(t,S_t) \frac{S_t}{B_t^{r_r}}  = \mathbb{E}^{\Qxx}\Bigl[ \Bigl. \Phi'(S_T) \frac{S_T}{B_T^{r_r}} \Bigr\vert \mathcal{F}_t\Bigr] = \mathbb{E}^{\Qxx}\Bigl[ \Bigl. \frac{S_T}{B_T^{r_r}} \hat{\Delta}(T,S_T) \Bigr\vert \mathcal{F}_t\Bigr].
  \]
Then the Malliavin derivative from \eqref{eq:malliavin} can be written in terms of Delta as
  \[
    D_t\bigl(...\bigr) =  \frac{1}{B_T^{r_f}} \Phi'(S_T) \sigma S_T + \alpha \int_t^T \frac{1}{B_s^{r_f}} (r_f - r_c) \hat{\Delta}(s,S_s)\sigma S_s \, ds.
  \]
Therefore we get
  \begin{align}
    \nonumber \tilde{Z}_t & = \sigma \mathbb{E}^{\Qxx}\Bigl[\Bigl.\frac{1}{B_T^{r_f}} \Phi'(S_T)  S_T\Bigr\vert \mathcal{F}_t\Bigr] + \alpha \int_t^T \frac{1}{B_s^{r_f}} (r_f - r_c) \sigma \mathbb{E}^{\Qxx}\bigl[\bigl.\hat{\Delta}(s,S_s) S_s\bigr\vert \mathcal{F}_t\bigr] \, ds \\
    \nonumber & =  \sigma \frac{B_T^{r_r}}{B_T^{r_f}} \mathbb{E}^{\Qxx}\Bigl[\Bigl.\Phi'(S_T) \frac{S_T}{B_T^{r_r}}  \Bigr\vert \mathcal{F}_t\Bigr] + \alpha \int_t^T \frac{1}{B_s^{r_f}} (r_f - r_c) \sigma B_s^{r_r} \mathbb{E}^{\Qxx}\Bigl[\Bigl.\hat{\Delta}(s,S_s) \frac{S_s}{B_s^{r_r}}\Bigr\vert \mathcal{F}_t\Bigr] \, ds\\
    \nonumber & =  \sigma \frac{B_T^{r_r}}{B_T^{r_f}} \frac{S_t}{B_t^{r_r}} \hat{\Delta}(t,S_t) + \alpha \int_t^T \frac{1}{B_s^{r_f}} (r_f - r_c) \sigma B_s^{r_r} \mathbb{E}^{\Qxx}\Bigl[\Bigl.\hat{\Delta}(s,S_s) \frac{S_s}{B_s^{r_r}}\Bigr\vert \mathcal{F}_t\Bigr] \, ds\\
    \nonumber & =  \frac{1}{B_T^{r_f}} \frac{B_T^{r_r}}{B_t^{r_r}} \sigma S_t \hat{\Delta}(t,S_t) + \alpha \int_t^T \frac{1}{B_s^{r_f}} (r_f - r_c) \sigma B_s^{r_r} \hat{\Delta}(t,S_t) \frac{S_t}{B_t^{r_r}} \, ds\\
    & =  \frac{1}{B_T^{r_f}} \frac{B_T^{r_r}}{B_t^{r_r}} \sigma S_t \hat{\Delta}(t,S_t) + \alpha \Bigl(1-\frac{r_c}{r_f}\Bigr) \biggl(\frac{1}{ B_t^{r_f}} - \frac{1}{B_T^{r_f}}\biggr) \frac{B_T^{r_r}}{B_t^{r_r}} \sigma S_t  \hat{\Delta}(t,S_t) \\
    & =  \frac{1}{B_T^{r_f}} \frac{B_T^{r_r}}{B_t^{r_r}} \sigma S_t \hat{\Delta}(t,S_t) + \alpha \frac{1}{B_t^{r_r}} \Bigl(\frac{r_f-r_c}{r_r-r_f}\Bigr) \biggl(\frac{B_T^{r_r}}{ B_T^{r_f}} - \frac{B_t^{r_r}}{B_t^{r_f}}\biggr) \sigma S_t  \hat{\Delta}(t,S_t) \label{eq:finalz}
  \end{align}
We clearly see that $\tilde{Z}_t$ is square integrable and therefore the stochastic integral in \eqref{PitebargBSDE} is a true martingale. Thus we can calculate $\tilde{V}_t$ directly by taking conditional expectations:
  \begin{align}
    \nonumber \tilde{V}_t &= \mathbb{E}^{\Qxx}\Bigl[\bigl.\bigl(B_T^{r_f}\bigr)^{-1} \Phi(S_T) \bigr\vert \mathcal{F}_t \Bigr] + \alpha (r_f - r_c) \int_t^T
    \bigl(B_s^{r_f} \bigr)^{-1} B_s^{r_r} \mathbb{E}^{\Qxx}\Bigl[\bigl. \bigl(B_s^{r_r}\bigr)^{-1} \hat{V}({s},S_{s}) \bigr\vert \mathcal{F}_t \Bigr] d{ s} \\
    &= \bigl(B_T^{r_f}\bigr)^{-1} B_T^{r_r} \frac{1}{B_t^{r_r}} \hat{V}(t,S_t) + \alpha (r_f - r_c) \int_t^T
    \bigl(B_s^{r_f} \bigr)^{-1} B_s^{r_r} \bigl(B_t^{r_r}\bigr)^{-1} \hat{V}(t,S_t) ds
    \label{eq:BSDE1}
  \end{align}

In the second step, we have used the fact that $(B_t^{r_r})^{-1} \hat{V}(t,S_t)$ is a $\mathcal{F}_t$ martingale.
Hence, we obtain
  \begin{align}
    \nonumber \tilde{V}_t &= \frac{B_T^{r_r}}{B_T^{r_f}} \frac{1}{B_t^{r_r}}\hat{V}(t,S_t) + \alpha (r_f - r_c) \frac{1}{B_t^{r_r}} \hat{V}(t,S_t) \int_t^T \frac{ B_s^{r_r}}{B_s^{r_f}}  ds \\
    &=\frac{B_T^{r_r}}{B_T^{r_f}} \frac{1}{B_t^{r_r}}\hat{V}(t,S_t) + \alpha \frac{1}{B_t^{r_r}} \frac{r_f - r_c}{r_f - r_r}\Bigl(\frac{B_t^{r_r}}{B_t^{r_f}} - \frac{B_T^{r_r}}{B_T^{r_f}}\Bigr) \hat{V}(t,S_t) \label{eq:splitpricePiterbarg}\\
    &=\frac{1}{B_t^{r_r}} \biggl(\frac{B_T^{r_r}}{B_T^{r_f}} + \alpha \frac{r_f - r_c}{r_f - r_r}\Bigl(\frac{B_t^{r_r}}{B_t^{r_f}} - \frac{B_T^{r_r}}{B_T^{r_f}}\Bigr)\biggr) \hat{V}(t,S_t) \nonumber \\
    &:=\beta_t \hat{V}(t,S_t) \label{eq:unitedpricePiterbarg}
  \end{align}
The above equation \eqref{eq:unitedpricePiterbarg} gives the price in terms of a unique adjustment factor, while \eqref{eq:splitpricePiterbarg} tracks separately the adjustment that comes from using a particular funding rate and that one coming from the collateralization procedure. We can therefore directly write down the total valuation adjustment as
  \begin{equation}\label{eq:FVAPiterbarg}
    \XVA_t = V_t - \hat{V}_t  = B_t^{r_f} \tilde{V}_t - \hat{V}_t = \Biggl({\frac{B_t^{r_f}}{B_t^{r_r}}} \biggl( \frac{B_T^{r_r}}{B_T^{r_f}} + \alpha \frac{r_f - r_c}{r_f - r_r}\Bigl(\frac{B_t^{r_r}}{B_t^{r_f}} - \frac{B_T^{r_r}}{B_T^{r_f}}\Bigr)\biggr) - 1 \Biggr) \hat{V}(t,S_t)
  \end{equation}
From the representation~\eqref{eq:FVAPiterbarg}, we can see that it is obtained as a percentage of the publicly available price $\hat{V}(t,S_t)$ of the claim. Note that from \eqref{eq:Zetas}, $\tilde{Z}_t = \bigl(B_t^{r_f}\bigr)^{-1} \xi_t\sigma S_t$. Using this along with \eqref{eq:finalz} we obtain that the number of stock shares $\xi_t$ held by the hedger in his replication strategy is given by
  \begin{align}
    \xi_t & = \frac{B_t^{r_f}}{B_T^{r_f}} \frac{B_T^{r_r}}{B_t^{r_r}} \hat{\Delta}(t,S_t) + \frac{B_t^{r_f}}{B_t^{r_r} } \left(\alpha \frac{r_f - r_c}{r_f - r_r}\Bigl( \frac{B_t^{r_r}}{B_t^{r_f}} - \frac{B_T^{r_r}}{B_T^{r_f}}\Bigr) \right) \hat{\Delta}(t,S_t)\label{eq:splitDeltaPiterbarg}\\
    & = \beta_t B_t^{r_f} \hat{\Delta}(t,S_t). \label{eq:unitedDeltaPiterbarg}
  \end{align}
The representation \eqref{eq:splitDeltaPiterbarg} gives a decomposition of the strategy in a first part consisting of a Delta hedging strategy adjusted for funding costs, and a second part giving a correction due to the gap between funding and collateral rates. The representation  \eqref{eq:unitedDeltaPiterbarg} gives instead an overall adjustment factor. Let us finally note that the extension to only almost everywhere differentiable payoffs (as European put and call options) follows directly from the a.s. convergence of the Black-Scholes prices and Deltas.

We next analyze the dependence of $\XVA$ on funding rates and collateralization levels. Figure \ref{fig:nodef} shows that the XVA is negative when the collateralization level is small. This is consistent with the expression~\eqref{eq:FVAPiterbarg}, and can be understood as follows. In the classical Black-Scholes case, the hedger borrows and lends money at rate $r_D=r_r$. In our case, however, he borrows the cash needed to buy the stock at the repo rate $r_r$ (cash-driven transaction). He invests the remaining amount coming from the received option premium at the funding rate $r_f>r_r$. By contrast, in a Black-Scholes world the hedger would have invested this amount at the lower rate $r_r$. This explains why the XVA is negative. As $\alpha$ gets larger, the trader also needs to finance purchases of the collateral which he needs to post to the counterparty. In order to do so, he borrows from the treasury at the rate $r_f$. However, he would only receive interests at rate $r_c$ on the posted collateral. This yields a loss to the trader given that $r_c < r_r < r_f$. Figure  \ref{fig:nodef} confirms our intuition. It also shows that the stock position of the trader decreases as the funding rate $r_f$ increases, and increases if the collateralization level $\alpha$ increases.

  \begin{figure}
    \includegraphics[width=0.49\textwidth]{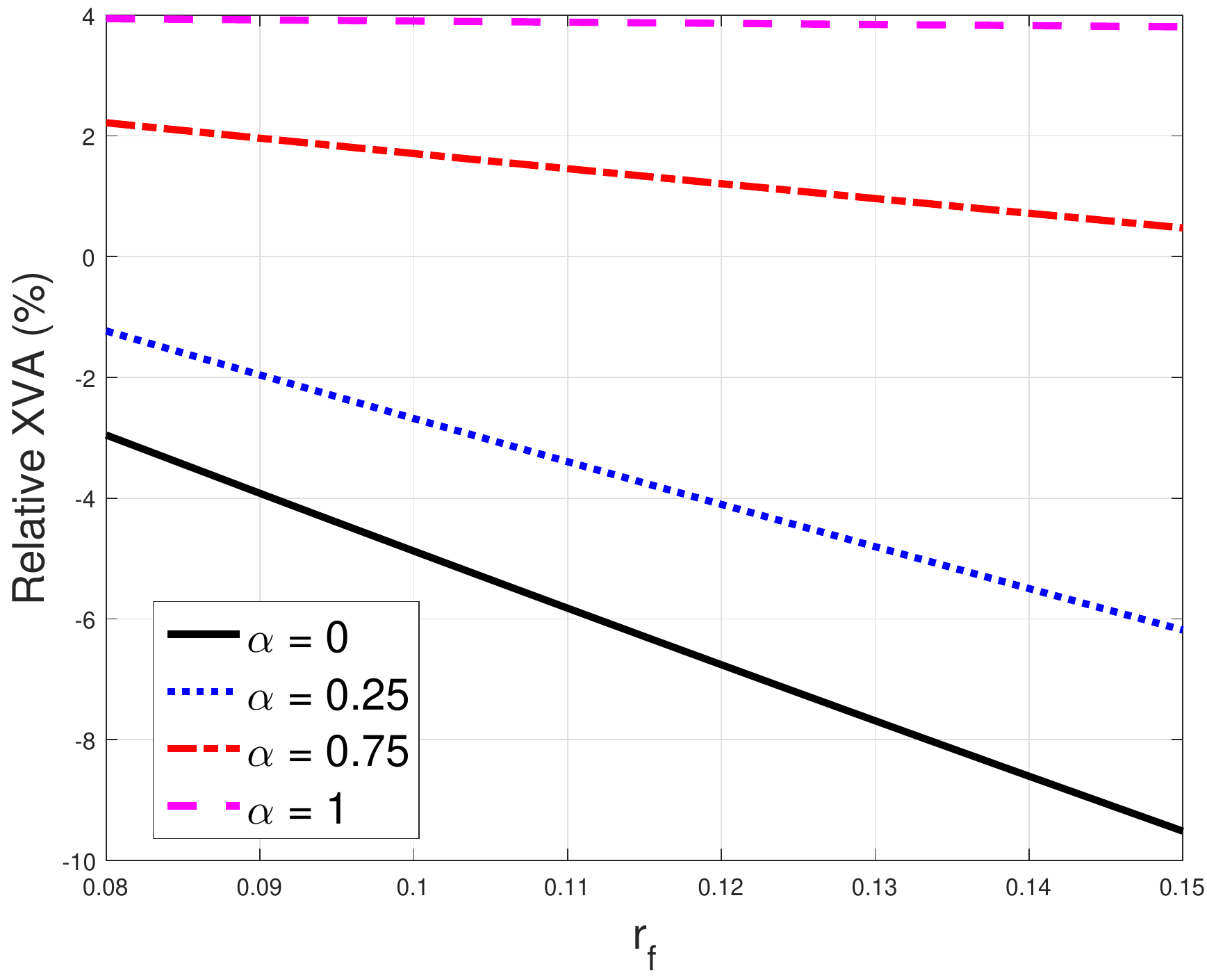}
    \includegraphics[width=0.49\textwidth]{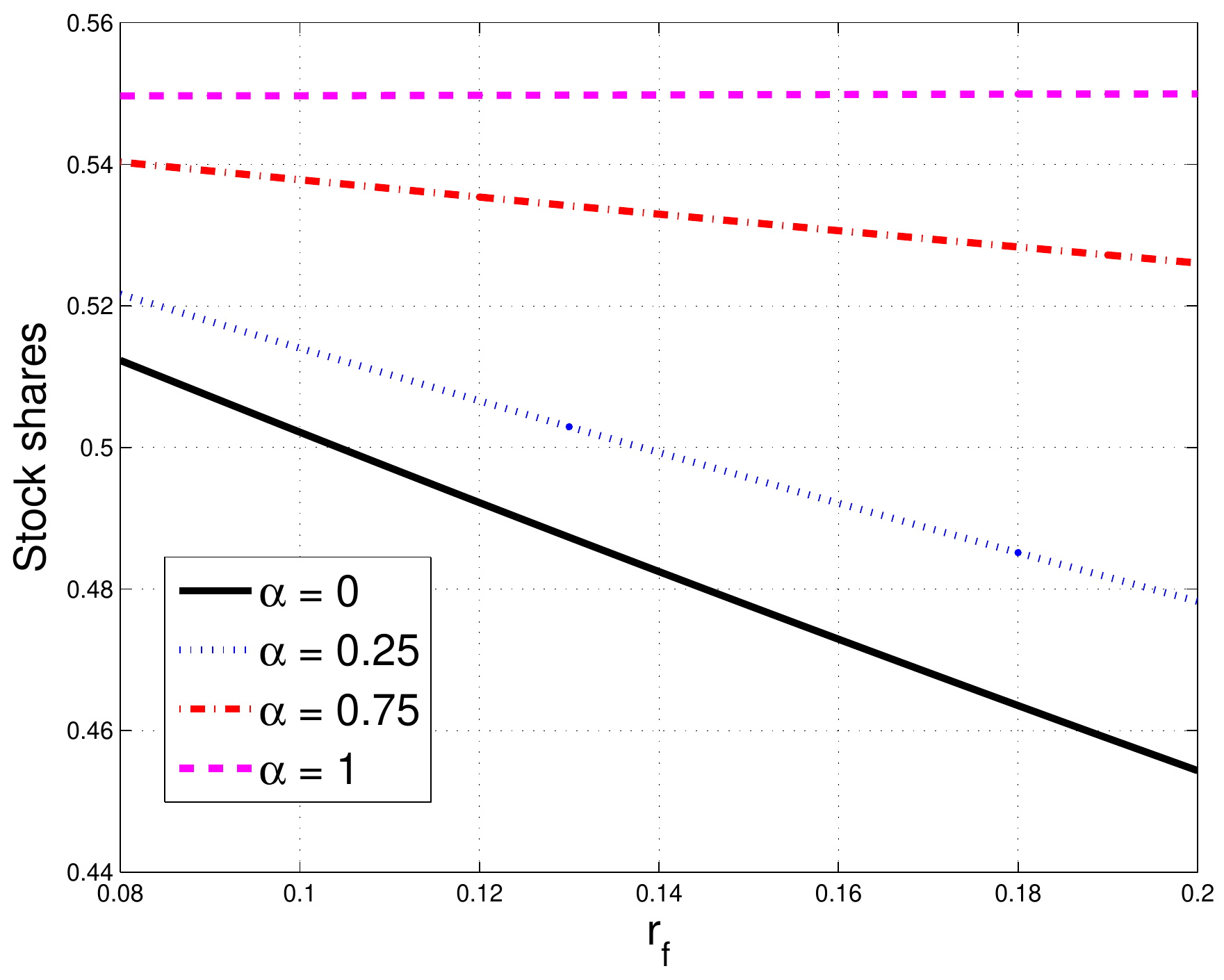}
    \caption{Left panel: XVA  as a function of $r_f$ for different collateralization levels $\alpha$. Right panel: Number of stock shares in the replication strategy. We set $r_D = 0.05$, $r_c = 0.01$, $\sigma = 0.2$, and $\alpha = 0$. The claim is an at-the-money European call option with maturity $T=1$.}
  \label{fig:nodef}
  \end{figure}

\subsection{Piterbarg's model with defaults}

In this section, we extend Piterbarg's model by including the possibility of a default by investor or counterparty. Thus we are working under the  general framework put forward in this paper. Again, we assume the rates equalities in~\eqref{eq:Piterrates} to be consistent with Piterbarg{'}s setup.

\subsubsection{The general representation formula}

As for the case of Piterbarg model without defaults treated in the previous section, the BSDEs \eqref{eq:BSDE-sell} and \eqref{eq:BSDE-buy} agree by linearity, and switching to $r_f$-discounted quantities they reduce to
 \begin{align}\label{eq:piterbargjumpbsde}
    \left\{
      \begin{array}{rl}
	- d\tilde{V}_t &= \Bigl(\bigl(r_D - r_f \bigl) \tilde{Z}_t^I + \bigl(r_D - r_f \bigl) \tilde{Z}_t^C \bigr)  + \bigl(r_f -r_c\bigr) \tilde{C}_t \Bigr) \, dt\\
	&\phantom{=} - \tilde{Z}_t\, dW_t^{\Qxx}  - \tilde{Z}_t^I \, d\varpi_t^{I,\Qxx}  - \tilde{Z}_t^C \, d\varpi_t^{C,\Qxx} \\
	\tilde{V}_\tau &= \bigl(B_T^{r_f}\bigr)^{-1} \Phi(S_T)\ind_{\{\tau = T\}} + \bigl(B_\tau^{r_f}\bigr)^{-1}\theta_{\tau}(\hat{V}) \ind_{\{ \tau < T\}}
      \end{array}
    \right.
  \end{align}

\begin{proposition}
The BSDE \eqref{eq:piterbargjumpbsde} admits an a.s. unique solution given by the explicit representation
  \begin{align*}
    \tilde{V}_t \ind_{\{\tau \geq t\}} & = \mathbb{E}^{\Qxx} \biggl[ \biggl.  \Bigl( \bigl(B_T^{r_f}\bigr)^{-1} \Phi(S_T) \ind_{\{\tau = T\}} + \bigl(B_\tau^{r_f}\bigr)^{-1} \theta_{\tau}(\hat{V})   \ind_{\{t < \tau <T\}} \Bigr) \Gamma_t^\tau \biggr. \biggr. \\
    & \phantom{=====} \biggl. \biggl. - \int_{t \wedge \tau}^\tau \bigl(r_c -r_f\bigr) \tilde{C}_s \Gamma_t^s \, ds \, \biggr\vert \, \mathcal{G}_t \biggr]
  \end{align*}
where
  \[
    \Gamma_t^s := \Bigl(1+ \frac{r_D-r_f}{\hIQ}\Bigr)^{H_{s \wedge \tau}^I- H_{t \wedge \tau}^I}e^{(r_f - r_D)(s\wedge \tau - t\wedge \tau )} \Bigl(1+ \frac{r_D-r_f}{\hCQ}\Bigr)^{H_{s\wedge \tau }^C- H_{t \wedge \tau }^C}e^{(r_f - r_D)(s\wedge \tau -t \wedge \tau)}
  \]
\end{proposition}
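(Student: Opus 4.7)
The plan is to prove the formula by an Itô product argument: treat $\Gamma_t^\cdot$ as an adjoint process that, when multiplied with the solution $\tilde V$, produces a martingale up to a known compensator involving $\tilde C$. Existence and uniqueness is handled first by invoking the standard BSDE theory: the driver of \eqref{eq:piterbargjumpbsde} is linear (hence Lipschitz) with bounded coefficients, the collateral $\tilde C_t = (B_t^{r_f})^{-1}\alpha \hat V(t,S_t)$ is of polynomial growth in $S_t$, and the terminal data inherits polynomial growth from $\Phi$ and from $\theta_\tau(\hat V)$. Thus Assumption \ref{BSDE_ass} of Appendix \ref{App_BSDE} applies and yields a unique square-integrable adapted quadruple $(\tilde V,\tilde Z,\tilde Z^I,\tilde Z^C)$.

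Next I compute the dynamics of $s\mapsto \Gamma_t^s$ on $[t,\tau]$. Setting $a_i := (r_D-r_f)/h_i^{\Qxx}$ and applying Itô's formula, the two deterministic factors $e^{(r_f-r_D)(s\wedge\tau-t\wedge\tau)}$ contribute a drift $2(r_f-r_D)\Gamma_t^s\,ds$ on $\{s<\tau\}$, while writing $dH_s^i = d\varpi_s^{i,\Qxx}+(1-H_{s-}^i)h_i^{\Qxx}\,ds$ for each $i\in\{I,C\}$ produces jump contributions with compensator $(r_D-r_f)\Gamma_t^s\,ds$ each (since $a_i h_i^{\Qxx}=r_D-r_f$). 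The drift components thus cancel and on $\{s\leq \tau\}$
\[
  d\Gamma_t^s = \Gamma_{t,s-}^{t}\bigl[a_I\,d\varpi_s^{I,\Qxx} + a_C\,d\varpi_s^{C,\Qxx}\bigr],
\]
so $\Gamma_t^\cdot$ is a bounded $\Qxx$-local (in fact true) martingale on $[t,\tau]$.

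Now apply the Itô product rule to $\Gamma_t^s\tilde V_s$ on $[t,\tau]$. The quadratic covariation is concentrated on the two default times with jumps of size $\Gamma_{t,s-}^{t}a_i \tilde Z_s^i$ at $\tau_i$, i.e.
\[
  d[\Gamma_t^\cdot,\tilde V]_s = \Gamma_{t,s-}^{t}\bigl[a_I\tilde Z_s^I\,dH_s^I + a_C\tilde Z_s^C\,dH_s^C\bigr],
\]
whose predictable compensator equals $\Gamma_{t,s-}^{t}(r_D-r_f)(\tilde Z_s^I+\tilde Z_s^C)\,ds$ on $\{s\leq\tau\}$. This compensator contribution exactly cancels the $(r_D-r_f)(\tilde Z^I+\tilde Z^C)$ term in the BSDE driver, leaving as the only surviving drift $-\Gamma_{t,s-}^{t}(r_f-r_c)\tilde C_s\,ds$ plus martingale integrators against $W^{\Qxx}$, $\varpi^{I,\Qxx}$ and $\varpi^{C,\Qxx}$. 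Hence
\[
  M_s := \Gamma_t^s\tilde V_s + \int_t^s \Gamma_{t,u-}^{t}(r_f-r_c)\tilde C_u\,du,\qquad s\in[t,\tau],
\]
is a $\Qxx$-local martingale; boundedness of $\Gamma$ together with $L^2$-integrability of $(\tilde V,\tilde Z,\tilde Z^I,\tilde Z^C)$ promote it to a true martingale.

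Taking $\mathbb{E}^{\Qxx}[\,\cdot\,\mid \gt]$ on the event $\{\tau\geq t\}$, and using $\Gamma_t^t=1$ together with the terminal condition $\tilde V_\tau=(B_T^{r_f})^{-1}\Phi(S_T)\ind_{\{\tau=T\}}+(B_\tau^{r_f})^{-1}\theta_\tau(\hat V)\ind_{\{t<\tau<T\}}$ from \eqref{eq:piterbargjumpbsde}, the identity $M_t=\mathbb{E}^{\Qxx}[M_\tau\mid\gt]$ rearranges to the claimed representation after flipping the sign on the integrand ($-(r_c-r_f)=r_f-r_c$). The main obstacle is the jump-Itô bookkeeping in the product formula—correctly identifying $d[\Gamma_t^\cdot,\tilde V]$ and its compensator, and verifying the precise cancellation against the driver. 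An equivalent route is to interpret $\Gamma_t^s$ as the Radon-Nikodym density (restricted to $\mathcal{G}_s/\gt$) of an equivalent measure under which the intensities shift from $h_i^{\Qxx}$ to $h_i^{\Qxx}+r_D-r_f$ and then invoke Bayes' rule; the underlying verification is the same calculation.
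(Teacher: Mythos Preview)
Your proposal is correct and follows essentially the same route as the paper: identify $\Gamma_t^{\cdot}$ as a Dol\'eans-Dade exponential (hence a $\Qxx$-martingale on $[t,\tau]$), apply the It\^o product formula to $\Gamma_t^s\tilde V_s$ so that the $(r_D-r_f)(\tilde Z^I+\tilde Z^C)$ drift is absorbed by the compensator of the covariation, and then take $\mathcal{G}_t$-conditional expectations of the resulting true martingale. Aside from a minor notational slip (your $\Gamma_{t,s-}^{t}$ should read $\Gamma_t^{s-}$), the argument matches the paper's proof.
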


\begin{proof}
Existence and uniqueness follows from Theorem \ref{thm:main}. We follow the classical approach for linear BSDEs, e.g. as given for the L\'{e}vy-process case by \cite[Section 3]{Quenez} to construct the concrete solution. We note first that $(\Gamma_t^s)_{\{ s \ge t\} }$ is a Dol\'{e}ans-Dade exponential and satisfies
  \[
    \Gamma_t^s  = 1 + \int_{t \wedge \tau}^{s \wedge \tau} \Gamma_t^{u-} \frac{r_D-r_f}{\hIQ} d\varpi_u^{I,{\Qxx}} +  \int_{t \wedge \tau}^{s \wedge \tau} \Gamma_t^{u-} \frac{r_D-r_f}{\hCQ} d\varpi_u^{C,{\Qxx}}.
  \]
Therefore, we have
  \begin{align*}
    \tilde{V}_\tau \Gamma_t^\tau - \tilde{V}_{t \wedge \tau} \Gamma_t^{t \wedge \tau} & = \int_{t \wedge \tau}^\tau \tilde{V}_{s-} \, d \Gamma_t^s  + \int_{t \wedge \tau}^\tau \Gamma_t^{s-} \, d\tilde{V}_s + \bigl[\tilde{V}_\mathbf{\cdot}, \Gamma_t^\mathbf{\cdot}\bigr]_{\tau} - \bigl[\tilde{V}_\mathbf{\cdot}, \Gamma_t^\mathbf{\cdot}\bigr]_{t \wedge \tau}\\
    & = \int_{t \wedge \tau}^\tau  \tilde{V}_{s-} \Gamma_t^{s-} \frac{r_D-r_f}{\hIQ} \, d\varpi_s^{I,{\Qxx}} + \int_{t \wedge \tau}^\tau  \tilde{V}_{s-} \Gamma_t^{s-} \frac{r_D-r_f}{\hCQ} \, d\varpi_s^{C,{\Qxx}} + \int_{t \wedge \tau}^\tau  \Gamma_t^{s-}\tilde{Z}_s \, dW_s^{\Qxx} \\
    & \phantom{=} + \int_{t \wedge \tau}^\tau   \Gamma_t^{s-}\tilde{Z}_s^I \, d\varpi_s^{I,{\Qxx}}   + \int_{t \wedge \tau}^\tau   \Gamma_t^{s-}\tilde{Z}_s^C \, d\varpi_s^{C,{\Qxx}} \\
    & \phantom{=} - \int_{t \wedge \tau}^\tau  \Gamma_t^{s-} \Bigl(\bigl(r_D-r_f\bigr)\tilde{Z}_s^I + \bigl(r_D-r_f\bigr)\tilde{Z}_s^C  + \bigl(r_f - r_c\bigr)\tilde{C}_s \Bigr) \, ds \\
    & \phantom{=} + \int_{t \wedge \tau}^\tau   \Gamma_t^{s-} \frac{r_D-r_f}{\hIQ}\tilde{Z}_s^I \, d\varpi_s^{I,\Qxx}  + \int_{t \wedge \tau}^\tau   \Gamma_t^{s-} \frac{r_D-r_f}{\hCQ}\tilde{Z}_s^C \, d\varpi_s^{C,\Qxx} \\
    & \phantom{=} + \int_{t \wedge \tau}^\tau   \Gamma_t^{s-} \bigl(r_D-r_f\bigr)\tilde{Z}_s^I \, ds + \int_{t \wedge \tau}^\tau   \Gamma_t^{s-} \bigl(r_D-r_f\bigr)\tilde{Z}_s^C\, ds \\
    & = - \int_{t \wedge \tau}^\tau  \Gamma_t^{s-} \bigl(r_f - r_c\bigr)\tilde{C}_s \, ds + M_\tau - M_{t \wedge \tau}
  \end{align*}
where $(M_s)$ given by
  \begin{align*}
    M_s & =  \int_{t \wedge \tau}^{s \wedge \tau}  \Gamma_t^{u-} \Bigl(\bigl(\tilde{V}_{u-} + \tilde{Z}_u^I  \bigr)\frac{r_D-r_f}{\hIQ} + \tilde{Z}_u^I \Bigr) \, d\varpi_u^{I,{\Qxx}} \\
    & \phantom{=} + \int_{t \wedge \tau}^{s \wedge \tau} \Gamma_t^{u-} \Bigl(\bigl(\tilde{V}_{u-} + \tilde{Z}_u^C  \bigr)\frac{r_D-r_f}{\hCQ} + \tilde{Z}_u^C \Bigr) \, d\varpi_u^{C,{\Qxx}} + \int_{t \wedge \tau}^{s\wedge \tau}  \Gamma_t^{u-}\tilde{Z}_u \, dW_u^\Qxx
  \end{align*}
is a true martingale (as all coefficients are bounded). Multiplying with $\ind_{\{\tau \geq t\}}$ and taking conditional expectations with respect to $\mathcal{G}_t$ yields
  \[
    \tilde{V}_t \ind_{\{\tau \geq t\}} = \tilde{V}_t \Gamma_t^t \ind_{\{\tau \geq t\}} =  \mathbb{E}^{\Qxx} \biggl[ \biggl.  \tilde{V}_\tau \Gamma_t^\tau \ind_{\{\tau \geq t\}} + \int_{t \wedge \tau}^\tau \Gamma_t^{s-} \bigl(r_f - r_c\bigr)\tilde{C}_s \, ds \, \biggr\vert \, \mathcal{G}_t \biggr]
  \]
and thus the solution.
\end{proof}

Using Eq.~\eqref{eq:theta} along with the collateral specification given in Eq.~\eqref{eq:rulecoll}, we note in particular that we can also write the representation as
  \begin{align}
    \tilde{V}_t \ind_{\{\tau \geq t\}}  & = \mathbb{E}^{\Qxx} \Bigl[ \Bigl.  \bigl(B_T^{r_f}\bigr)^{-1} \Phi(S_T)  \Gamma_t^T  \ind_{\{\tau = T\}} \, \Bigr\vert \, \mathcal{G}_t \Bigr] \nonumber \\
    & \phantom{=}+ \mathbb{E}^{\Qxx} \Bigl[ \Bigl.   \bigl(B_{\tau_I}^{r_f}\bigr)^{-1}\bigl(1-(1- \alpha) L_I\bigr)\hat{V}(\tau_I,S_{\tau_I}) \Gamma_t^{\tau_I}  \ind_{\{t < \tau_I < \tau_C \wedge T; \hat{V}(\tau_I,S_{\tau_I}) \geq 0\}} \Bigr.\Bigr.\nonumber\\
    & \phantom{= + \mathbb{E}^{\Qxx} \Bigl[}  \Bigl.\Bigl. + \bigl(B_{\tau_I}^{r_f}\bigr)^{-1}\hat{V}(\tau_I,S_{\tau_I}) \Gamma_t^{\tau_I}  \ind_{\{t < \tau_I < \tau_C \wedge T; \hat{V}(\tau_I,S_{\tau_I}) < 0\}}  \, \Bigr\vert \, \mathcal{G}_t \Bigr] \nonumber \\
    & \phantom{=}+ \mathbb{E}^{\Qxx} \Bigl[ \Bigl.   \bigl(B_{\tau_C}^{r_f}\bigr)^{-1}\bigl(1-(1- \alpha) L_C\bigr)\hat{V}(\tau_C,S_{\tau_C}) \Gamma_t^{\tau_C}  \ind_{\{t < \tau_C < \tau_I \wedge T; \hat{V}(\tau_C,S_{\tau_C}) < 0\}} \Bigr.\Bigr.\nonumber \\
    & \phantom{= + \mathbb{E}^{\Qxx} \Bigl[}  \Bigl.\Bigl. + \bigl(B_{\tau_C}^{r_f}\bigr)^{-1}\hat{V}(\tau_C,S_{\tau_C}) \Gamma_t^{\tau_C}  \ind_{\{t < \tau_C < \tau_I \wedge T; \hat{V}(\tau_C,S_{\tau_C}) \geq 0\}}  \, \Bigr\vert \, \mathcal{G}_t \Bigr] \nonumber \\
    &\phantom{=} + \mathbb{E}^{\Qxx} \Bigl[ \Bigl. \alpha \bigl(r_f -r_c\bigr) \int_{t \wedge \tau}^\tau \bigl(B_s^{r_f}\bigr)^{-1} \hat{V}(s,S_s) \Gamma_t^s \, ds \, \Bigr\vert \, \mathcal{G}_t \Bigr].
  \label{eq:reprV}
  \end{align}
This decomposition allows for the nice interpretation of the XVA in terms of four separate contributing terms. The first term corresponds to the default- and collateralization free price under funding constraints, the second one to the (funding-adjusted) payout after  default of the trader, the third one to the (funding-adjusted) payout after counterparty's default and finally the last one to the funding costs of the collateralization procedure.

\subsubsection{Explicit computation of XVA} \label{sec:explicit}

To facilitate the calculation of $\tilde{V}_t$, we prove first a useful lemma.

\begin{lemma}\label{stoppro}
For $t, \, s \in[0,T]$, $ t \leq s$, we have
  \begin{equation}
    \mathbb{E}^\Qxx\bigl[\bigl. \ind_{\{\tau \geq s\}} \, \bigr\vert \, \mathcal{G}_t \bigr] = e^{-(\hIQ+\hCQ)(s-t)}  \ind_{\{\tau \geq t\}}  \label{eq:probabs1}
  \end{equation}
Moreover, if $(M_t)_{0 \leq t \leq T}$ is an $\mathbb{F}$-measurable, square integrable $(\Qxx,\mathbb{G})$-martingale and $\lambda \in \mathbb{R}$, then if $\lambda \neq \hIQ$, $\lambda \neq \hCQ$ and $\lambda \neq \hIQ+\hCQ$, we have
  \begin{align}
    &\phantom{==} \mathbb{E}^{\Qxx} \bigl[ \bigl. e^{\lambda \tau_I} M_{\tau_I} \ind_{\{t < \tau_I < \tau_C \wedge T\}} \, \bigr\vert \, \mathcal{G}_t \bigr]
    \label{eq:probabs2} \\
    & = \frac{\hIQ}{\lambda-\hIQ} e^{\lambda t} \biggl(\frac{\hCQ}{\lambda-\hIQ-\hCQ} \Bigl(e^{(\lambda-\hIQ-\hCQ){(T-t)}}-1\Bigr)- 1 + e^{-\hCQ{(T-t)}} e^{(\lambda-\hIQ)(T-t)}\biggr)M_t \ind_{\{\tau > t\}} \nonumber,
  \end{align}

as well as the analogous statement to \eqref{eq:probabs2} with $I$ and $C$ interchanged.
\end{lemma}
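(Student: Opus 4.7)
The plan is to lean on three structural facts established earlier in the paper: (i) under $\Qxx$ the default times $\tau_I,\tau_C$ are independent exponentials with intensities $\hIQ,\hCQ$ (end of Section~\ref{sec:valuation}); (ii) the $(H)$-hypothesis carries over from $\Px$ to $\Qxx$ because the Girsanov density~\eqref{eq:q-girsanov} splits into a Brownian factor and a default factor, and the latter only multiplies the conditional law of $(\tau_I,\tau_C)$ by deterministic functions, so $\sigma(\tau_I,\tau_C)$ remains independent of $\mathcal{F}_\infty$ under $\Qxx$; (iii) the memoryless property of the exponential. Note that $M$ is by assumption a $(\Qxx,\mathbb{G})$-martingale, so $\mathbb{E}^{\Qxx}[M_u\,|\,\mathcal{G}_t]=M_t$ may be used without further invoking $(H)$.

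For \eqref{eq:probabs1}, on $\{\tau<t\}$ both sides vanish for $s\geq t$. On $\{\tau\geq t\}$ we have $\{\tau\geq s\}=\{\tau_I\geq s\}\cap\{\tau_C\geq s\}$ since $s\leq T$, and (ii) plus the memoryless property give $\Qxx[\tau_I\geq s,\tau_C\geq s\,|\,\mathcal{G}_t]=e^{-\hIQ(s-t)}e^{-\hCQ(s-t)}$ on $\{\tau\geq t\}$, which is the claim.

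For \eqref{eq:probabs2}, the identity is again trivial on $\{\tau\leq t\}$, so I would work on $\{\tau>t\}$ and split the indicator according to whether the counterparty default has occurred by maturity:
\begin{equation*}
\ind_{\{t<\tau_I<\tau_C\wedge T\}}=\ind_{\{t<\tau_I<T\}}\ind_{\{\tau_C>T\}}+\ind_{\{t<\tau_I<\tau_C,\,\tau_C\leq T\}}.
\end{equation*}
For each piece I would first condition on $\mathcal{F}_T\vee\mathcal{H}_t$: by (ii) the joint conditional density of $(\tau_I,\tau_C)$ on $\{\tau_I>t,\tau_C>t\}$ factors as $\hIQ e^{-\hIQ(u-t)}\,\hCQ e^{-\hCQ(v-t)}$, and $M_{\tau_I}$ becomes the deterministic function $u\mapsto M_u$. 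Integrating out the defaults leaves an $\mathbb{F}$-measurable integral of $M_u$, onto which the $(\Qxx,\mathbb{G})$-martingale property of $M$ lets me apply $\mathbb{E}^{\Qxx}[M_u\,|\,\mathcal{G}_t]=M_t$. The first piece then contributes $e^{-\hCQ(T-t)}\cdot\frac{\hIQ M_t e^{\lambda t}}{\lambda-\hIQ}\bigl(e^{(\lambda-\hIQ)(T-t)}-1\bigr)$ on $\{\tau>t\}$, while the second piece, after integrating $\tau_I$ over $(t,v)$ and then $\tau_C$ over $(t,T]$ against $\hCQ e^{-\hCQ(v-t)}\,dv$ and invoking Fubini, contributes $\frac{\hIQ M_t e^{\lambda t}}{\lambda-\hIQ}\bigl(\frac{\hCQ\,(e^{(\lambda-\hIQ-\hCQ)(T-t)}-1)}{\lambda-\hIQ-\hCQ}+e^{-\hCQ(T-t)}-1\bigr)$. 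Summing the two and collecting the $e^{-\hCQ(T-t)}$-terms yields exactly the right-hand side of~\eqref{eq:probabs2}, and the companion formula with $I,C$ interchanged follows by the identical argument with the roles of $\tau_I,\tau_C$ swapped.

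The main obstacle is bookkeeping rather than analysis: keeping the conditioning on $\mathcal{G}_t$ straight against the intermediate conditioning on $\mathcal{F}_T\vee\mathcal{H}_t$ (needed to treat $M_{\tau_I}$ as a deterministic function of $\tau_I$), and justifying cleanly that $M$ passes through the $\tau_I$-integration. The excluded parameter values $\lambda\in\{\hIQ,\hCQ,\hIQ+\hCQ\}$ are precisely those at which the integrated exponentials produce $0/0$ denominators, and the split by $\{\tau_C>T\}$ versus $\{\tau_C\leq T\}$ is what produces the somewhat unusual form of the right-hand side instead of a single compact fraction (the two can in fact be further simplified algebraically, but the asymmetric form is convenient for the subsequent explicit XVA computation).
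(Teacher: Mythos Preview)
Your argument is correct and the algebra checks out; summing your two pieces indeed reproduces the right-hand side of~\eqref{eq:probabs2}. The route, however, differs from the paper's in two organisational respects. First, the paper invokes the ``key lemma'' of Bielecki--Rutkowski (Lemma~5.1.2 in \cite{bielecki01}) to pass in one step from $\mathcal{G}_t$-conditioning to $\mathcal{F}_t$-conditioning via $\Exx^{\Qxx}[\,\cdot\,|\,\mathcal{G}_t]=\ind_{\{\tau>t\}}\Exx^{\Qxx}[\,\cdot\,|\,\mathcal{F}_t]/\Qxx[\tau>t]$, whereas you introduce the intermediate $\sigma$-field $\mathcal{F}_T\vee\mathcal{H}_t$ and argue from first principles. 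Second, the paper does not split on $\{\tau_C>T\}$ versus $\{\tau_C\leq T\}$: it writes the whole expectation as the single double integral $\int_t^\infty\int_t^{z\wedge T}e^{\lambda y}M_y\,\hIQ e^{-\hIQ y}\hCQ e^{-\hCQ z}\,dy\,dz$ (modulo the shift from memorylessness), pulls out $M_t$ by the martingale property, and evaluates once. Your decomposition is slightly longer but more elementary, since it avoids quoting the key lemma as a black box and makes transparent which part of the answer corresponds to which default scenario; the paper's version is more compact and gets to the closed form in a single computation.
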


\begin{proof}
All the calculations follow from the fact that the indicators are independent of the filtration $\mathbb{F}$, the memorylessness of the exponential distribution as well as its explicit distributional properties and the independence of the two default times. We get by the ``key lemma'' \cite[Lemma 5.1.2]{bielecki01}
  \begin{align*}
    \mathbb{E}^\Qxx\bigl[\bigl. \ind_{\{\tau \geq s\}} \, \bigr\vert \, \mathcal{G}_t \bigr] & = \frac{\mathbb{E}^\Qxx\bigl[\bigl. \ind_{\{\tau \geq s\}} \, \bigr\vert \, \mathcal{F}_t \bigr]}{\Qxx[\tau>t]} \ind_{\{\tau \geq t\}} = \frac{\Qxx[\tau >s]}{\Qxx[\tau>t]} \ind_{\{\tau \geq t\}} = e^{-(\hIQ+\hCQ) (s-t)}\ind_{\{\tau \geq t\}}.
  \end{align*}
Moreover, we note that we have again by the ``key lemma'' and memorylessness of the exponential distribution
  \begin{align*}
    &\phantom{==} \mathbb{E}^{\Qxx} \bigl[ \bigl. e^{\lambda \tau_I} M_{\tau_I} \ind_{\{t < \tau_I < \tau_C \wedge T\}} \, \bigr\vert \, \mathcal{G}_t \bigr]  = \frac{\mathbb{E}^{\Qxx} \bigl[ \bigl. e^{\lambda \tau_I} M_{\tau_I} \ind_{\{t < \tau_I < \tau_C \wedge T\}} \, \bigr\vert \, \mathcal{F}_t \bigr]}{\Qxx[\tau>t]} \ind_{\{\tau>t\}}\\
    &= \frac{\mathbb{E}^{\Qxx} \bigl[ \bigl. \int_t^\infty \int_t^{z \wedge T} e^{\lambda y} M_y \hIQ e^{-\hIQ y}\hCQ e^{-\hCQ z} \, dydz \, \bigr\vert \, \mathcal{F}_t \bigr]}{\Qxx[\tau>t]} \ind_{\{\tau>t\}} \\
    &= \int_0^\infty \int_0^{z \wedge (T-t)} e^{\lambda (y+t)} \hIQ e^{-\hIQ y}\hCQ e^{-\hCQ z} \, dydz \, M_t \ind_{\{\tau>t\}}\\
    & = \frac{\hIQ}{\lambda-\hIQ} e^{\lambda t} \biggl(\frac{\hCQ}{\lambda-\hIQ-\hCQ} \Bigl(e^{(\lambda-\hIQ-\hCQ){(T-t)}}-1\Bigr)- 1 + e^{-\hCQ{(T-t)}} e^{(\lambda-\hIQ)(T-t)}\biggr)M_t \ind_{\{\tau > t\}}.
  \end{align*}
The reverse statement follows in the same way.
\end{proof}

We now proceed to the calculation of $\tilde{V}_t$, term by term. First we get by the tower property, independence of the filtration $\mathbb{F}$ and $\mathbb{H}$ and \eqref{eq:probabs1} that
  \begin{align*}
    & \phantom{=} \mathbb{E}^{\Qxx} \Bigl[ \Bigl.  \bigl(B_T^{r_f}\bigr)^{-1} \Phi(S_T) \ind_{\{\tau = T\}}  \Gamma_t^\tau  \, \Bigr\vert \, \mathcal{G}_t \Bigr] \\
    &= \frac{B_T^{r_D}}{B_T^{r_f}} e^{(r_f - r_D)(T - t )}e^{(r_f - r_D)(T - t )} \mathbb{E}^{\Qxx} \biggl[ \biggl. \mathbb{E}^{\Qxx} \Bigl[ \Bigl.  \bigl(B_T^{r_D}\bigr)^{-1} \Phi(S_T) \ind_{\{\tau = T\}} \, \Bigr\vert \, \mathcal{F}_t \vee \mathcal{H}_T\Bigr] \, \biggr\vert \, \mathcal{G}_t \biggr]\\
    &= \frac{B_T^{r_D}}{B_T^{r_f}} e^{(r_f - r_D)(T - t )}e^{(r_f - r_D)(T - t )}\mathbb{E}^{\Qxx} \bigl[ \bigl.\ind_{\{\tau = T\}} \, \bigr\vert \, \mathcal{G}_t \bigr] (B_t^{r_D})^{-1} \hat{V}(t,S_t)\\
    &= \frac{B_T^{r_D}}{B_T^{r_f}} \frac{1}{B_t^{r_D}} e^{(r_f - r_D -\hIQ)(T - t )}e^{(r_f - r_D - \hCQ)(T - t )} \hat{V}(t,S_t)\ind_{\{\tau \geq t\}}
  \end{align*}

To simplify the calculation of the $\CVA$ and $\DVA$ terms, we make the following definitions:
  \begin{align*}
    \hat{\Theta}_t^{s,+} & := \mathbb{E}^{\Qxx}\Bigl[ \Bigl. e^{-r_D (s-t)} \hat{V}(s,S_s) \ind_{\{\hat{V}(s,S_s) \geq 0\}}\, \Bigr\vert \, \mathcal{F}_t \Bigr],\\
    \hat{\Theta}_t^{s,-} & := \mathbb{E}^{\Qxx}\Bigl[ \Bigl. e^{-r_D (s-t)} \hat{V}(s,S_s) \ind_{\{\hat{V}(s,S_s) < 0\}}\, \Bigr\vert \, \mathcal{F}_t \Bigr].
  \end{align*}
We note that $\hat{\Theta}_t^{s,+}$ (respectively $\hat{\Theta}_t^{s,-}$) are public time-$t$-values of compound options, and $\bigl(B_t^{r_D}\bigr)^{-1}\hat{\Theta}_t^{s,+}$, $\bigl(B_t^{r_D}\bigr)^{-1}\hat{\Theta}_t^{s,-}$ are the discounted values (and $\mathbb{F}$-measurable $\mathbb{G}$-martingales). More precisely, $\hat{\Theta}_t^{s,+}$ is the value of a call option on the option $\Phi(S_T)$ with maturity $s$ and strike zero, and $\hat{\Theta}_t^{s,-}$ is the respective put option. We have
  \[
    \hat{V}(\tau_I, S_{\tau_I}) \ind_{\{ \hat{V}(\tau_I, S_{\tau_I}) >0\}} = \Exx^\Qxx \Bigl[ \Bigl. \hat{V}(\tau_I, S_{\tau_I}) \ind_{\{ \hat{V}(\tau_I, S_{\tau_I}) >0\}} \, \Bigr\vert \mathcal{F}_{\tau_I} \Bigr] = \Theta_{\tau_I}^{\tau_I,+}
  \]
Note that using Lemma \ref{stoppro}, we get for the first part of the $\DVA$ type term
  \begin{align*}
    & \phantom{==}\mathbb{E}^{\Qxx} \Bigl[ \Bigl.   \bigl(B_{\tau_I}^{r_f}\bigr)^{-1}\bigl(1-(1- \alpha) L_I\bigr)\hat{V}(\tau_I,S_{\tau_I}) \Gamma_t^{\tau_I}  \ind_{\{t \leq \tau_I < \tau_C \wedge T; \hat{V}(\tau_I,S_{\tau_I}) \geq 0\}} \, \Bigr\vert \, \mathcal{G}_t \Bigr]\\
    & = \bigl(1-(1- \alpha) L_I\bigr) e^{-( 2( r_f - r_D)t} \Bigl(1+ \frac{r_D-r_f}{\hIQ}\Bigr) \cdot \mathbb{E}^{\Qxx} \Bigl[ \Bigl.   e^{\lambda \tau_I} \bigl(B_{\tau_I}^{r_D}\bigr)^{-1} \hat{\Theta}^{\tau_I,+}_{\tau_I} \ind_{\{t \leq \tau_I < \tau_C \wedge T\}} \, \Bigr\vert \, \mathcal{G}_t \Bigr]\\
    & = \bigl(1-(1- \alpha) L_I\bigr)e^{-2( r_f - r_D)t} \Bigl(1+ \frac{r_D-r_f}{\hIQ}\Bigr)\frac{\hIQ}{\lambda-\hIQ} e^{\lambda t} \\
    & \phantom{=} \cdot \biggl(\frac{\hCQ}{\lambda-\hCQ-\hIQ} \Bigl(e^{(\lambda-\hCQ-\hIQ){(T-t)}}-1\Bigr)- 1 + e^{-\hCQ{(T-t)}} e^{(\lambda-\hIQ)(T-t)}\biggr)\frac{1}{B_t^{r_D}}\hat{\Theta}^{\tau_I,+}_t \ind_{\{\tau \geq t\}} \\
    & = \bigl(1-(1- \alpha) L_I\bigr)e^{(r_D - r_f)t} \Bigl(1+ \frac{r_D-r_f}{\hIQ}\Bigr)\frac{\hIQ}{\lambda-\hIQ} \\
    & \phantom{=} \cdot \biggl(\frac{\hCQ}{\lambda-\hCQ-\hIQ} \Bigl(e^{(\lambda-\hCQ-\hIQ){(T-t)}}-1\Bigr)- 1 + e^{-\hCQ{(T-t)}} e^{(\lambda-\hIQ)(T-t)}\biggr)\frac{1}{B_t^{r_D}}\hat{\Theta}^{\tau_I,+}_t \ind_{\{\tau \geq t\}}
  \end{align*}
with $\lambda :=  r_f - r_D$. Notice that by Assumption \ref{ass:necessary} we have
  \[
    \lambda = r_f-r_D < r^i + h_i^{\Px} - r_D = h_i^{\Qxx}
  \]
for $i \in \{ I, C\}$, and thus the conditions of Lemma \ref{stoppro} are satisfied. In the second equality above, we have used that $\Theta_{\tau_I}^{\tau_I,+} = \hat{V}(\tau_I,S_{\tau_I}) \ind_{\{\hat{V}(\tau_I,S_{\tau_I})>0\}}$ along with the fact that the process $\bigl(B_{s}^{r_D}\bigr)^{-1} \Theta_{s}^{\tau_I,+}$ is a $\mathcal{G}_t$ martingale, implying that $\mathbb{E}^{\Qxx} \Bigl[\bigl(B_{\tau_I}^{r_D}\bigr)^{-1} \Theta_{\tau_I}^{\tau_I,+} \bigg| \mathcal{G}_t\Bigr] = \bigl(B_{t}^{r_D}\bigr)^{-1} \Theta_{t}^{\tau_I,+}$. Similarly, we have for the second part
  \begin{align*}
    & \phantom{=}  \mathbb{E}^{\Qxx} \Bigl[\bigl(B_{\tau_I}^{r_f}\bigr)^{-1}\hat{V}(\tau_I,S_{\tau_I}) \Gamma_t^{\tau_I}  \ind_{\{t \leq \tau_I < \tau_C \wedge T; \hat{V}(\tau_I,S_{\tau_I}) < 0\}}  \, \Bigr\vert \, \mathcal{G}_t \Bigr]  = \\
    & = e^{(r_D - r_f)t}  \Bigl(1+ \frac{r_D-r_f}{\hCQ}\Bigr)  \frac{\hIQ}{\lambda-\hIQ}  \\
    & \phantom{=} \cdot \biggl(\frac{\hCQ}{\lambda-\hCQ-\hIQ} \Bigl(e^{(\lambda-\hCQ-\hIQ){(T-t)}}-1\Bigr)- 1 + e^{-\hCQ{(T-t)}} e^{(\lambda-\hIQ)(T-t)}\biggr)\frac{1}{B_t^{r_D}}\hat{\Theta}^{\tau_I,-}_t \ind_{\{\tau \geq t\}}.
  \end{align*}
For the $\CVA$ type term, we obtain
  \begin{align*}
    & \phantom{=}\mathbb{E}^{\Qxx} \Bigl[ \Bigl.   \bigl(B_{\tau_C}^{r_f}\bigr)^{-1}\bigl(1-(1- \alpha) L_C\bigr)\hat{V}(\tau_C,S_{\tau_C}) \Gamma_{t}^{\tau_C}  \ind_{\{t \leq \tau_C < \tau_I \wedge T; \hat{V}(\tau_C,S_{\tau_C}) < 0\}}  \, \Bigr\vert \, \mathcal{G}_t \Bigr] \\
    & = \bigl(1-(1- \alpha) L_C\bigr) e^{(r_D - r_f)t} \Bigl(1+ \frac{r_D-r_f}{\hCQ}\Bigr) \frac{\hCQ}{\lambda-\hCQ} \\
    & \phantom{=} \cdot \biggl(\frac{\hIQ}{\lambda-\hCQ-\hIQ}  \Bigl(e^{(\lambda-\hCQ-\hIQ){(T-t)}}-1\Bigr)- 1 + e^{-\hIQ{(T-t)}} e^{(\lambda-\hCQ)(T-t)}\biggr)\frac{1}{B_t^{r_D}}\hat{\Theta}^{\tau_C,-}_t \ind_{\{\tau \geq t\}}
  \end{align*}
and
  \begin{align*}
    & \phantom{=} \mathbb{E}^{\Qxx} \Bigl[ \bigl(B_{\tau_C}^{r_f}\bigr)^{-1}\hat{V}(\tau_C,S_{\tau_C}) \Gamma_{t}^{\tau_C}  \ind_{\{t \leq \tau_C < \tau_I \wedge T; \hat{V}(\tau_I,S_{\tau_I}) \geq 0\}}  \, \Bigr\vert \, \mathcal{G}_t \Bigr] \\
    & = e^{(r_D - r_f)t} \Bigl(1+ \frac{r_D-r_f}{\hCQ}\Bigr) \frac{\hCQ}{\lambda-\hCQ} \\
    & \phantom{=} \cdot \biggl(\frac{\hIQ}{\lambda-\hCQ-\hIQ} \Bigl(e^{(\lambda-\hCQ-\hIQ){(T-t)}}-1\Bigr)- 1 + e^{-\hIQ{(T-t)}} e^{(\lambda-\hCQ)(T-t)}\biggr)\frac{1}{B_t^{r_D}}\hat{\Theta}^{\tau_C,+}_t \ind_{\{\tau \geq t\}}
  \end{align*}

Finally, the funding costs of the collateralization procedure are given by
  \begin{align*}
    &\phantom{==} \mathbb{E}^{\Qxx} \Bigl[ \Bigl. \int_{t \wedge \tau}^\tau \bigl(r_c -r_f\bigr) \tilde{C}_s \Gamma_t^s \, ds \, \Bigr\vert \, \mathcal{G}_t \Bigr] \\
    & = \alpha \bigl(r_f -r_c\bigr) \int_t^T \mathbb{E}^{\Qxx} \Bigl[ \Bigl.  \bigl(B_s^{r_f}\bigr)^{-1} \hat{V}(s,S_s) \Gamma_t^s \ind_{\{ \tau \geq s\}} \, \Bigr\vert \, \mathcal{G}_t \Bigr] \, ds \\
    & = \alpha \bigl(r_f -r_c\bigr) \int_t^T  \frac{B_s^{r_D} }{B_s^{r_f} } e^{(r_f-r_D)(s-t)}  e^{(r_f-r_D)(s-t)}  \bigl(B_t^{r_D}\bigr)^{-1} \hat{V}(t,S_t) \mathbb{E}^{\Qxx} \bigl[ \bigl. \ind_{\{ \tau \geq s\}} \, \bigr\vert \, \mathcal{G}_t \bigr] \, ds \\
    & = \alpha \bigl(r_f -r_c\bigr) \bigl(B_t^{r_D}\bigr)^{-1} \hat{V}(t,S_t) \int_t^T  \frac{B_s^{r_D} }{B_s^{r_f} } e^{(r_f-r_D-\hIQ)(s-t)}  e^{(r_f-r_D-\hCQ)(s-t)}  \, ds \ind_{\{\tau \geq t\}}  \\
    & =  \alpha\frac{r_f -r_c}{\hIQ + \hCQ - \lambda}\frac{1}{B_t^{r_D} } \biggl( \frac{B_t^{r_D} }{B_t^{r_f}} - \frac{B_{T}^{r_D} }{B_T^{r_f}} e^{(r_f-r_D-\hIQ) (T-t)} e^{ (r_f-r_D-\hCQ )(T-t)} \biggr) \hat{V}(t,S_t) \ind_{\{ \tau \geq t\}}.
  \end{align*}
The situation simplifies considerably if the payoff $\Phi$ is non-negative, as for short call or put positions, or non-positive as for long ones.
In the case of a non-negative payoff we have
  \[
    \hat{\Theta}^{\tau_C,+}_t = \hat{\Theta}^{\tau_I,+}_t = \hat{V}(t,S_t), \qquad \hat{\Theta}^{\tau_C,-}_t = \hat{\Theta}^{\tau_I,-}_t = 0.
  \]
Recalling that $V_t=B_t^{r_f} \tilde{V}_t$, we obtain
  \begin{align*}
    \XVA_t & = \Biggl(\frac{B_t^{r_f}}{B_t^{r_D}} \biggl(\frac{B_T^{r_D}}{B_T^{r_f}} e^{(r_f-r_D-\hIQ)(T-t)}e^{(r_f-r_D-\hCQ)(T-t)} \\
    & \phantom{=} + \bigl(1-(1- \alpha) L_I\bigr)e^{({r_D - r_f})t} \Bigl(1+ \frac{r_D-r_f}{\hIQ}\Bigr) \frac{\hIQ}{\lambda-\hIQ} \\
    & \phantom{=} \cdot \Bigl(\frac{\hCQ}{\lambda-\hCQ-\hIQ} \bigl(e^{(\lambda-\hCQ-\hIQ){(T-t)}}-1\bigr)- 1 + e^{-\hCQ{(T-t)}} e^{(\lambda-\hIQ)(T-t)}\Bigr)\\
    & \phantom{=} + e^{({r_D - r_f})t} \Bigl(1+ \frac{r_D-r_f}{\hCQ}\Bigr) \frac{\hCQ}{\lambda-\hCQ} \\
    & \phantom{=} \cdot \Bigl(\frac{\hIQ}{\lambda-\hCQ-\hIQ} \bigl(e^{(\lambda-\hCQ-\hIQ){(T-t)}}-1\bigr)- 1 + e^{-\hIQ{(T-t)}} e^{(\lambda-\hCQ)(T-t)}\Bigr)\\
    & \phantom{=} + \alpha\frac{r_f -r_c}{\hCQ + \hIQ - \lambda} \Bigl( \frac{B_t^{r_D} }{B_t^{r_f}} - \frac{B_{T}^{r_D} }{B_T^{r_f}} e^{(r_f-r_D-\hCQ) (T-t)} e^{ (r_f-r_D-\hIQ )(T-t)} \Bigr)  \biggr) - 1\Biggr) \hat{V}(t,S_t) \ind_{\{\tau \geq t\}} \\
    & :=  ( A - 1) \hat{V}(t,S_t) \ind_{\{\tau \geq t\}},
  \end{align*}
where $A$ is short-hand notation for the adjustment factor, i.e. $A = V_t /\hat{V}(t,S_t)$. Moreover, applying Theorem \ref{thm:strat} we obtain that on the set $\{t<\tau\}$, the optimal stock and bond investment strategies are given by
  \begin{align*}
    \xi_t &= A \times \hat{V}_S(t,S_t) ,\\
    \xi_{t}^i &=  \frac{A \times \hat{V}(t,S_t)- \theta_i(\hat v(t,S_t))}{e^{-(r_D + h_i^{\Qxx}) (T-t)}} ,~i\in\{I,C\}.
  \end{align*}
where we have used that on  $\{t < \tau \}$ the bond price is $P_{t}^i = e^{-(r_D + h_i^{\Qxx}) (T-t)}$ and used, with slight abuse of notation, $\theta_{i},~i\in\{I,C\}$,as the contract at default of either party, similar to the way it was defined in \eqref{eq:theta}. Specifically,
  \begin{align*}
    \theta_{C}(\hat v) & := \hat v +   L_C ((1-\alpha) \hat{v} )^{-},  \\
    \theta_{I}(\hat v) & := \hat v - L_I  ((1-\alpha)\hat{v} )^{+}.
  \end{align*}
This can be verified by a direct application of the ``key lemma'' \cite[Lemma 5.1.2]{bielecki01} to the bond price in \eqref{eq:priceproc}.

Figure \ref{fig:pricedec1} plots the different terms in the decomposition given by \eqref{eq:reprV}. Under the parameter configuration corresponding to a safer scenario (left panel), the predominant contribution is given by the default and collateralization-free price under funding constraints. The contributions coming from the closeout positions realized at default time of the trader and of his counterparty are small. If the default risk of trader and counterparty are very high (right panel), the contribution to the XVA coming from the closeout payoff becomes larger. A direct comparison of the bottom panels of figures \ref{fig:defXVA} and \ref{fig:defXVArisk} shows that a similar number of trader{'}s bond shares are used to replicate the jumps to the closeout values. However, as the return on the trader{'}s bond under the valuation measure is higher under the risky scenario, it gives a larger contribution to the XVA.

  \begin{figure}
    \includegraphics[width=0.49\textwidth]{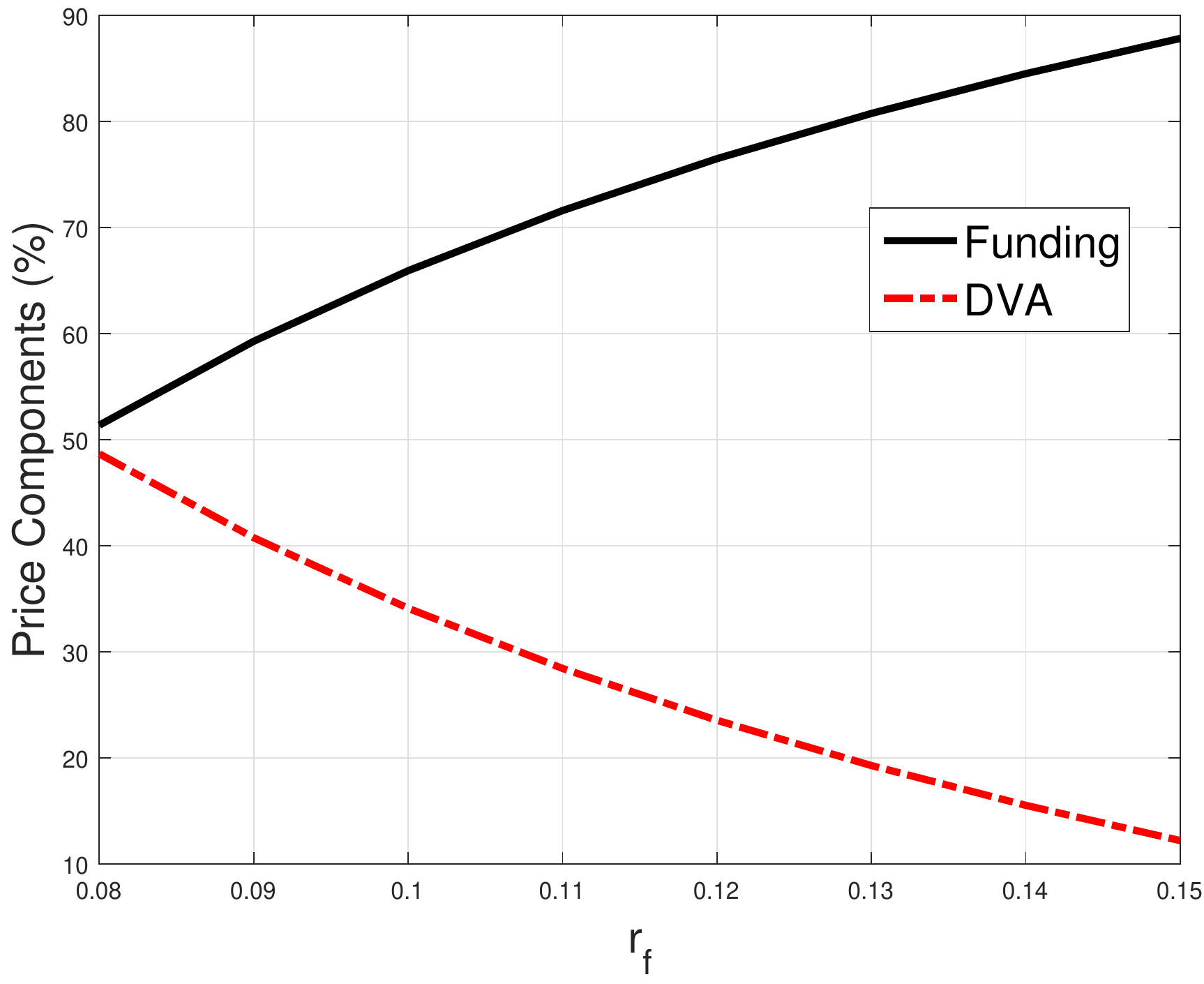}
    \includegraphics[width=0.49\textwidth]{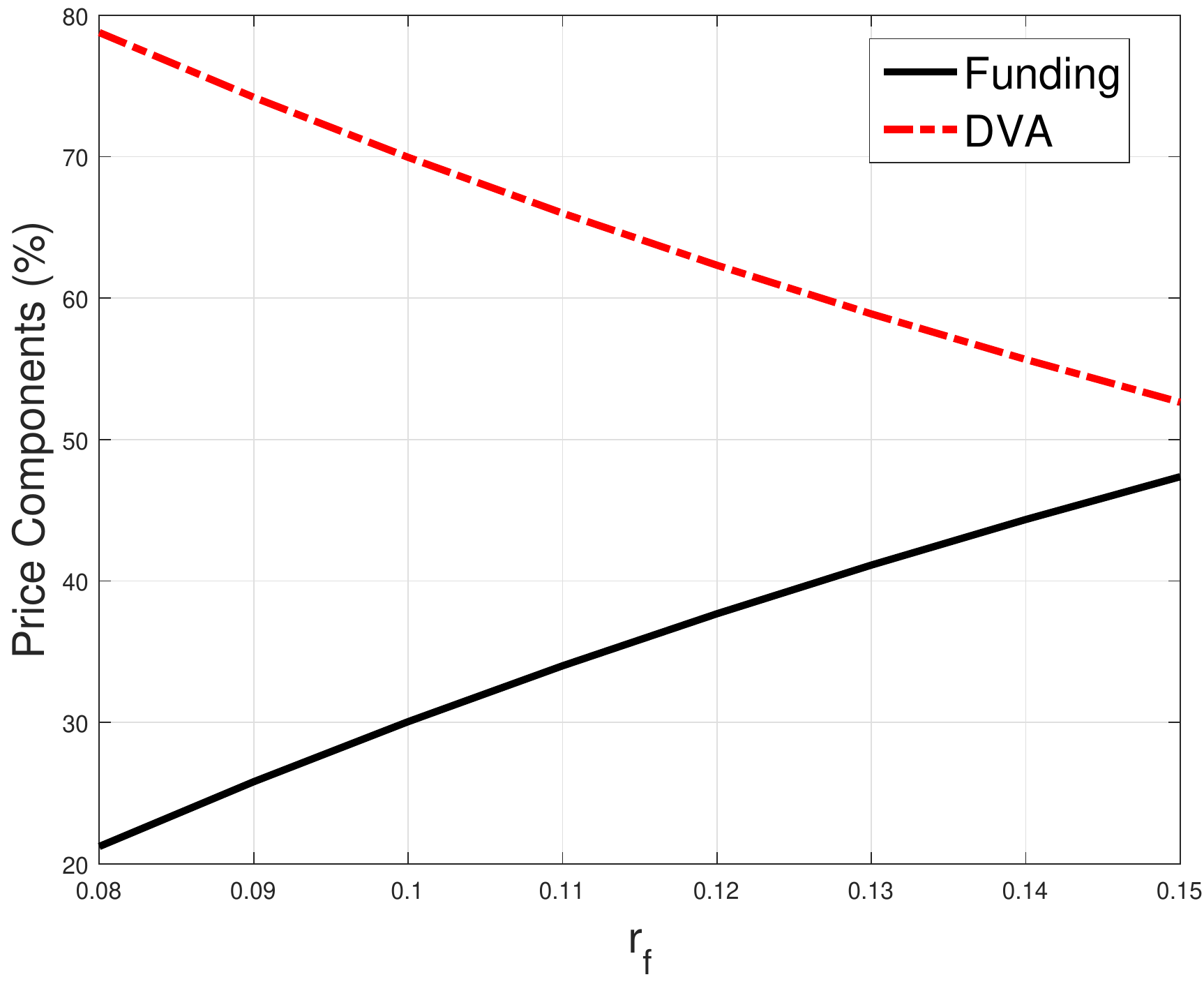}
    \caption{Price decompositions (in \% of the market value) in the Piterbarg model with defaults. We set $r_D = 0.05$, $r_c = 0.01$, $\sigma = 0.2$, $\alpha=0.25$, $L_C = 0.5$, $L_I = 0.5$. The claim is an at-the-money European call option with maturity $T=1$. Left graph: $h_I^{\Qxx} = 0.15$, $h_C^{\Qxx} = 0.2$. Right graph: $h_I^{\Qxx} = 0.5$, $h_C^{\Qxx} = 0.5$.}
  \label{fig:pricedec1}
  \end{figure}

When $\alpha$ is high, the XVA is positive, i.e. $V_0 - \hat{V}(0,S_0) > 0$. The jump to the closeout position when the trader defaults first is given by $V_{\tau_I} - \hat{V}(\tau_I, S_{\tau_I})(1-(1-\alpha)L_I)$. It is higher than the corresponding jump when the counterparty is the first to default, given by $V_{\tau_C} - \hat{V}(\tau_C, S_{\tau_C})$. Consequently, we expect a higher number of trader bond shares to be used in the replication strategy relatively to the number of counterparty bond shares. This is reflected in figures \ref{fig:defXVA} and \ref{fig:defXVArisk}. If $\alpha$ is small, the XVA is negative and the size of the jump to the closeout value at default of the counterparty, $V_{\tau_C} - \hat{V}(\tau_C, S_{\tau_C})$, would also be negative. Hence, to replicate this position the trader would need to short counterparty bonds. However, if the trader were to default first, the size of the jump would be given by $V_{\tau_I} - \hat{V}(\tau_I, S_{\tau_I}) (1-(1-\alpha)L_I)$ and be positive. Consequently, the trader would go long on his own bonds to replicate the jump. As $\alpha$ increases, the size of the jump to the closeout payoff decreases, leading the trader to reduce the position in his own bonds.

A direct comparison of figures \ref{fig:defXVA} and \ref{fig:defXVArisk} suggests that under the risky scenario and for not too high collateralization levels, the trader increases the size of the long position in his own bonds and partly finances it using the proceeds coming from a short position in counterparty bonds. Indeed, when $\alpha = 0$ and $r_f=0.08$ the trader shorts a higher number of counterparty bonds under the risky scenario relatively to the safe scenario. At the same time, he increases the number of shares of his own bonds under the risky scenario.

  \begin{figure}[ht!]
    \centering
      \includegraphics[width=6.6cm]{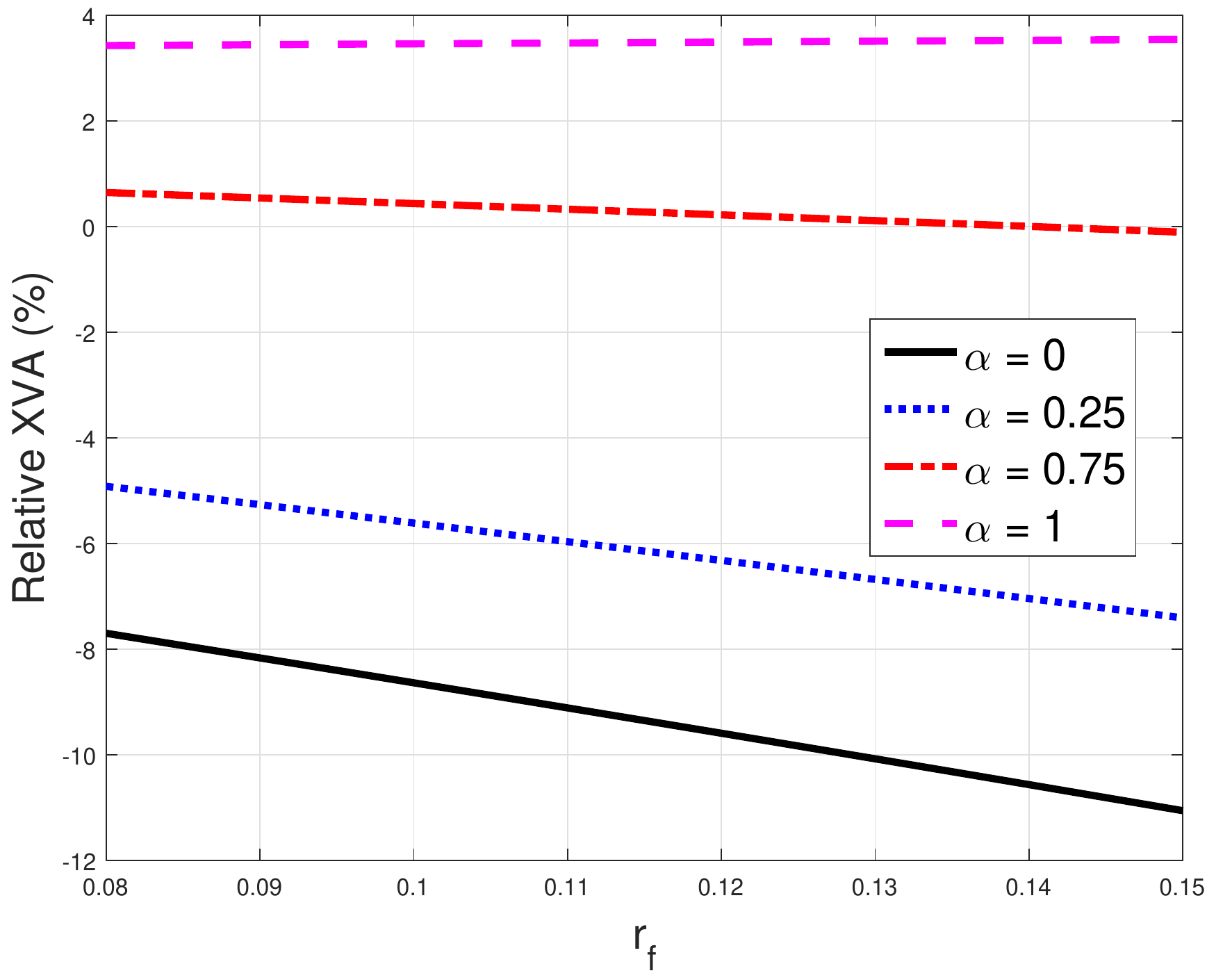}
      \includegraphics[width=6.6cm]{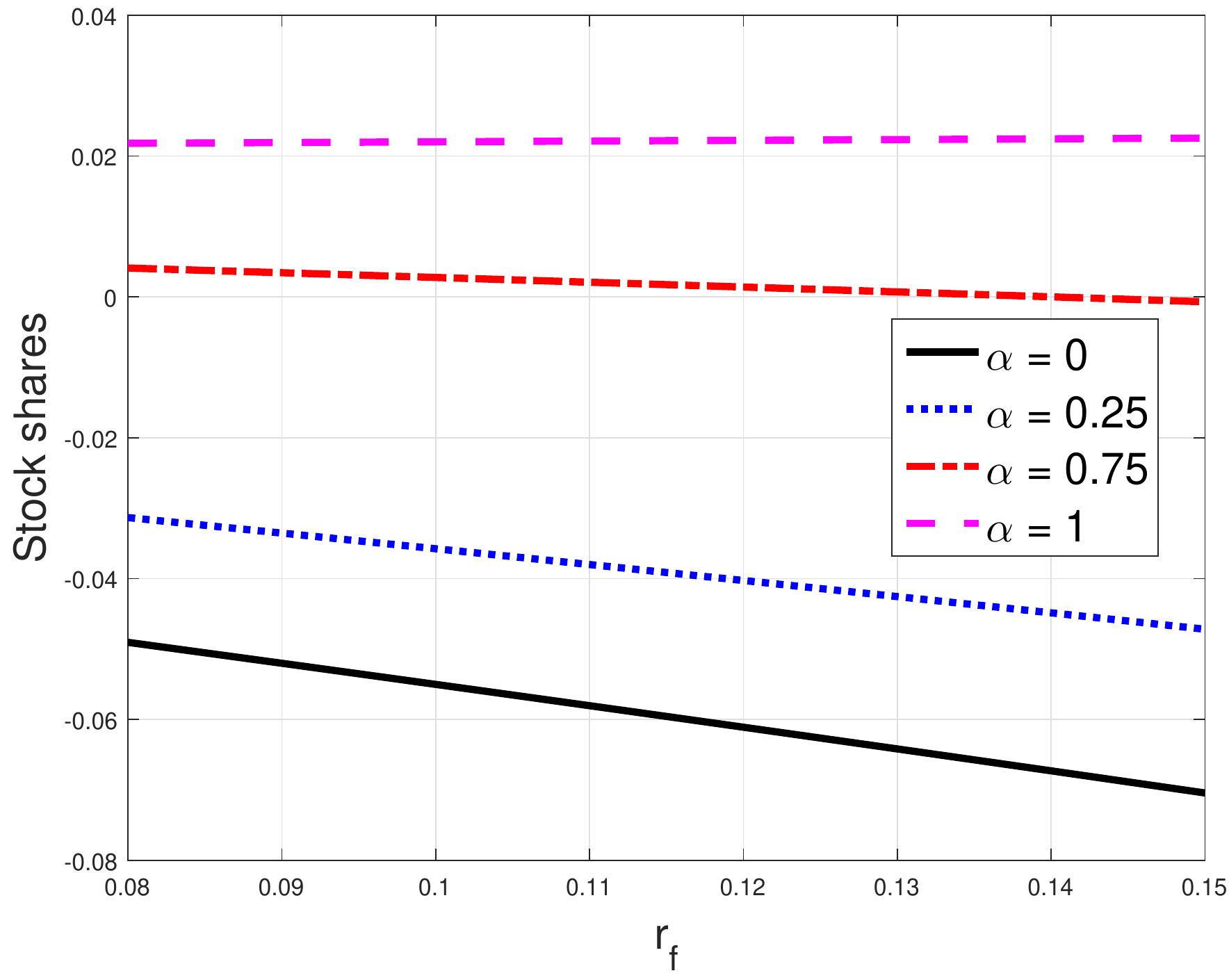}
      \includegraphics[width=6.6cm]{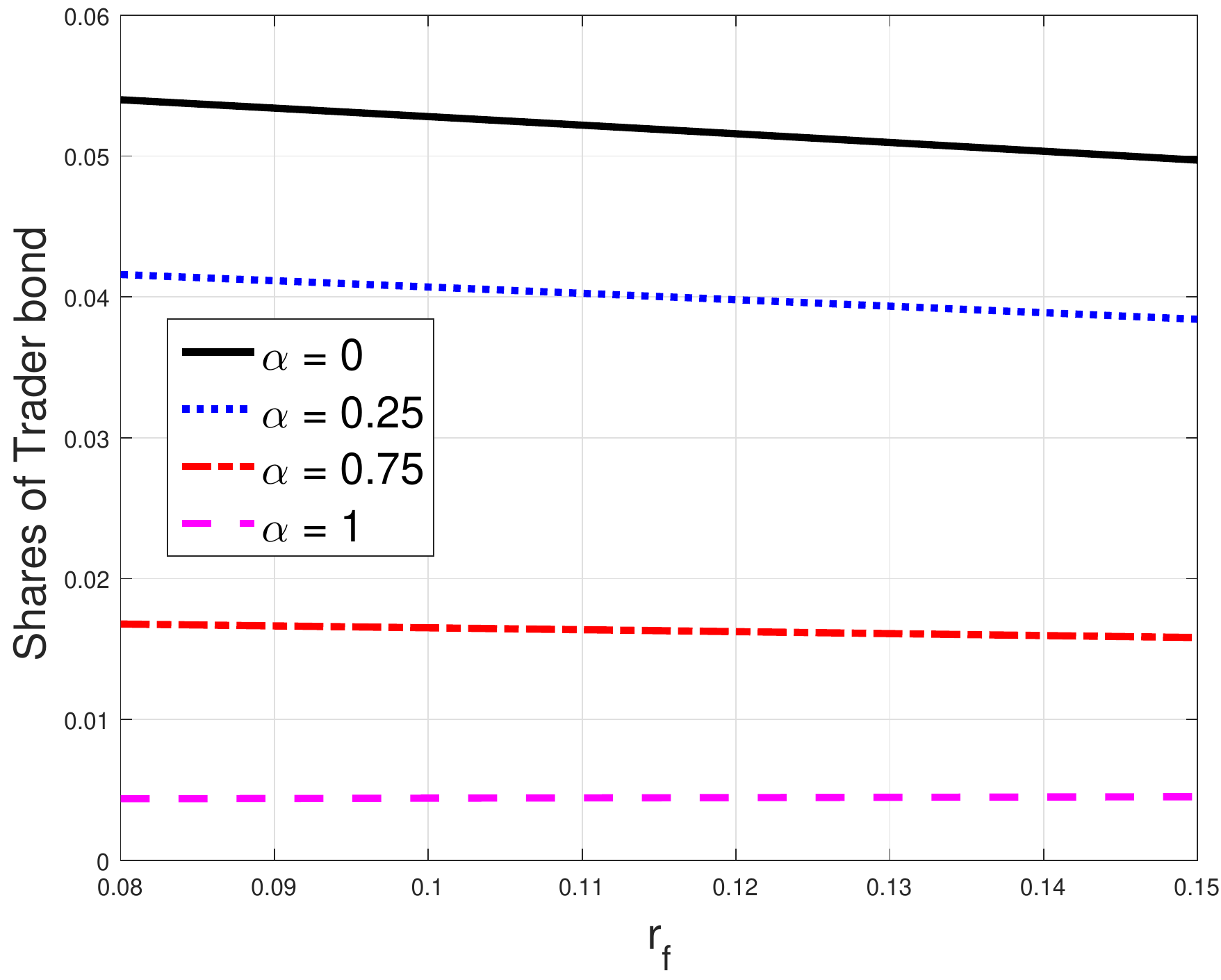}
      \includegraphics[width=6.6cm]{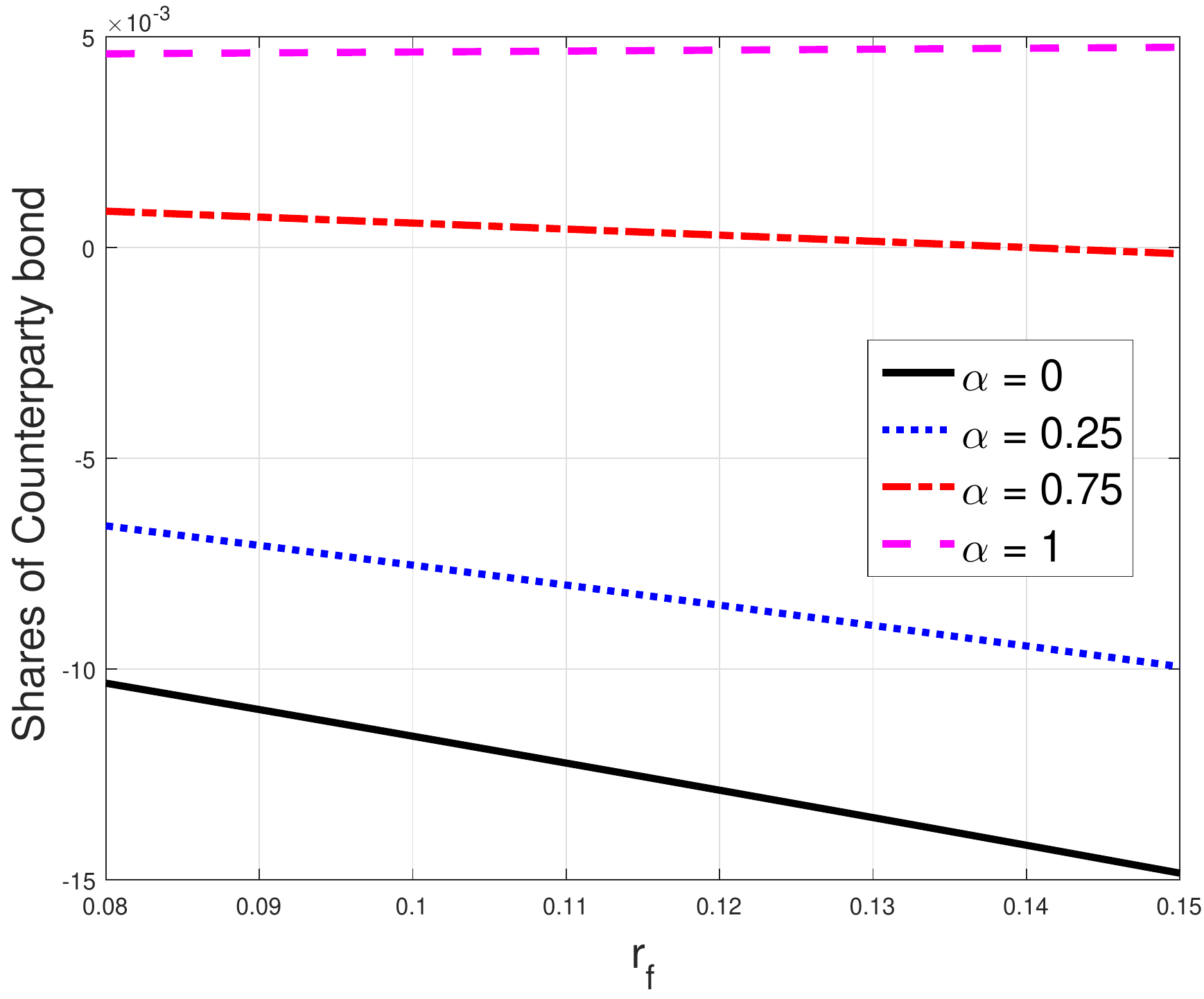}
    \caption{Top left: XVA as a function of $r_f$ for different $\alpha$. Top right: Number of stock shares in the replication strategy. Bottom left: Number of trader bond shares in the replication strategy. Bottom right: Number of counterparty bond shares in the replication strategy. We set $r_D = 0.05$, $r_c = 0.01$, $\sigma = 0.2$, $L_C = 0.5$, $L_I = 0.5$, $h_I^{\Qxx} = 0.15$ and $h_C^{\Qxx} = 0.2$. The claim is an at-the-money European call option with maturity $T=1$.}
  \label{fig:defXVA}
  \end{figure}

  \begin{figure}[ht!]
    \centering
      \includegraphics[width=6.6cm]{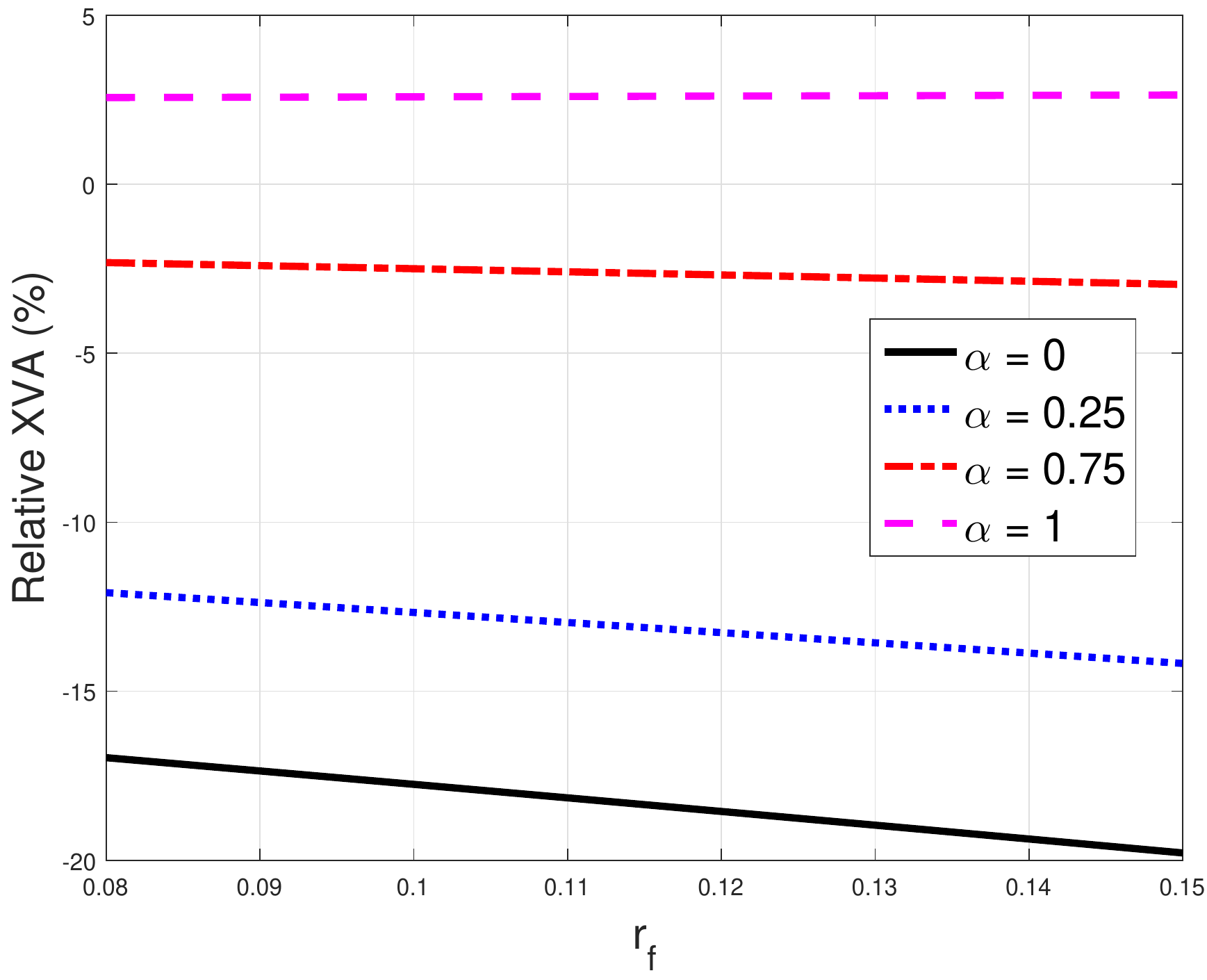}
      \includegraphics[width=6.6cm]{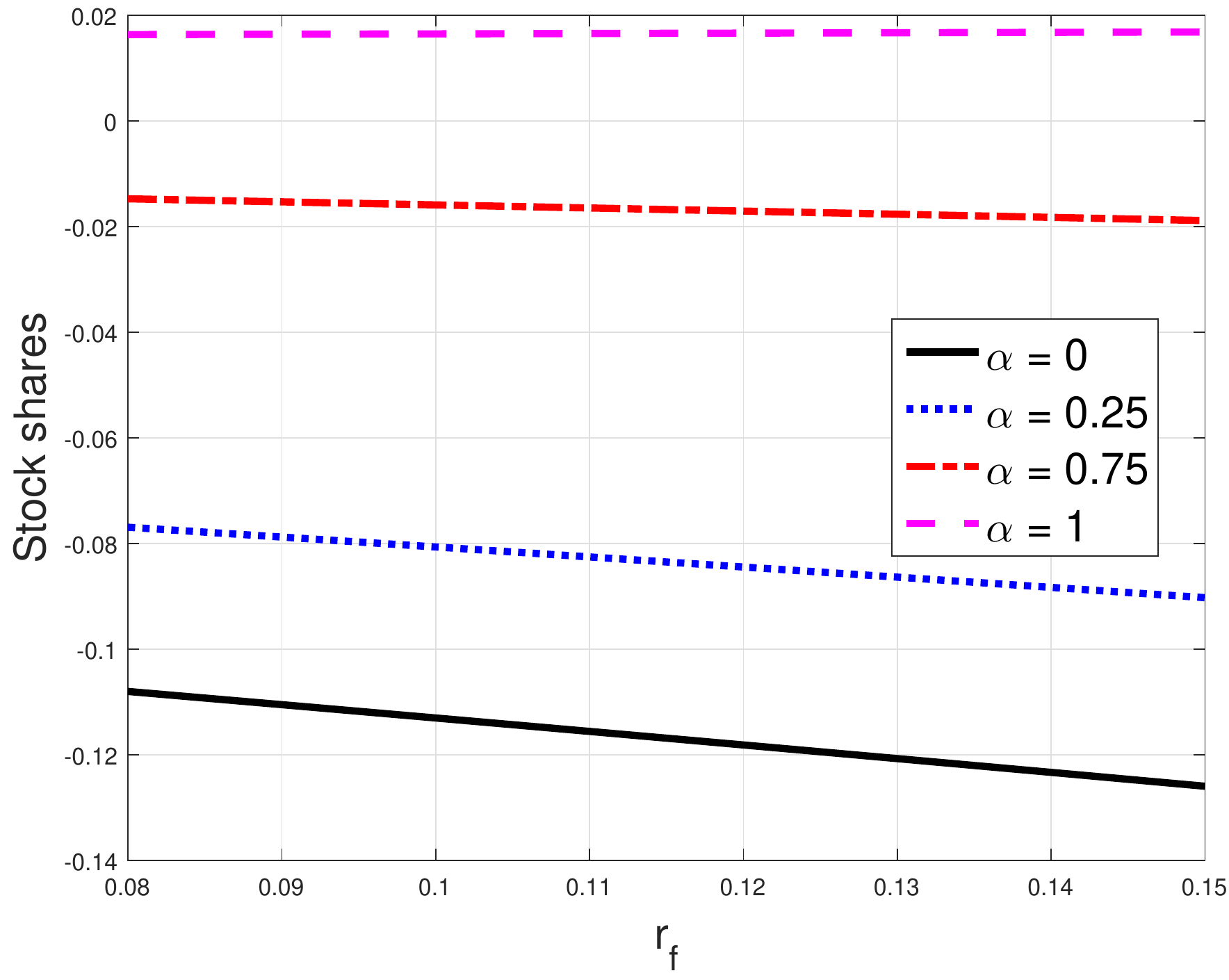}
      \includegraphics[width=6.6cm]{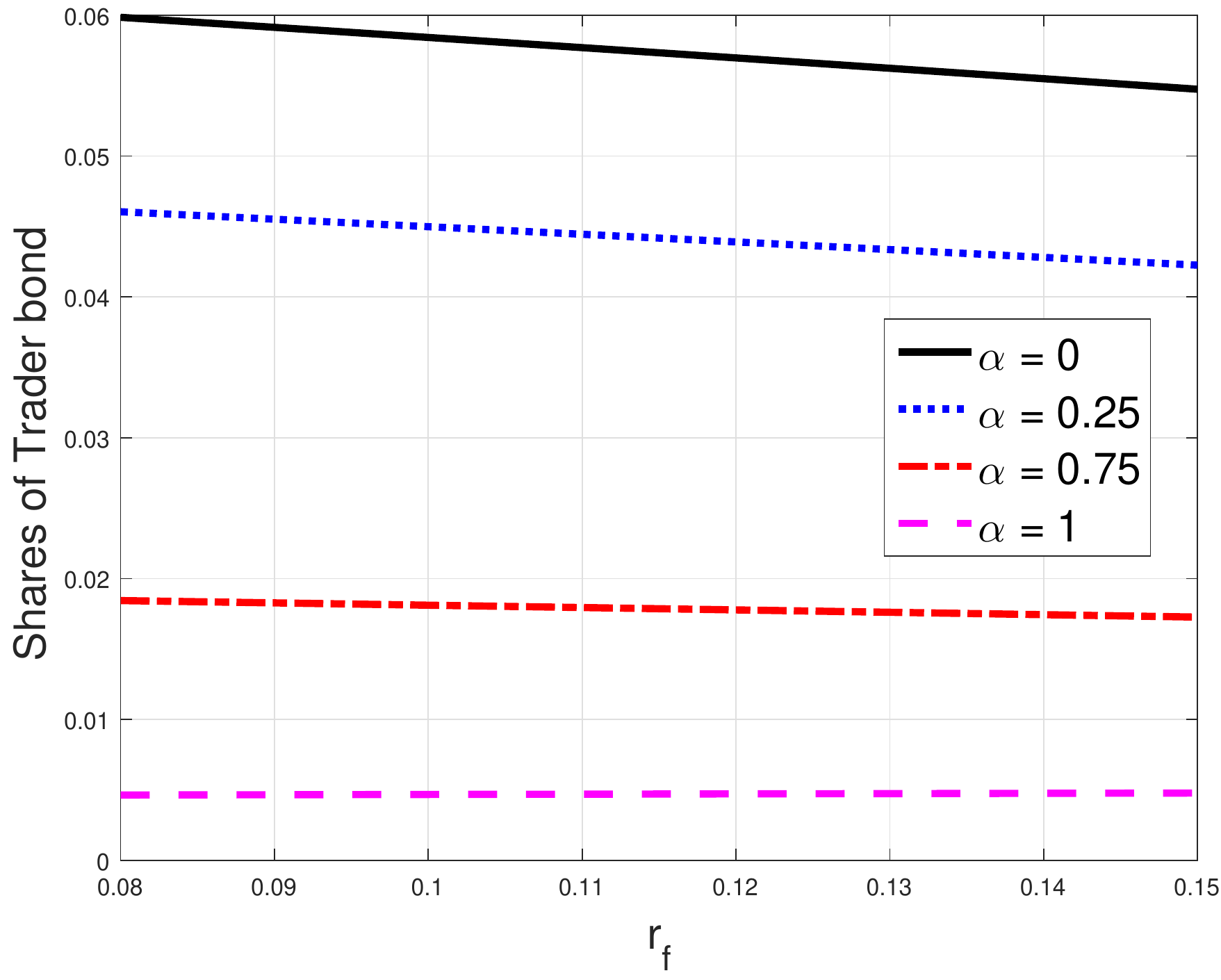}
      \includegraphics[width=6.6cm]{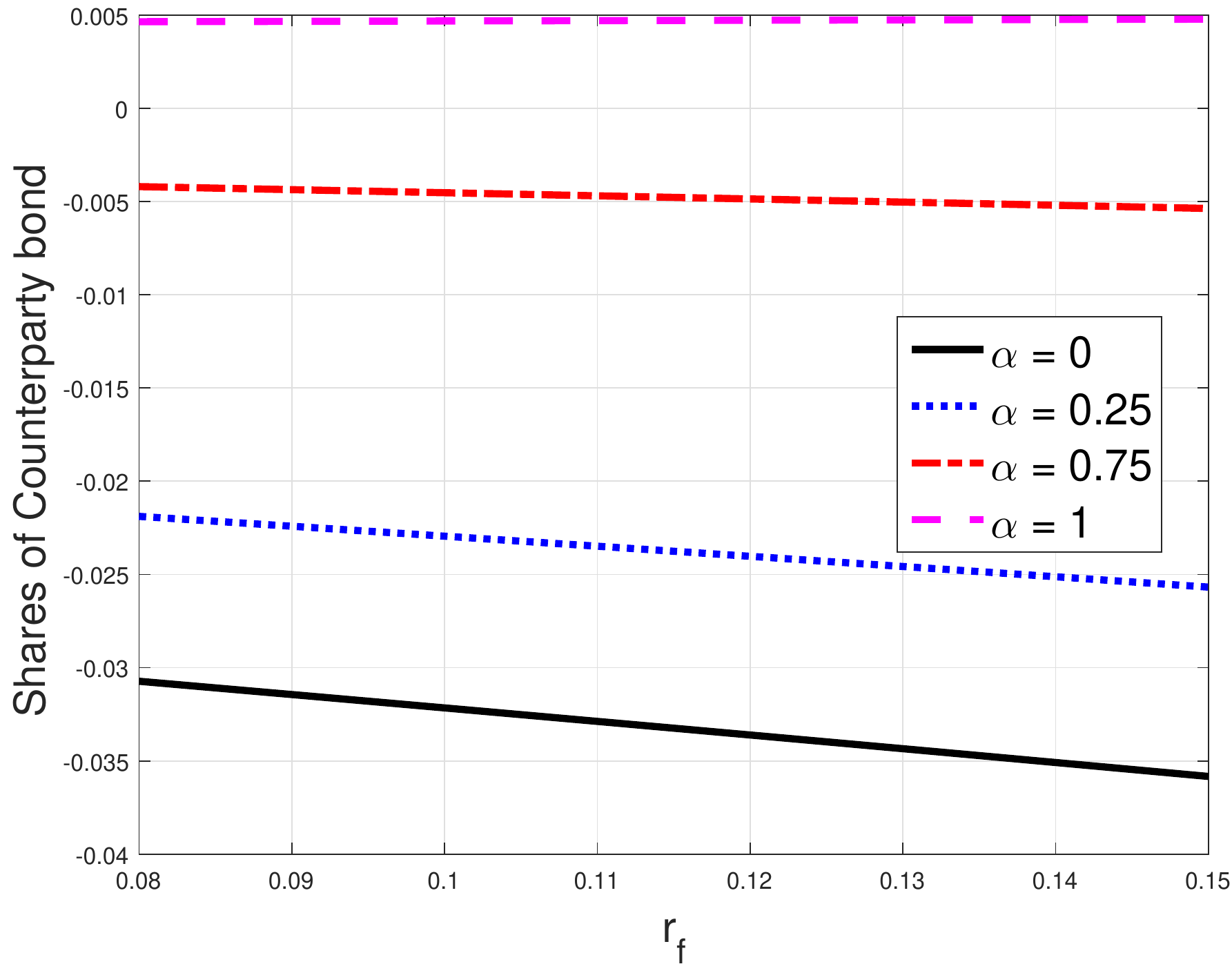}
    \caption{Top left: XVA as a function of $r_f$ for different $\alpha$. Top right: Number of stock shares in the replication strategy. Bottom left: Number of trader bond shares in the replication strategy. Bottom right: Number of counterparty bond shares in the replication strategy. We set $r_D = 0.05$, $r_c = 0.01$, $\sigma = 0.2$, $L_C = 0.5$, $L_I = 0.5$, $h_I^{\Qxx} = 0.5$ and $h_C^{\Qxx} = 0.5$. The claim is an at-the-money European call option with maturity $T=1$.}
  \label{fig:defXVArisk}
  \end{figure}

\subsection{Counterparty adjustments as special cases of XVA} \label{sec:CVAspecial}
We show that bilateral credit valuation adjustments are recovered from our total valuation adjustment formula in the absence of funding costs. This means that all rates are the same, namely $r_D=r_f=r_c$, hence public and private valuations coincide. Under this setting, it holds that $\Gamma_t^s = 1$ for any $0\leq t < s\leq T$.
Hence, Eq.~\eqref{eq:reprV} reduces to
  \begin{align}\label{eq:simpl}
    \tilde{V}_t \ind_{\{\tau \geq t\}} & = \mathbb{E}^{\Qxx} \Bigl[ \Bigl.  \bigl(B_T^{r_f}\bigr)^{-1} \Phi(S_T) \ind_{\{\tau = T\}} + \bigl(B_\tau^{r_f}\bigr)^{-1} \theta_{\tau}(\hat{V})   \ind_{\{t < \tau <T\}} \Bigr. \Bigr. \, \Bigr\vert \, \mathcal{G}_t \Bigr] \nonumber \\
    & = \mathbb{E}^{\Qxx} \Bigl[ \Bigl. \bigl(B_T^{r_f}\bigr)^{-1} \Phi(S_T) - \bigl(B_T^{r_f}\bigr)^{-1} \Phi(S_T)  \ind_{\{t < \tau <T\}} + \bigl(B_\tau^{r_f}\bigr)^{-1}\theta_{\tau}(\hat{V})  \ind_{\{t < \tau <T\}} \Bigr. \Bigr. \, \Bigr\vert \, \mathcal{G}_t \Bigr] \nonumber \\
    & = \ind_{\{\tau \geq t\}} (B_t^{r_f})^{-1} \hat{V}(t,S_t) - \mathbb{E}^{\Qxx} \Bigl[\bigl(B_T^{r_f}\bigr)^{-1} \Phi(S_T)  \ind_{\{t < \tau <T\}} \Bigr\vert \,  \mathcal{G}_t \Bigr]  \nonumber \\
& \; \; +  \mathbb{E}^{\Qxx} \Bigl[\bigl(B_\tau^{r_f}\bigr)^{-1}
    \hat{V}(\tau,S_{\tau})  \ind_{\{t < \tau <T\}} \Bigr\vert \, \mathcal{G}_t \Bigr]
 \nonumber \\
& \; \;
 +  \mathbb{E}^{\Qxx} \Bigl[ \ind_{\{t < \tau <T\}} \bigl(B_\tau^{r_f}\bigr)^{-1}  \ind_{\{\tau_C<\tau_I \}} L_C (-(\hat{V}(\tau,S_{\tau}) - C_{\tau-}))^+\Bigr\vert \, \mathcal{G}_t\Bigr]
\nonumber  \\
    & \; \; - \mathbb{E}^{\Qxx} \Bigl[ \ind_{\{t < \tau <T\}} \bigl(B_\tau^{r_f}\bigr)^{-1}  \ind_{\{\tau_I<\tau_C \}} L_I  ((\hat{V}(\tau,S_{\tau}) - C_{\tau-}))^+   \Bigr\vert \, \mathcal{G}_t \Bigr].
  \end{align}
Above, the last equality is obtained using the definition of $\hat{V}$ given in Eq.~\eqref{eq:rulecoll} and the expression for the closeout amount given in Eq.\eqref{eq:theta}. Next, let us analyze the third term in the above decomposition. Using again the definition of $\hat{V}$, we obtain from the law of iterated expectations that
  \begin{align*}
    \mathbb{E}^{\Qxx} \Bigl[\bigl(B_\tau^{r_f}\bigr)^{-1} \hat{V}(\tau,S_{\tau})  \ind_{\{t < \tau <T\}} \Bigr\vert \, \mathcal{G}_t \Bigr] & = \mathbb{E}^{\Qxx} \Bigl[\bigl(B_{\tau}^{r_f}\bigr)^{-1} \mathbb{E}^{\Qxx} \Bigl[\bigl(B_T^{r_f} \bigr)^{-1} B_{\tau}^{r_f} \Phi(S_T) \Bigr\vert \, \mathcal{G}_{\tau} \Bigr] \ind_{\{t < \tau <T\}} \Bigr\vert \, \mathcal{G}_t \Bigr] \\
    & = \mathbb{E}^{\Qxx} \Bigl[\bigl(B_T^{r_f} \bigr)^{-1} \Phi(S_T) \ind_{\{t < \tau <T\}}  \Bigr\vert \, \mathcal{G}_t  \Bigr]
  \end{align*}
Plugging the above expression into~\eqref{eq:simpl} leads to
  \begin{align*}
    \tilde{V}_t \ind_{\{\tau \geq t\}} & = \ind_{\{\tau \geq t\}} (B_t^{r_f})^{-1} \hat{V}(t,S_t)   + \mathbb{E}^{\Qxx} \Bigl[ \ind_{\{t < \tau <T\}} \bigl(B_\tau^{r_f}\bigr)^{-1} \ind_{\{\tau_C<\tau_I \}} L_C (-(\hat{V}(\tau,S_{\tau}) - C_{\tau-}))^+   \Bigr\vert \, \mathcal{G}_t \Bigr]
 \\
    & - \mathbb{E}^{\Qxx} \Bigl[ \ind_{\{t < \tau <T\}} \bigl(B_\tau^{r_f}\bigr)^{-1} \ind_{\{\tau_I<\tau_C \}} L_I  ((\hat{V}(\tau,S_{\tau}) - C_{\tau-}))^+   \Bigr\vert \, \mathcal{G}_t \Bigr]
  \end{align*}
Multiplying both left and right hand side by $B_t^{r_f}$, we can rearrange the above expression and obtain
  \begin{align*}
    & \ind_{\{\tau \geq t\}} \bigl(V_t -\hat{V}(t,S_t)\bigr)  = \\
    & \phantom{=} \mathbb{E}^{\Qxx} \Bigl[ \ind_{\{t < \tau <T\}} \frac{B_t^{r_f}}{B_\tau^{r_f}} \Bigl( \ind_{\{\tau_C<\tau_I \}} L_C (-(\hat{V}(\tau,S_{\tau}) - C_{\tau-}))^+  -  \ind_{\{\tau_I<\tau_C \}} L_I  ((\hat{V}(\tau,S_{\tau}) - C_{\tau-}))^+ \Bigr)  \Bigr\vert \, \mathcal{G}_t \Bigr]  ,
  \end{align*}
which may be written as
  \begin{align*}
    \XVA_t = \DVA_t - \CVA_t,
  \end{align*}
where $\DVA$ and $\CVA$ are defined as
  \begin{align*}
    \DVA_t &:=  \mathbb{E}^{\Qxx} \biggl[\ind_{\{t < \tau <T\}} \ind_{\{\tau_I<\tau_C \}}\frac{B_t^{r_f}}{B_{\tau_I}^{r_f}}  L_I  (\hat{V}(\tau,S_{\tau}) - C_{\tau-})^+  \biggr\vert \, \mathcal{G}_t \biggr] \\
    \CVA_t &:= \mathbb{E}^{\Qxx} \biggl[\ind_{\{t < \tau <T\}} \ind_{\{\tau_C<\tau_I \}} \frac{B_t^{r_f}}{B_{\tau_C}^{r_f}} L_C (-(\hat{V}(\tau,S_{\tau}) - C_{\tau-}))^+ \biggr\vert \, \mathcal{G}_t \biggr],
  \end{align*}
and represent, respectively, the debit and credit valuation adjustments, see also equation 3.4 in \cite{capmig}.

\section{Conclusions} \label{sec:conclusions}
We have developed an arbitrage-free pricing framework for the total valuation adjustments (XVA) of a European claim that the trader purchases or sells to his counterparty. Our analysis takes into account funding spreads generated from the gap between borrowing and lending rates to the treasury, the repo market, collateral servicing costs, and counterparty credit risk.
The wealth process replicating payoffs of long and short positions in the traded claim can be characterized in terms of a nonlinear BSDE with random terminal condition,
associated with the closeout payment occurring when either party defaults. We have derived the no-arbitrage band associated with the prices of buyer and seller{'}s XVA, and
shown that it collapses to a unique XVA price only in the absence of rate asymmetries.
Such a setting corresponds to a generalization of \cite{Piterbarg}{'}s model for which we are able to derive an explicit expression of the XVA prices. This expression decomposes the
adjustment into four contributing components: funding costs of the default and collateral-free position, CVA, DVA, and servicing costs of the collateral procedure.

\appendix
\section{BSDEs -- Existence, Uniqueness, and Comparison of Solutions}\label{App_BSDE}

We prove here some results about BSDEs needed in the main body of the paper. We keep the general notation used throughout the paper and consider BSDEs of the form

  \begin{equation}\label{eq:genBSDE}
    \left\{ \begin{array}{rl}
    - dV_t & = f\bigl(\omega, t, V_t, Z_t, Z_t^I, Z_t^C\bigr) \, dt - Z_t \, dW_t^{\Qxx} - Z_t^I \, d\varpi_t^{I, \Qxx} - Z_t^C \, d\varpi_t^{C,\Qxx}\\
    V_\tau & = \zeta \ind_{\{\tau < T\}} + \vartheta \ind_{\{\tau = T\}} .
    \end{array} \right.
  \end{equation}
We call $(f,\zeta, \vartheta)$ the \textit{data} of the BSDE \eqref{eq:genBSDE}.

\begin{assumption}\label{BSDE_ass}
We will use the following assumption:
  \begin{enumerate}
    \item[\textbf{(A1)}] The terminal value satisfies $\zeta \in L^2\bigl(\Omega,\mathcal{F}_\tau, \Qxx\bigr)$ and  $\vartheta \in L^2\bigl(\Omega,\mathcal{F}_T, \Qxx\bigr)$.
    \item[\textbf{(A2)}] The generator $f  \, : \, \Omega \times [0,T] \times \R^4 \to \R$ is predictable and \textit{Lipschitz continuous} in $v$ $z$, $z^I$ and $z^C$, i.e. there is $K >0$ such that for all $(v_1, z_1, z^I, z^C)$, $(v_2, z_2, z^I, z^C) \in \R^4$ we have
    \[
      \bigl\vert f(\omega, t, v_1, z_1, z_1^I, z_1^C) - f(\omega, t, v_2, z_2, z_2^I, z_2^C)\bigr\vert \leq K \Bigl( \bigl\vert v_1 - v_2 \bigr\vert + \bigl\vert z_1 - z_2 \bigr\vert + \bigl\vert z^I_1 - z^C_2 \bigr\vert + \bigl\vert z^C_1 - z^C_2 \bigr\vert\Bigr)
    \]
    almost surely almost everywhere.
    \item[\textbf{(A3)}] $\E^{\Qxx} \Bigl[ \int_0^T \vert f(t, 0, 0, 0, 0) \vert^2 \, dt \Bigr] < \infty$.
    \item[\textbf{(A4)}] There is a constant $K >0$ such that for all  $(v, z, z_1^I, z_1^C)$, $(v, z, z_2^I, z_2^C) \in \R^4$ we have
    \[
      f(\omega, t, v, z, z_1^I, z^C) - f(\omega, t, v, z, z_2^I, z^C) \leq h_I^{\Qxx} \bigl(z_1^I - z_2^I\bigr)^- + K \bigl(z_1^I - z_2^I\bigr)^+
    \]
    and
    \[
      f(\omega, t, v, z, z^I, z_1^C) - f(\omega, t, v, z, z^I, z_2^C) \leq h_C^{\Qxx} \bigl(z_1^C - z_2^C\bigr)^- + K \bigl(z_1^C - z_2^C\bigr)^+
    \]
    almost surely almost everywhere.
  \end{enumerate}
\end{assumption}

To give an existence and uniqueness result for the BSDE \eqref{eq:genBSDE}, we introduce the following notation: Let $\mathbb{H}_t^2$ denote the space of predictable processes $Z \, : \, \Omega \times [0,t] \to \R$ satisfying
  \[
    \Exx^{\Qxx} \biggl[ \int_0^t \vert Z_s \vert ^2 \, ds \biggr] < \infty
  \]
and $\mathbb{S}_t^2$ the space of predictable processes $Z \, : \, \Omega \times [0,t] \to \R$ satisfying
  \[
    \Exx^{\Qxx} \biggl[ \sup_{s \in [0,t]} \vert Z_s \vert ^2 \biggr] < \infty.
  \]

\begin{theorem}\label{thm:main}
Assume that the data $(f,\zeta,\vartheta)$ satisfy the conditions (A1)--(A3) of \ref{BSDE_ass}. Then the BSDE \eqref{eq:genBSDE} admits a unique solution $(V^{\Qxx}, Z^{\Qxx}, Z^{I,\Qxx}, Z^{C,\Qxx}) \in \mathbb{S}_{\tau}^2 \times \mathbb{H}_{\tau}^2 \times \mathbb{H}_{\tau}^2 \times \mathbb{H}_{\tau}^2$.
\end{theorem}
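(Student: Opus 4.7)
The plan is to apply the classical Pardoux--Peng Banach fixed-point scheme, adapted to the current setting in which the filtration $\mathbb{G}$ carries a Brownian motion and two purely discontinuous compensated jump processes, and in which the horizon is the random time $\tau \leq T$. First I would reduce to a fixed-horizon BSDE on $[0,T]$ by setting $\tilde{f}(t,\cdot) := f(t,\cdot)\ind_{\{t \leq \tau\}}$ and defining $\xi := \zeta\ind_{\{\tau < T\}} + \vartheta\ind_{\{\tau = T\}}\in L^2(\Omega,\mathcal{G}_T,\Qxx)$ (using (A1)), with $(Z,Z^I,Z^C)$ declared to vanish on $(\tau,T]$. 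Assumptions (A2) and (A3) transfer unchanged.

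The Picard map $\Psi$ is built on the product Banach space $\mathcal{B} := \mathbb{S}^2_\tau \times (\mathbb{H}^2_\tau)^3$. For $(U,Y,Y^I,Y^C)\in\mathcal{B}$, form the square-integrable $(\Qxx,\mathbb{G})$-martingale
$$M_t := \Exx^{\Qxx}\!\left[\xi + \int_0^T \tilde{f}(s,U_s,Y_s,Y^I_s,Y^C_s)\,ds \,\Big|\,\mathcal{G}_t\right].$$
Because the $(H)$-hypothesis is in force (Section~\ref{sec:valuation}), the triple $(W^{\Qxx}, \varpi^{I,\Qxx}, \varpi^{C,\Qxx})$ has the predictable representation property with respect to $\mathbb{G}$, so that $dM_t = Z_t\,dW_t^{\Qxx} + Z_t^I\,d\varpi_t^{I,\Qxx} + Z_t^C\,d\varpi_t^{C,\Qxx}$. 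Setting $V_t := M_t - \int_0^t \tilde{f}(s,\ldots)\,ds$ then defines $\Psi(U,Y,Y^I,Y^C) := (V,Z,Z^I,Z^C)$, which lies in $\mathcal{B}$ by the BDG inequality combined with (A1)--(A3).

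To show $\Psi$ is a contraction I would equip $\mathcal{B}$ with the equivalent weighted norm
$$\|(V,Z,Z^I,Z^C)\|_\beta^2 := \Exx^{\Qxx}\!\left[\int_0^\tau e^{\beta s}\bigl(|V_s|^2 + |Z_s|^2 + (1-H_s^I)\hIQ|Z^I_s|^2 + (1-H_s^C)\hCQ|Z^C_s|^2\bigr) ds\right],$$
the weights $(1-H_s^i)h_i^{\Qxx}$ coming from the predictable quadratic variation $\langle \varpi^{i,\Qxx}\rangle_t = \int_0^t(1-H_u^i)h_i^{\Qxx}\,du$ on $[0,\tau]$. Denoting differences of two inputs and their images by $\Delta$, Itô's formula applied to $e^{\beta t}|\Delta V_t|^2$ on $[0,\tau]$ and taking expectation yields, after the Brownian and jump martingale terms are discarded as true martingales, a bound of the form
$$\beta\,\Exx^{\Qxx}\!\left[\int_0^\tau e^{\beta s}|\Delta V_s|^2 ds\right] + \Exx^{\Qxx}\!\left[\int_0^\tau e^{\beta s}\bigl(|\Delta Z_s|^2 + (1-H_s^I)\hIQ|\Delta Z^I_s|^2 + (1-H_s^C)\hCQ|\Delta Z^C_s|^2\bigr) ds\right] \leq 2\,\Exx^{\Qxx}\!\left[\int_0^\tau e^{\beta s}|\Delta V_s|\,|\Delta \tilde{f}_s| ds\right].$$
Using the Lipschitz bound (A2) on $\Delta \tilde{f}$ together with Young's inequality $2ab \leq \varepsilon a^2 + \varepsilon^{-1} b^2$ absorbs the terms carrying $\Delta V, \Delta Z, \Delta Z^I, \Delta Z^C$ on the right into the left-hand side, and leaves $(\mathrm{const}/\beta)\|\Delta(U,Y,Y^I,Y^C)\|_\beta^2$. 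Choosing $\beta$ large enough in terms of $K$ gives $\|\Psi(\cdot) - \Psi(\cdot')\|_\beta^2 \leq \tfrac{1}{2}\|\cdot - \cdot'\|_\beta^2$, and Banach's theorem delivers a unique fixed point, i.e.\ the unique solution.

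The main technical obstacle is the careful bookkeeping in the enlarged filtration: one must invoke the predictable representation of $\mathbb{G}$-martingales along the three drivers, and in every Itô-isometry estimate for the jump stochastic integrals one must carry the correct factor $(1-H_s^i)\hIQ$ (and only integrate up to $\tau$), since naively weighting $|Z^i|^2$ without it would not match the compensator and the required absorption would fail. Once these predictable-quadratic-variation identities are used consistently in the Itô expansion and in the definition of $\|\cdot\|_\beta$, the remaining estimates are standard.
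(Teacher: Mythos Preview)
Your approach is correct and takes a genuinely different route from the paper. The paper does not carry out the Picard iteration directly in $\mathbb{G}$; instead it \emph{embeds} the problem into a L\'{e}vy framework. Concretely, it replaces the compensated default indicators $\varpi^{i,\Qxx}$ by compensated Poisson random measures $\tilde{N}^{\Qxx'}_i$ with the same intensities, works in the larger filtration $\mathbb{G}'$ generated by the Brownian motion and these random measures, and then invokes off-the-shelf existence/uniqueness results for L\'{e}vy-driven BSDEs (Delong, Section~3.1). After obtaining the unique solution on $[0,T]$ in $\mathbb{G}'$, it proves (via a Pardoux-type argument) that the solution is constant after $\tau'$ with vanishing $Z$-components, so restricting to $\{t\le\tau'\}$---where $\mathbb{G}'$ and $\mathbb{G}$, and the Poisson measures and $\varpi^{i,\Qxx}$, agree---yields the solution to the original BSDE.

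The trade-offs are clear. Your direct argument is self-contained and avoids the artificial Poisson embedding, but it hinges on the predictable representation property for $(W^{\Qxx},\varpi^{I,\Qxx},\varpi^{C,\Qxx})$ in the progressively enlarged filtration $\mathbb{G}$; note that the $(H)$-hypothesis alone does not deliver this PRP---you need the specific structure here (Brownian filtration enlarged by two independent exponential times that are independent of $W$), for which PRP is indeed known (e.g.\ Kusuoka, or Bielecki--Rutkowski). The paper's embedding sidesteps this entirely because PRP in a L\'{e}vy filtration is automatic, and the contraction estimate is outsourced to the literature; the price is the extra change-of-filtration bookkeeping and the verification that the solution freezes after $\tau'$. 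Both routes arrive at the same place.
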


\begin{proof}
We start writing the BSDE~\eqref{eq:genBSDE} driven by Cox processes into an equivalent BSDE driven by L\'{e}vy processes. Let $\tilde{N}^\Px_i(dt,dz)$, $i \in \{I,C\}$, be compensated Poisson random measures on $(\Omega, \mathcal{G}, \Px)$ with L\'{e}vy-compensators $h_i^{\Px}\, dt \delta_1(dz)$. Define the filtration $\mathbb{G}' := (\mathcal{G}'_t, t \geq 0)$, where
\[
\mathcal{G}'_t= \sigma\bigl(W_u^\Px, \tilde{N}^\Px_I([0,u],A), \tilde{N}^\Px_C([0,u],A); u \leq t, A \in \mathcal{B}(\R)\bigr).
\]
Note that by setting
  \[
    \tau_i'=\inf\{t\geq0;\ \tilde{N}^\Px_i([0,t], \R)=1\}, \qquad i\in\{I,C\},
  \]
and $\tau' = \tau_I' \wedge \tau_C' \wedge T$, we can identify $\mathbb{G}'_{\tau'}$ with $\mathbb{G}_\tau$. Moreover we can define the change of measure $\Qxx'$ via
\begin{equation}
\frac{d\Qxx'}{d\Px} \bigg|_{\mathcal{G}'_T} = e^{\frac{r_D-\mu}{\sigma}W_T^{\Px} - \frac{(r_D-\mu)^2}{2\sigma^2}T} e^{(2r_D-r^I-r^C)T} \prod_{i \in \{I,C\}} \prod_{0 \leq s \leq T}\biggl(1+\frac{r^i - r_D}{h_i^{\Px}} \ind_{\bigl\{\tilde{N}^\Px_i(s,\R) \neq \tilde{N}^\Px_i(s-,\R)\bigr\}}\biggr)
  \end{equation}
and note that $\Qxx'\vert_{\mathcal{G}'_{\tau'}} = \Qxx\vert_{\mathcal{G}_{\tau}}$. An application of Girsanov{'s} theorem gives that
$W^{\Qxx'}_t = W^{\Px} + \frac{\mu-r_D}{\sigma} t$, $t \geq 0$, is a $\Qxx'$ Brownian motion. Moreover, the compensated Poisson random measures  $\tilde{N}^{ \Qxx'}_i(dt,dz)$, $i \in \{I,C\}$, with L\'{e}vy-compensators $h_i^{\Qxx'}\, dt \delta_1(dz)$ are given via $h_i^{\Qxx'} = r^i - r_D + h_i^{\Px}$.

We define the function $\bar{f} \, : \, \Omega \times [0,T] \times \R^4 \to \R$
  \[
    \bar{f}(\omega, t, z, z^I, z^C) = f(\omega, t, z, z^I, z^C) \ind_{\{ t \leq \tau'\}}
  \]
and set $\bar{\vartheta} := \zeta \ind_{\{\tau' < T\}} + \vartheta \ind_{\{\tau' = T\}} \in L^2\bigl(\Omega,\mathcal{G}'_T, \tilde{\Px}'\bigr)$.
By a straightforward extension of the results of \cite[Section 3.1]{Delong} to two driving compensated random measures, the BSDE
  \begin{equation}\label{eq:barBSDE}
    \bar{V}_t = \bar{\vartheta} + \int_t^T \bar{f}(\omega, s, \bar{Z}_s, \bar{Z}_s^I, \bar{Z}_s^C) \, ds - \int_t^T \bar{Z}_s\, dW_s^{\Qxx'}  -\int_t^T \int_\R \bar{Z}_s^I \, \tilde{N}^{\Qxx'}_I(ds,dz)  - \int_t^T \int_\R \bar{Z}_s^C \, \tilde{N}^{\Qxx'}_C(ds,dz)
  \end{equation}
admits a unique solution in $\mathbb{S}_T^2 \times \mathbb{H}_T^2 \times \mathbb{H}_T^2 \times \mathbb{H}_T^2$. Generalizing an argument by Pardoux \cite[Proposition 2.6]{Pardoux}, this solution satisfies $\bar{V}_t = \bar{V}_{t \wedge \tau'}$ and $\bar{Z}_t \ind_{\{t > \tau'\}} = \bar{Z}_t^I \ind_{\{t > \tau'\}} = \bar{Z}_t^C \ind_{\{t > \tau'\}} =0$. Indeed, on the one hand we have (recall that $\tau' \leq T$ by definition)
  \[
    \bar{V}_{\tau'} = \bar{\vartheta} - \int_{\tau'}^T \bar{Z}_s\, dW_s^{\Qxx'}  -\int_{\tau'}^T \int_\R  \bar{Z}_s^I \, \tilde{N}^{\Qxx'}_I(ds,dz)  - \int_{\tau'}^T \int_\R \bar{Z}_s^C \, \tilde{N}^{\Qxx'}_C(ds,dz)
  \]
and therefore
  \[
    \bar{V}_{\tau'} = \Exx^{\tilde{\Qxx'}} \bigl[ \bar{\vartheta} \,\vert\, \mathcal{G}'_{\tau'} \bigr] = \bar{\vartheta}.
  \]
On the other hand, applying It\^{o}'s formula,
  \begin{align*}
    \bar{V}_{\tau'}^2 & = \bar{\vartheta}^2 - 2 \int_{\tau'}^T \bar{V}_s \bar{Z}_s\, dW_s^{\Qxx'}  - 2 \int_{\tau'}^T \int_\R \bar{V}_s \bar{Z}_s^I \, \tilde{N}^{\Qxx'}_I(ds,dz)  - 2 \int_{\tau'}^T \int_\R \bar{V}_s \bar{Z}_s^C \, \tilde{N}^{\Qxx'}_C(ds,dz) - \int_{\tau'}^T \bar{Z}_s^2 \, ds \\
    &\phantom{==} - \sum_{\tau' < s \leq T} \Bigl( \bar{V}_s^2 - \bar{V}_{s-}^2 - 2 \bar{V}_{s-} \bigl(\bar{V}_s - \bar{V}_{s-}\bigr)\Bigr)
  \end{align*}
Thus, taking conditional expectations and noting that the stochastic integrals are all true martingales as $(\bar{V}, \bar{Z}, \bar{Z}^I, \bar{Z}^C) \in \mathbb{S}_T^2 \times \mathbb{H}_T^2 \times \mathbb{H}_T^2 \times \mathbb{H}_T^2$, we get that
  \[
    \Exx^{\Qxx'}\biggr[ \Bigl. \int_{\tau'}^T \bar{Z}_s^2 \, ds + \sum_{\tau' < s \leq T} \bigl( \bar{V}_s - \bar{V}_{s-} \bigr)^2 \, \Bigr\vert \, \mathcal{G}'_{\tau'} \biggr] = 0
  \]
and therefore $\int_{\tau'}^T \bar{Z}_s^2 \, ds = 0$ a.s and $\bar{V}_t \ind_{\{t > \tau'\}} = \bar{V}_{\tau'} \ind_{\{t > \tau'\}}$ a.s. Therefore, as $f$ and $\bar{f}$ agree up to $\tau'$ and $\varpi_t^{i,\Qxx}$ and $\tilde{N}^{\Qxx'}_i([0,t],\R)$, $i \in \{I,C\}$, agree on $[0,\tau']$, the unique solution to \eqref{eq:barBSDE} restricted to $\{ t \leq \tau'\}$ is also a solution to \eqref{eq:genBSDE}. Moreover this solution is unique as every solution $(V^{\Qxx'}, Z^{\Qxx'}, Z^{I,\Qxx'}, Z^{C,\Qxx'})$ of \eqref{eq:genBSDE} can be extended to a solution of \eqref{eq:barBSDE} by setting $\bar{V}'_s = V^{\Qxx'}_{s \wedge \tau}$ and $\bar{Z}'_s = Z^{\Qxx'}_s \ind_{\{s \leq \tau\}}$, $\bar{Z}_s^{'I} = Z_s^{I,\Qxx'} \ind_{\{s \leq \tau\}}$, $\bar{Z}_s^{'C} = Z_s^{C,\Qxx'} \ind_{\{s \leq \tau\}}$.
\end{proof}

In a similar way, we can prove the following comparison theorem. Here, we only provide a sketch of the proof.

\begin{theorem}\label{thm:compare}
Let $(f_1,\zeta_1,\vartheta_1)$ and $(f_2, \zeta_2, \vartheta_2)$ be data of BSDEs satisfying the assumptions (A1) -- (A4) of \ref{BSDE_ass} and such that $f_1(t,v,z,z^I,z^C) \ge f_2(t,v,z,z^I,z^C)$ for all $(t,v,z,z^I,z^C) \in [0,T] \times \R^4$, $\zeta_1\ge \zeta_2$ a.s. and $\vartheta_1\ge\vartheta_2$ a.s. Then $V^1_{t \wedge \tau} \geq V^2_{t  \wedge \tau}$ a.s. for all $t \in [0,T]$. Moreover, if $V^1_{t_0 \wedge \tau} = V^2_{t_0  \wedge \tau}$ a.s. for some $t_0 \in [0,T]$, then $V^1_{t \wedge \tau} = V^2_{t \wedge \tau}$ a.s for all $t \in [t_0,T]$.
\end{theorem}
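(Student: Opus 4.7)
The plan is to mirror the Lévy-process reduction from the proof of Theorem \ref{thm:main}: extend both BSDEs to their Lévy-driven counterparts on $(\Omega, \mathcal{G}'_T, \Qxx')$, where the comparison question becomes classical. Assumption (A4) is precisely the Royer-type structural condition needed for comparison of BSDEs with jumps, so once the reduction is in place the argument reduces to linearization of the driver-difference followed by a Girsanov-style change of measure (equivalently, a linear BSDE representation).

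Concretely, let $(\bar V^i, \bar Z^i, \bar Z^{i,I}, \bar Z^{i,C})$, $i=1,2$, be the unique Lévy-driven solutions produced by Theorem \ref{thm:main}. Set $\delta V := \bar V^1 - \bar V^2$ and analogously for the $Z$-components. Adding and subtracting $\bar f_1(t,\bar V^2, \bar Z^2, \bar Z^{2,I}, \bar Z^{2,C})$, I decompose
\[
\bar f_1(t,\bar V^1,\bar Z^1,\bar Z^{1,I},\bar Z^{1,C}) - \bar f_2(t,\bar V^2,\bar Z^2,\bar Z^{2,I},\bar Z^{2,C}) = a_t \delta V_t + b_t \delta Z_t + \gamma^I_t \delta Z^I_t + \gamma^C_t \delta Z^C_t + \Delta_t,
\]
where $a_t, b_t$ are bounded by the Lipschitz constant $K$ of (A2), the nonnegative slack term $\Delta_t := (\bar f_1 - \bar f_2)(t,\bar V^2,\bar Z^2,\bar Z^{2,I},\bar Z^{2,C}) \geq 0$ by hypothesis, and $\gamma^I_t, \gamma^C_t$ are the incremental-ratio coefficients in the jump variables. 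Applying (A4) to the pair $(\bar Z^{1,i}, \bar Z^{2,i})$ and its reversal yields the two-sided bound $-h_i^{\Qxx'} \leq \gamma^i_t \leq K$ for $i\in\{I,C\}$, which is the crucial structural fact. The terminal condition is $\delta V_\tau = (\zeta_1-\zeta_2)\ind_{\{\tau<T\}} + (\vartheta_1-\vartheta_2)\ind_{\{\tau=T\}} \geq 0$ a.s.

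Next, I introduce the Doléans-Dade exponential
\[
\Lambda_t := \mathcal{E}\Bigl( \textstyle\int_0^\cdot b_s \, dW_s^{\Qxx'} + \int_0^\cdot \tfrac{\gamma^I_s}{h_I^{\Qxx'}} \, d\varpi_s^{I,\Qxx'} + \int_0^\cdot \tfrac{\gamma^C_s}{h_C^{\Qxx'}} \, d\varpi_s^{C,\Qxx'}\Bigr)_t,
\]
which is a strictly positive square-integrable martingale: the bound $\gamma^i_t \geq -h_i^{\Qxx'}$ makes the jumps of $\Lambda$ exceed $-1$ (hence $\Lambda$ nonnegative), and uniform boundedness of all coefficients gives the required integrability. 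Applying Itô to $\Lambda_t e^{-\int_0^t a_s\, ds}\delta V_t$, integrating from $t\wedge\tau$ to $\tau$, and taking $\mathcal{G}'_{t\wedge\tau}$-conditional expectations under $\Qxx'$ leads to the representation
\[
\delta V_{t\wedge\tau} = \Exx^{\Qxx'}\Bigl[ e^{\int_{t\wedge\tau}^\tau a_s \, ds}\tfrac{\Lambda_\tau}{\Lambda_{t\wedge\tau}}\delta V_\tau + \int_{t\wedge\tau}^\tau e^{\int_{t\wedge\tau}^s a_u\, du}\tfrac{\Lambda_s}{\Lambda_{t\wedge\tau}}\Delta_s\, ds \,\big|\, \mathcal{G}'_{t\wedge\tau} \Bigr] \geq 0,
\]
since both the terminal contribution and the source integrand are nonnegative. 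Via the identification $\mathcal{G}'_{\tau'}\cong \mathcal{G}_\tau$ from Theorem \ref{thm:main}, this gives $V^1_{t\wedge\tau}\geq V^2_{t\wedge\tau}$ a.s.

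Strict comparison falls out of the same representation: if $V^1_{t_0\wedge\tau} = V^2_{t_0\wedge\tau}$ a.s., the nonnegative random variable $(\Lambda_\tau/\Lambda_{t_0\wedge\tau}) e^{\int_{t_0\wedge\tau}^\tau a_s\, ds}\delta V_\tau$ and the nonnegative integrand must vanish a.s.; strict positivity of $\Lambda$ and the exponential factor then force $\delta V_\tau = 0$ and $\Delta_s = 0$ on $[t_0\wedge\tau, \tau]$, and rerunning the representation from any $t\in[t_0,T]$ yields $\delta V_{t\wedge\tau}=0$. The main technical obstacle sits squarely at the jump step: (A4) is not a free consequence of (A2), and without it $\gamma^i_t$ could dip below $-h_i^{\Qxx'}$, making $\Lambda$ fail to be a nonnegative martingale and breaking the linear-BSDE representation. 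Everything else is standard linear-BSDE bookkeeping.
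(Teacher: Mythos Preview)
Your proposal is correct and follows the same route as the paper: reduce to the L\'evy-driven BSDEs of Theorem~\ref{thm:main}, apply a comparison principle there, then restrict back to $\{t\leq\tau\}$. The paper simply invokes \cite[Theorem~3.2.2]{Delong} as a black box after the reduction; you unpack that citation by carrying out the Royer-type linearization and Dol\'eans--Dade representation explicitly, which is exactly what underlies the cited result. So the two arguments are mathematically the same, yours being self-contained rather than delegated.

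One small caveat on the strict-comparison step: assumption~(A4) only yields the non-strict bound $\gamma^i_t \geq -h_i^{\Qxx'}$, so the jump multiplier $1+\gamma^i_t/h_i^{\Qxx'}$ of $\Lambda$ is guaranteed to be $\geq 0$ but not $>0$. Your claim that $\Lambda$ is \emph{strictly} positive therefore needs either a strict version of (A4) or an auxiliary argument; otherwise the implication ``$\Lambda_\tau\,\delta V_\tau=0 \Rightarrow \delta V_\tau=0$'' is not immediate. This is a standard subtlety in Royer-type comparison and does not affect the first (non-strict) conclusion.
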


\begin{proof}
The proof works similar to the one given above. We consider first the BSDEs with drivers $ \bar{f}_j(\omega, t, z, z^I, z^C) = f_j(\omega, t, z, z^I, z^C) \ind_{\{ t \leq \tau\}}$, $ j \in \{1,2\}$, and driven by compensated Poisson random measures $\tilde{N}^{\Qxx'}_i$, $i \in \{I,C\}$.
We can thus invoke the comparison theorem \cite[Theorem 3.2.2]{Delong}. Restricting ourselves to  $\{ t \leq \tau' \}$ (where the Poisson random measures agree with the martingales $\varpi^{i,\Qxx}$), we get the result.
\end{proof}

We next make the following observation.

\begin{remark}\label{remark:PDE1}
Similar to Proposition 4.1.1 in \cite{Delong} we know that the value process up to default is Markovian. Moreover, on the set $\{t<\tau\}$ we have that  $\varpi^{i,\Qxx}_t = -h_i^\Qxx t$ for $i \in \{I,C\}$. Thus we can reduce it to the state variables $t$ and $S_t$. Therefore we can find a measurable function $\vv \, : \, [0,T] \times \mathbb{R}_{>0} \to \R$ such that $\vv(t, S_t) =V_t \ind_{\{\tau>t\}}$.
\end{remark}

The following theorem shows how the function $\vv$ from Remark \ref{remark:PDE1} can be used to find the hedging strategies. Once noting that the law of $S_t$ is absolutely continuous, the proof of the theorem becomes analogous to the proof of Theorem 4.1.4 in \cite{Delong} and will be omitted.
\begin{theorem}\label{thm:strat}
Let $(f,\theta_\tau(\hat V(\tau, S_\tau)),\hat V(T, S_T))$ be data of BSDEs satisfying the assumptions (A1) -- (A4) of \ref{BSDE_ass}. Additionally, let the function $\vv(t, S_t) =V_t \ind_{\{\tau>t\}}$ be defined as in
Remark \ref{remark:PDE1}. Then, on the set $\{t<\tau\}$ we have that
  \begin{align*}
    Z_t &= \sigma S_t \vv_S(t,S_t),\\
     Z_t^i &= \theta_i(\hat V(t, S_t)) - {\vv(t,S_{t})}, \qquad \qquad i \in \{I,C\}.
  \end{align*}
\end{theorem}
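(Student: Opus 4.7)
The plan is to combine the Markovian representation of Remark~\ref{remark:PDE1} with It\^o's formula on the continuous Brownian part and a jump-by-jump identification of the martingale integrands in the BSDE~\eqref{eq:genBSDE}. Because the paper explicitly tells us to follow Delong's Theorem 4.1.4, the first (and main) technical ingredient will be to upgrade the Markov representation $V_t \ind_{\{\tau>t\}} = \vv(t,S_t)$ to enough regularity in the spatial variable. The nonlinear Feynman--Kac link for the BSDE~\eqref{eq:genBSDE}, together with Assumptions (A1)--(A4) and the non-degeneracy of the stock diffusion, yields that $\vv$ is at least once continuously differentiable in $S$ in a distributional/Sobolev sense that makes It\^o's formula applicable; the absolute continuity of the law of $S_t$ that the paper highlights in its introductory remark is what lets one identify a predictable process from an $\text{Leb}\otimes \Qxx$-almost-everywhere relation on $[0,T]\times \mathbb R_{>0}$.

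Granted this regularity, on the set $\{t<\tau\}$ neither $H^I$ nor $H^C$ has jumped, $S$ is continuous under $\Qxx$ with $dS_t = r_D S_t\,dt + \sigma S_t\, dW_t^{\Qxx}$, and It\^o's formula applied to $\vv(t,S_t)$ produces a semimartingale whose Brownian coefficient equals $\sigma S_t \vv_S(t,S_t)$. Comparing this with the Brownian coefficient of $V_t$ in \eqref{eq:genBSDE} and invoking uniqueness of the semimartingale decomposition gives $Z_t = \sigma S_t \vv_S(t,S_t)$ on $\{t<\tau\}$. The absolute continuity of the law of $S_t$ then upgrades this from an almost-everywhere identity to the statement as written.

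For the default components, I would work at the jump times. On $\{\tau_I<\tau_C\wedge T\}$ the terminal datum of \eqref{eq:genBSDE} forces $V_{\tau_I} = \theta_I(\hat V(\tau_I,S_{\tau_I}))$, while continuity of $S$ and of the compensator together with the Markov representation give $V_{\tau_I-} = \vv(\tau_I,S_{\tau_I})$. Reading the jump part of \eqref{eq:genBSDE} at $\tau_I$ yields $\Delta V_{\tau_I} = Z_{\tau_I}^{I}\,\Delta H^I_{\tau_I} = Z_{\tau_I}^{I}$, so that
\[
Z_{\tau_I}^{I} \;=\; \theta_I\bigl(\hat V(\tau_I,S_{\tau_I})\bigr)-\vv(\tau_I,S_{\tau_I}).
\]
Since $Z^I$ is predictable and the joint law of $(\tau_I,S_{\tau_I})$ charges a full-measure subset of $[0,T]\times \mathbb R_{>0}$ (this is the point where the absolute continuity of $S_t$'s law is invoked, exactly as in Delong), the pointwise equality at $\tau_I$ extends to $Z_t^{I} = \theta_I(\hat V(t,S_t))-\vv(t,S_t)$ on all of $\{t\leq \tau\}$. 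The symmetric argument on $\{\tau_C<\tau_I\wedge T\}$ gives the corresponding formula for $Z_t^C$.

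The hard part, which the paper sidesteps by pointing to Delong's Theorem 4.1.4, is the first step: establishing the smoothness of $\vv$ in a form strong enough to legitimately apply It\^o's formula and read off $\vv_S$ from the Brownian martingale coefficient. With our rate-asymmetric, only-Lipschitz driver $f^{\pm}$, classical $C^{1,2}$ regularity is not available from PDE theory directly, and one must instead appeal to the Sobolev/viscosity form of nonlinear Feynman--Kac used by Delong, combined with the absolute continuity of the marginal law of $S_t$ to convert a.e.\ identities into pathwise ones. The identification of $Z^I$ and $Z^C$ is then essentially cosmetic, because their values are dictated uniquely by the pre-specified terminal jump $\theta_i(\hat V(\tau_i,S_{\tau_i}))$.
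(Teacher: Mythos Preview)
Your proposal is correct and follows essentially the same route as the paper, which in fact omits the proof entirely and simply states that, once the absolute continuity of the law of $S_t$ is noted, the argument is analogous to Theorem~4.1.4 in \cite{Delong}. Your write-up supplies precisely the details that this reference would unpack: the Markovian representation, It\^o's formula for the Brownian coefficient, and the jump identification at $\tau_I$, $\tau_C$ together with the absolute-continuity argument to pass from a.e.\ to pathwise equality.
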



\begin{thebibliography}{}
\bibitem[Adrian et al.(2013)]{Adrian} T.~Adrian, B.~Begalle, A.~Copeland, and A.~Martin. Repo and securities lending. Revised verion (February 2013). \textit{Federal Reserve Bank of New York Staff Reports} \textbf{529}, 2013  Available at \url{http://www.newyorkfed.org/research/staff_reports/sr529.pdf}

\bibitem[Basel III(2010)]{Basel3} Basel III: A global regulatory framework for more resilient banks and banking systems. 2010. Available at \url{www.bis.org}.

\bibitem[Bichuch et al.(2015)]{BicCapSturm} M.~Bichuch, A.~Capponi, and S.~Sturm. Arbitrage-free pricing of X{VA} -- Part II: PDE Representation and Numerical Analysis. 2015. Preprint available at \url{http://papers.ssrn.com/sol3/papers.cfm?abstract_id=2568118}.
	
\bibitem[Bichuch et al.(2016)]{BCM} M.~Bichuch, A.~Capponi, and S.~Sturm.  Arbitrage-free XVA. 2016. To appear in \textit{Math. Finance}. Preprint available at \url{http://ssrn.com/abstract=2820257}.	

\bibitem[Bielecki and Rutkowski(2014)]{br} T.~Bielecki and M.~Rutkowski. Valuation and hedging of contracts with funding costs and collateralization. {\it SIAM J. Finan. Math.} {\bf 6}, 594-655, 2015.

\bibitem[Bielecki and Rutkowski(2001)]{bielecki01} T.~Bielecki and M.~Rutkowski. Credit Risk: Modelling, valuation and hedging. Springer, New York, NY, 2001.
	
\bibitem[{B\'{e}langer et al.(2004)}]{Belanger} A.~B\'{e}langer, S.~Shreve, and D.~Wong: A general framework for pricing credit risk. \textit{Math. Finance} \textbf{14}, 317--350, 2004.

\bibitem[{Blanco et al.(2005)}]{Brennan} R.~Blanco, S.~Brennan, and I.~Marshan. An empirical analysis of the dynamic relation between investment-grade bonds and credit default swaps. \textit{J. Finance} \textbf{60}, 5, 2255--2281, 2005.

\bibitem[{Brigo et al.(2014)}]{BrigoCapPal} D.~Brigo, A.~Capponi, and A.~Pallavicini. Arbitrage-free bilateral counterparty risk valuation under collateralization and application to credit default swaps. \textit{Math. Finance} \textbf{24}, 125--146, 2014.

\bibitem[{Brigo and Pallavicini(2014)}]{BrigoPalCCP} D.~Brigo and A.~Pallavicini. Nonlinear consistent valuation of C{CP} cleared or C{SA} bilateral trades with initial margins under credit, funding and wrong-way risk. \textit{J. Financ. Eng.} \textbf{1}, 2014

\bibitem[{Burgard and Kjaer(2011a)}]{Burgard} C.~Burgard and M.~Kjaer. In the balance. \textit{Risk magazine}, November 2011, 72--75, 2011.

\bibitem[{Burgard and Kjaer (2011b)}]{BurgardCR} C.~Burgard and M.~Kjaer. Partial Differential Equation Representations of Derivatives with Bilateral Counterparty Risk and Funding Costs. \textit{The Journal of Credit Risk.} {\bf 7}, 3, 1--19, 2011.

\bibitem[{Capponi(2013)}]{capmig} A.~Capponi. Pricing and mitigation of counterparty credit exposure. In: \textit{Handbook of Systemic Risk} (J.P. Fouque, J. Langsam, eds.), Cambridge University Press, Cambridge, 2013.

\bibitem[Cr\'{e}pey(2015a)]{Crepeya} S.~Cr\'{e}pey. Bilateral counterparty risk under funding constraints -- Part I: Pricing. \textit{Math. Finance} {\bf 25}. 1--22, 2015.
	
\bibitem[Cr\'{e}pey(2015b)]{Crepeyb} S.~Cr\'{e}pey.  Bilateral  counterparty risk under funding constraints -- Part II: C{VA}. \textit{Math. Finance} {\bf 25}, 23--50, 2015.

\bibitem[{Cr\'{e}pey et al.(2012)}]{CrepeyNum} S.~Cr\'{e}pey, R.~Gerboud, Z.~Grbac and N.~Ngor (2012): Counterparty risk and funding: The four wings of the T{VA}.  \textit{Int. J. Theor. Appl. Finance} \textbf{16}, no. 2, 2013

\bibitem[{Cr\'{e}pey et al.(2014)}]{CrepeyBieleckiBrigo} S.~Cr\'{e}pey, T.~ R.~Bielecki and D.~Brigo: Counterparty risk and funding: A tale of two puzzles. Chapman and Hall/CRC, Boca Raton, FL, 2014

\bibitem[Cvitani\'c and Karatzas(1993)]{Cvi93} J.~Cvitani\'c and I.~Karatzas (1993). Hedging contingent claims with constrained portfolios. \textit{Ann. Appl. Prob.} \textbf{3}, 652-681, 1993.

\bibitem[{Delbaen and Schachermayer(2006)}]{Delbaen} F.~Delbaen and W.~Schachermayer: The mathematics of arbitrage. Springer Finance, Berlin, 2006.

\bibitem[{Delong(2013)}]{Delong} L.~Delong : Backward stochastic differential equations with jumps and their actuarial and financial applications: {BSDE}s with jumps. Springer EAA Series, London, 2013.

\bibitem[El Karoui et al.(1997)]{ElKaroui} N.~El Karoui, S.~Peng and M.-C.~Quenez. Backward stochastic differential equations in finance, \textit{Math. Finance} \textbf{7}, 1--71, 1997.

\bibitem[{ISDA(2010)}]{ISDA2010} ISDA Credit Support Annex (1992); ISDA Guidelines for Collateral Practitioners (1998); ISDA Credit Support Protocol (2001); ISDA Close-Out Amount Protocol (2009);  ISDA Market Review of OTC Derivative Bilateral Collateralization Practices (2010). International Swaps and Derivatives Association. Available at \url{www.isda.com}

\bibitem[ISDA(2011)]{ISDA11} ISDA margin survey 2011. International Swaps and Derivatives Association. Available at \url{www.isda.com}

\bibitem[ISDA(2014)]{ISDA14} ISDA margin survey 2014. International Swaps and Derivatives Association. Available at \url{www.isda.com}

\bibitem[Elliott et al.(2000)]{Elliott} R.~J.~Elliott, M.~Jeanblanc and M.~Yor. On models of default risk. \textit{Math. Finance} \textbf{10}, 2, 179--195, 2000.

\bibitem[Korn(1995)]{Korn95} R.~Korn. Contingent claim valuation in a market with different interest rates. \textit{Math. Meth. Oper. Res.} \textbf{42}, 255--274, 1995.

\bibitem[Mercurio(2015)]{Mercurio} F.~Mercurio. Bergman, Piterbarg and Beyond: Pricing derivatives under collateralization and differential rates. \textit{Actuarial Sciences and Quantitative Finance} Springer Proceedings in Mathematics \& Statistics \textbf{135} 65--95 2015.

\bibitem[Nie and Rutkowski(2013)]{NiRut} T.~Nie and M.~Rutkowski. Fair and profitable bilateral prices under funding costs and collateralization. \textit{Working Paper}, 2013. Preprint available at \url{http://arxiv.org/pdf/1410.0448v1.pdf}

\bibitem[Nualart(1995)]{Nualartbook} D.~Nualart. The Malliavin Calculus and related topics. Springer, New York, 1995.

\bibitem[Pardoux(1999)]{Pardoux} E.~Pardoux. B{SDE}s, weak convergence and homogenization of semilinear {PDE}s. In: Nonlinear analysis, differential equations and control. \textit{NATO Science Series C: Mathematical and Physical Sciences} \textbf{528}, 503--549, 1999.

\bibitem[Piterbarg(2010)]{Piterbarg} V.~Piterbarg. Funding beyond discounting: collateral agreements and derivatives pricing. \textit{Risk magazine}, February 2010, 97--102, 2010.

\bibitem[Quenez and Sulem(2013)]{Quenez} M.-C.~Quenez and A.~Sulem. B{SDE}s with jumps, optimization and applications to dynamic risk measures. \textit{Stochastic Process. Appl.} \textbf{123}, 8, 3328--3357, 2013.

\bibitem[Royer(2006)]{Royer} M.~Royer. B{SDE}s with jumps, and related non-linear expectations. \textit{Stochastic Process. Appl.} \textbf{116}, 10, 1358--1376, 2006.

\bibitem[Whittall(2010)]{Whittall} C.~Whittall. {LCH.C}learnet re-values \$ 218 trillion swap portfolio using {OIS}. \textit{Risk magazine}, June 2010, 2010.

\end{thebibliography}
\end{document}